\newtheorem{theorem}{Theorem}
\newtheorem{example}{Example}
\newtheorem{remark}{Remark}
\newtheorem{corollary}{Corollary}
\newtheorem{lemma}{Lemma}
\journal{Finite Fields and Their Applications }
\begin{document}
\begin{frontmatter}
\title{Construction of MDS Euclidean Self-Dual Codes via Multiple Subsets}
\author[1]{Weirong Meng}
\ead{mwr@mail.nankai.edu.cn}
\author[2,3,4]{Weijun Fang}
\ead{fwj@sdu.edu.cn}
\author[1]{Fang-Wei Fu}
\ead{fwfu@nankai.edu.cn}
\author[5]{Haiyan Zhou\corref{cor}}
\ead{zhouhy@njnu.edu.cn}
\author[5]{Ziyi Gu}
\ead{3310284240@qq.com}
{\cortext[cor]{Corresponding author.}}
\tnotetext[label1]
{Submitted for possible publication.}
\address[1]{Chern Institute of Mathematics and LPMC, Nankai University, Tianjin, 300071, China}
\address[2]{State Key Laboratory of Cryptography and Digital Economy Security, Shandong University, Qingdao, 266237, China\\}
\address[3]{Key Laboratory of Cryptologic Technology and Information Security, Ministry of Education, Shandong University, Qingdao, 266237, China}
\address[4]{School of Cyber Science and Technology, Shandong University, Qingdao, 266237, China}
\address[5]{School of Mathematical Sciences and Institute of Mathematics, Nanjing Normal University, Nanjing, 210023, China}

\begin{abstract}
MDS self-dual codes have good algebraic structure, and their parameters are completely determined by the code length. In recent years, the construction of  MDS Euclidean self-dual codes with new lengths has become an important issue in coding theory. In this paper, we are committed to constructing new MDS Euclidean self-dual codes via generalized Reed-Solomon (GRS) codes and their extended (EGRS) codes. The main effort of our constructions is to find suitable subsets of finite fields as the evaluation sets, ensuring that the corresponding (extended) GRS codes are Euclidean self-dual. {Firstly}, we present a method for selecting evaluation sets from multiple intersecting subsets and provide a theorem to guarantee that the chosen evaluation sets meet the desired criteria. Secondly, based on this theorem, we construct six new classes of MDS Euclidean self-dual codes using the norm function, as well as the union of three multiplicity subgroups and their cosets respectively. Finally, in our constructions, the proportion of possible MDS Euclidean self-dual codes exceeds $85\%$, which is much higher than previously reported results.
\end{abstract}
\begin{keyword}
MDS codes, self-dual codes, generalized Reed-Solomon codes, extended generalized Reed-Solomon codes, norm functions
\end{keyword}
\end{frontmatter}
\section{Introduction}
Both MDS codes and Euclidean self-dual codes play a significant role in coding theory due to their nice properties and wide applications.
Let $p$ be a prime and $q=p^m$ be a prime power. Let $\mathbb{F}_q$ be the finite field with $q$ elements. A $q$-ary $[n,k,d]$-linear code $C$ is defined as a linear $\mathbb{F}_q$-subspace of $\mathbb{F}_q^n$ with dimension $k$ and minimum Hamming distance $d$. The parameters of $C$ satisfy the following inequality,
$$d\leq n-k+1,$$
which is the well-known Singleton bound.
If $d=n-k+1$, then $C$ is called the \textit{maximum distance separable} (MDS) code. MDS codes are closely related to orthogonal arrays in combinatorial design and n-arcs in finite geometry \cite{MS77}. Furthermore, MDS codes also have widespread applications in distributed storage systems. 

For any two vectors $\textbf{x}=(x_1,x_2,\ldots, x_n)\in\mathbb{F}_q^n$ and  $\textbf{y}=(y_1,y_2,\ldots, y_n)\in\mathbb{F}_q^n$, define the Euclidean inner product 
$\langle\textbf{x},\textbf{y}\rangle=\sum\limits_{i=1}^{n}x_iy_i$. The Euclidean dual code $C^{\perp}$ of $C$ is defined by
$$C^{\perp}\triangleq\{\textbf{x}\in \mathbb{F}_q^n: \langle\textbf{x},\textbf{y}\rangle=0, \textnormal{for any}\  \textbf{y}\in C\}.$$
If $C\subseteq C^{\perp}$, then $C$ is called a Euclidean \textit{self-orthogonal} code. If $C=C^{\perp}$, then we call $C$ a Euclidean \textit{self-dual} code. Clearly, the length of a Euclidean self-dual code is even. 
Self-dual codes find applications in various fields, including coding theory, cryptography, and beyond. 
In \cite{C08} and \cite{DMS08}, the authors established a connection between self-dual codes and linear secret sharing schemes. Furthermore, Euclidean self-dual codes are closely linked to combinatorics \cite{HP03} and unimodular integer lattices \cite{CS99}. Given these connections, we focus on the intersection of these two classes of codes, namely, MDS Euclidean self-dual codes, a topic that has garnered significant attention over the past decade (see \cite{GKL08},\cite{BGGHK03},\cite{GK02},\cite{KL04},\cite{KL04-1},\cite{HK06},\cite{YC15}).

It is clear that the parameters of an MDS Euclidean self-dual code are completely determined by its length. Thus, we only need to consider the problem for which lengths an MDS Euclidean self-dual code exists. When $q$ is even, Grassl and Gulliver \cite{GG08} proved that there is a $q$-ary MDS Euclidean self-dual code of even length $n$ for all $n\leq q$. In \cite{G12} and \cite{TW17} , the authors constructed some new MDS Euclidean self-dual codes using cyclic codes and constacyclic codes. A criterion for generalized Reed-Solomon (GRS) codes to be MDS self-dual codes was given by Jin and Xing in \cite{JX17}. Yan \cite{Y18}, Fang and Fu \cite{FF19} developed this method to extended GRS codes. From then on, Zhang and Feng \cite{ZF20} introduced a unified approach to construct MDS Euclidean self-dual codes via GRS codes. By selecting suitable evaluation sets for (extended) GRS codes, MDS Euclidean self-dual codes are obtained in some ways. Specifically, in \cite{FLLL20} and \cite{LLL19}, the authors obtained some new MDS Euclidean self-dual codes via multiplicative subgroups. Fang \emph{et al.} \cite{FZXF22} took a subgroup of a finite field as the evaluation set, where its cosets are in a biggger subgroup. In \cite{FLL21}, the authors considered the evaluation sets as the union of two disjoint multiplicative subgroups and their cosets. Inspired by \cite{FLL21}, Huang \emph{et al.} further constructed some new families of MDS Euclidean self-dual codes in \cite{HFF21}. After this, Wan \emph{et al.} \cite{WLZ23} further generalized the above results and constructed six new classes of MDS Euclidean self-dual codes. And in \cite{FXF21}, Fang \emph{et al.} considered the union of two multiplicative subgroups with nonempty intersections and took their cosets as the evaluation sets. In \cite{XFL21}, Xie \emph{et al.} constructed some long MDS Euclidean self-dual codes by making use of short codes. And in \cite{ZF22}, Zhang and Feng presented some new constructions of MDS Euclidean self-dual codes via cyclotomy. Niu \emph{et al.} constructed two classes of MDS Hermitian self-dual codes in \cite{NYWH19}. Due to the good properties of the twisted GRS codes, the authors constructed some MDS, NMDS or 2-MDS Euclidean self-dual codes using twisted GRS codes in \cite{HYNL21},\cite{SYLH22} and \cite{SYS23}. In \cite{GLLS23}, Guo \emph{et al.} obtained several classes of Hermitian self-dual MDS and NMDS codes via twisted GRS codes. And in \cite{ZZT22,ZL24,ZL24-2,ZL24-3}, the authors presented some MDS or NMDS Euclidean self-dual codes derived from special twisted GRS codes. Additionally, Jin and Kan constructed NMDS Euclidean self-dual codes by making use of the properties of elliptic curves in \cite{JK19}. In \cite{HZ24}, Han and Zhang constructed some NMDS Euclidean self-dual codes via linear codes. In Table I of Section VI, we list some known results about the construction of MDS Euclidean self-dual codes. Despite the large amount of work has been done, constructing MDS Euclidean self-dual codes for all possible code lengths is still very challenging.

In this paper, inspired by the approach in \cite{FXF21}, we further explore the construction of MDS Euclidean self-dual codes via (extended) GRS codes. In \cite{ZF20}, Zhang and Feng gave the sufficient conditions for (extended) GRS codes to be self-dual codes with respect to the evaluation set (see Lemmas 2 and 3). Thus, the main idea of our constructions is to choose suitable evaluation sets such that the conditions in Lemma 2 or Lemma 3 hold.

Firstly, we present a method for selecting an evaluation set from multiple intersecting subsets of finite fields. Additionally, we provide a key result (see Theorem 1) that ensures the chosen evaluation set satisfies the conditions in Lemma 2 or Lemma 3. Secondly, based on Theorem 1, we construct six new classes of MDS Euclidean self-dual codes using three subsets of finite fields (see Theorems 3, 4, 5, 6 and Corollaries 1, 3). All codes derived from our constructions are summarized in Table 2 of Section 6. 
Finally, we make a detailed comparison of our constructions with previous results. Our approach achieves a proportion exceeding $85\%$ of all possible lengths of 
$q$-ary MDS Euclidean self-dual codes (see Table 3), representing the highest known ratio to date.

The organization of this paper is presented as follows. In Section 2, we introduce some basic notations and results about GRS codes and extended GRS codes. In Section 3, we present a method for choosing evaluation sets from multiple intersecting subsets of finite fields and propose a theorem to guarantee that the above evaluation sets can be taken as the desired evaluation sets. In Section 4, we provide some constructions of MDS Euclidean self-dual codes via the norm function. In Section 5, based on the union of three multiplicity subgroups and their cosets, several new classes of MDS Euclidean self-dual codes are also obtained. In Section 6, we make a comparison of our results with the known results. Finally, Section 7 concludes this paper.

\section{Preliminaries}
In this section, we state some notations and basic results on MDS Euclidean self-dual codes. 
 
Let $q$ be a prime power and $\mathbb{F}_q$ be the finite field with $q$ elements. Choose $n$ distinct elements $a_1, a_2, \ldots , a_n$ and $n$ nonzero elements $v_1, v_2, \ldots, v_n$ of  $\mathbb{F}_q$. Let $\mathbf{a}=(a_1,a_2,\ldots, a_n)$ and $\mathbf{v}=(v_1,v_2,\ldots, v_n)$. The $k$-dimensional generalized Reed-Solomon (GRS) code associated to $\mathbf{a}$ and $\mathbf{v}$ is defined as follows
 $$GRS_k(\mathbf{a},\mathbf{v})\triangleq\{(v_1f(a_1),\ldots, v_nf(a_n)): f(x)\in\mathbb{F}_q[x], \ \deg(f(x))\leq k-1\}.$$
 Furthermore, the $k$-dimensional extended GRS code associated to $\mathbf{a}$ and $\mathbf{v}$ is given as 
$$GRS_k(\mathbf{a},\mathbf{v},\infty)\triangleq\{(v_1f(a_1),\ldots, v_nf(a_n),f_{k-1}):f(x)\in\mathbb{F}_q[x],\ \deg(f(x))\leq k-1\},$$
where $f_{k-1}$ is the coefficient of $x^{k-1}$ in $f(x)$. The set $\{a_1,a_2,\dots,a_n\}$ is called the evaluation set of (extended) GRS code. And it is well-known that (extended) GRS codes and their dual codes are MDS codes.

For any subset $E\subseteq \mathbb{F}_q$, we define the polynomial $\pi_{E}(x)\in \mathbb{F}_q[x]$ as follows
\begin{equation*}
    \pi_{E}(x)\triangleq \prod_{e\in E}(x-e).
\end{equation*}
For any element $e\in E$, define 
\begin{equation*}
    \delta_{E}(e)\ \triangleq \prod_{e^{\prime} \in E, \ e^{\prime}\neq e}(e-e^{\prime}),
\end{equation*}
then we can easily verify that $\delta_{E}(e)=\pi^{\prime}_{E}(e)$,
where $\pi^{\prime}_{E}(x)$ is the formal derivative of $\pi_{E}(x)$.
When the set $E$ is represented as a disjoint union of $n$ sets, the following lemma gives the formula for calculating $\delta_{E}(e)$.
\begin{lemma}[\cite{FXF21}]\label{lem1}   
Let $E=\bigcup\limits_{i=1}^{n}E_{i}$, where $E_{1},E_{2},\dots, E_{n}$ are $n$ pairwise disjoint subsets of $E$, then for any $e\in E_i$, we have
 $$\delta_{E}(e)=\delta_{E_{i}}(e)\prod_{1 \leq j \leq n, j \neq i}\pi_{E_{j}}(e).$$
\end{lemma}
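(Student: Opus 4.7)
The plan is to proceed directly from the definition of $\delta_E(e)$ by partitioning the index set of the defining product according to the given disjoint decomposition of $E$.

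First I would write out $\delta_E(e)=\prod_{e'\in E,\,e'\neq e}(e-e')$, and observe that since $E=\bigsqcup_{j=1}^n E_j$ is a disjoint union and $e\in E_i$, the set $E\setminus\{e\}$ decomposes as
\[
E\setminus\{e\} \;=\; (E_i\setminus\{e\})\ \sqcup\ \bigsqcup_{j\neq i} E_j.
\]
Here the disjointness of the $E_j$'s is used to guarantee that $e\notin E_j$ for $j\neq i$, so that removing $e$ only affects the $i$-th piece.

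Next I would split the product according to this partition, obtaining
\[
\delta_E(e) \;=\; \prod_{e'\in E_i,\,e'\neq e}(e-e')\ \cdot\ \prod_{j\neq i}\,\prod_{e'\in E_j}(e-e').
\]
Finally I would recognize the first factor as $\delta_{E_i}(e)$ by definition, and the inner product in the second factor as $\pi_{E_j}(e)=\prod_{e'\in E_j}(e-e')$, yielding the claimed formula.

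There is essentially no obstacle here; the statement is a bookkeeping identity and the only point that requires care is explicitly invoking the pairwise disjointness of the $E_j$'s to justify that $e\in E_i$ implies $e\notin E_j$ for all $j\neq i$, which in turn legitimizes both the partition of $E\setminus\{e\}$ and the identification of the second factor with $\prod_{j\neq i}\pi_{E_j}(e)$ rather than a product over $E_j\setminus\{e\}$.
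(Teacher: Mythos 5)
Your proof is correct. The paper states this lemma without proof, citing \cite{FXF21}; your direct factorization of the defining product $\delta_E(e)=\prod_{e'\in E,\,e'\neq e}(e-e')$ over the disjoint decomposition $E\setminus\{e\}=(E_i\setminus\{e\})\cup\bigcup_{j\neq i}E_j$ is exactly the standard argument, and your explicit use of pairwise disjointness to ensure $e\notin E_j$ for $j\neq i$ covers the only point needing care.
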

Let $\eta(x)$ be the quadratic character of $\mathbb{F}_q^{*}$, that is $\eta(x)=1$ if $x$ is a square in $\mathbb{F}_q^{*}$ and $\eta(x)=-1$ if $x$ is a non-square in $\mathbb{F}_q^{*}$. The following two lemmas provide sufficient conditions for (extended) GRS codes to be Euclidean self-dual codes, which ensure the existence of MDS Euclidean self-dual codes.
\begin{lemma}[\cite{ZF20}]\label{lem2}
Let $n$ be even. If there exits a subset $A \subseteq \mathbb{F}_{q}$ of size $n$, such that $\eta\big(\delta_{A}(a)\big)$ are the same for all $a \in A$, then there exists a $q$-ary MDS Euclidean self-dual code of length $n$.
\end{lemma}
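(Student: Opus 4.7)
The plan is to realize the desired code as a self-dual generalized Reed-Solomon code with evaluation set $A$. Enumerate $A=\{a_1,\dots,a_n\}$, set $\mathbf{a}=(a_1,\dots,a_n)$, and take dimension $k=n/2$. For a to-be-determined scaling vector $\mathbf{v}=(v_1,\dots,v_n)\in(\mathbb{F}_q^{*})^n$, I would consider the code $C=GRS_k(\mathbf{a},\mathbf{v})$. Because $C$ is automatically an MDS $[n,k,n-k+1]$ code, the entire task reduces to arranging $\mathbf{v}$ so that $C=C^{\perp}$.

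Next I would invoke the standard duality theorem for GRS codes: the Euclidean dual of $GRS_k(\mathbf{a},\mathbf{v})$ is another GRS code $GRS_{n-k}(\mathbf{a},\mathbf{u})$ on the same evaluation sequence, whose scaling vector satisfies $u_i=\bigl(v_i\,\delta_A(a_i)\bigr)^{-1}$ up to a global nonzero constant. Since $n=2k$, both codes already have dimension $k$. A short Vandermonde-type argument then shows that two $k$-dimensional GRS codes with identical evaluation sequences coincide if and only if their scaling vectors are proportional: the codeword produced by $f\equiv 1$ forces $\mathbf{v}=c\,\mathbf{u}$ for some $c\in\mathbb{F}_q^{*}$, and iterating with $f(x)=x^j$ for $j\le k-1$ rules out any non-constant proportionality factor by degree considerations. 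Consequently, self-duality of $C$ is equivalent to the solvability of
\[
v_i^{2}=\frac{1}{c\,\delta_A(a_i)}\qquad(1\le i\le n)
\]
in $\mathbb{F}_q^{*}$ for some constant $c\in\mathbb{F}_q^{*}$.

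Finally I would observe that such $v_i$ exist precisely when $c\,\delta_A(a_i)$ is a nonzero square for every $i$, i.e.\ when $\eta(c)=\eta(\delta_A(a_i))$ for every $i$. The hypothesis of the lemma is exactly that $\eta(\delta_A(a_i))$ takes a common value $\varepsilon\in\{\pm 1\}$ as $a_i$ ranges over $A$, so I would pick any $c\in\mathbb{F}_q^{*}$ with $\eta(c)=\varepsilon$ and set $v_i$ to be an arbitrary square root of $1/(c\,\delta_A(a_i))$. The resulting $GRS_k(\mathbf{a},\mathbf{v})$ is then a $q$-ary MDS Euclidean self-dual code of length $n$.

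The main obstacle I expect is the structural step: correctly identifying $GRS_k(\mathbf{a},\mathbf{v})^{\perp}$ as a GRS code with an explicit scaling vector in terms of $\delta_A$, and then cleanly proving the proportionality lemma for equal-dimension GRS codes on a common evaluation set. Once those structural facts are in place, the remainder is a one-line quadratic-residue argument driven directly by the hypothesis on $\eta(\delta_A(\cdot))$.
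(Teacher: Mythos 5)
Your proposal is correct and follows the standard route: dualize $GRS_k(\mathbf{a},\mathbf{v})$ as $GRS_{n-k}(\mathbf{a},\mathbf{u})$ with $u_i=\bigl(v_i\,\delta_A(a_i)\bigr)^{-1}$, reduce self-duality (with $k=n/2$) to proportionality of scaling vectors, and solve $v_i^2=\bigl(c\,\delta_A(a_i)\bigr)^{-1}$ using a constant $c$ whose quadratic character matches the common value of $\eta(\delta_A(a_i))$. The paper itself does not prove this lemma but quotes it from \cite{ZF20}, and your argument is essentially the proof given there (note that for sufficiency only the easy direction of the proportionality step is actually needed).
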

\begin{lemma}[\cite{ZF20}]\label{lem3}
Let $n$ be odd. If there exits a subset $A \subseteq \mathbb{F}_{q}$ of size $n$, such that $\eta\big(-\delta_{A}(a)\big)=1$ for all $a \in A$, then there exists a $q$-ary MDS Euclidean self-dual code of length $n+1$.
\end{lemma}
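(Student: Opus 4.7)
The plan is to construct the desired code explicitly as an extended GRS code $GRS_k(\mathbf{a},\mathbf{v},\infty)$ with evaluation set $A$ and carefully chosen weights $\mathbf{v}$. Since $n$ is odd, $k := (n+1)/2$ is a positive integer, and any $q$-ary MDS code of length $n+1$ and dimension $k$ is self-dual precisely when it is self-orthogonal.

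First I would use the hypothesis to pin down $\mathbf{v}$. Enumerate $A = \{a_1,\ldots,a_n\}$. For each $i$, the assumption $\eta(-\delta_A(a_i)) = 1$ says that $-1/\delta_A(a_i)$ is a nonzero square in $\mathbb{F}_q$, so we may choose $v_i \in \mathbb{F}_q^{*}$ with $v_i^2 = -1/\delta_A(a_i)$. Set $\mathbf{a} = (a_1,\ldots,a_n)$, $\mathbf{v} = (v_1,\ldots,v_n)$, and let $C = GRS_k(\mathbf{a},\mathbf{v},\infty)$. By standard facts about extended GRS codes, $C$ is a $q$-ary $[n+1,k]$ MDS code, so it suffices to prove $C \subseteq C^{\perp}$; the dimension count $\dim C = k = (n+1)-k = \dim C^{\perp}$ will then promote the inclusion to equality.

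To check self-orthogonality I would work on the natural basis given by the monomials $x^j$, i.e. the codewords $\mathbf{c}_j = (v_1 a_1^j,\ldots,v_n a_n^j, \epsilon_j)$ for $j = 0,1,\ldots,k-1$, where $\epsilon_j = 1$ if $j = k-1$ and $\epsilon_j = 0$ otherwise. The pairwise inner products take the form
\[
\langle \mathbf{c}_{j_1},\mathbf{c}_{j_2}\rangle \;=\; \sum_{i=1}^n v_i^2\, a_i^{j_1+j_2} + \epsilon_{j_1}\epsilon_{j_2} \;=\; -\sum_{i=1}^n \frac{a_i^{j_1+j_2}}{\delta_A(a_i)} + \epsilon_{j_1}\epsilon_{j_2}.
\]
The key ingredient is the classical Lagrange identity $\sum_{a \in A} a^m/\delta_A(a) = 0$ for $0 \le m \le n-2$ and $= 1$ for $m = n-1$ (obtained, e.g., by expanding $x^m/\pi_A(x)$ in partial fractions). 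Since $j_1 + j_2$ ranges over $\{0,1,\ldots,2k-2\} = \{0,1,\ldots,n-1\}$, two cases arise: when $j_1+j_2 \le n-2$, the sum vanishes and $\epsilon_{j_1}\epsilon_{j_2} = 0$ because both $j_i = k-1$ would force $j_1+j_2 = n-1$; when $j_1+j_2 = n-1$, necessarily $j_1 = j_2 = k-1$, and the contributions $-1$ and $+1$ cancel. Thus every pairwise inner product vanishes, giving $C \subseteq C^{\perp}$ and therefore $C = C^{\perp}$.

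The main obstacle is really the precise accounting of the infinity coordinate, and this is where Lemma~\ref{lem3} diverges from Lemma~\ref{lem2}: the minus sign in the hypothesis (i.e. $-\delta_A(a)$ rather than $\delta_A(a)$) is dictated by the need to cancel the $+1$ coming from $\epsilon_{k-1}^2$ against the $-1$ produced by the Lagrange identity at $m = n-1$, and the condition must hold pointwise (not merely with a common sign) precisely because that specific $-1$ has to appear. Once this bookkeeping is set up, the rest is a routine verification.
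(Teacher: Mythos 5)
Your proposal is correct: the choice $v_i^2=-1/\delta_A(a_i)$ (possible since $-\delta_A(a_i)$ is a nonzero square, hence so is its inverse), the dimension count $k=(n+1)/2$, and the Lagrange identity $\sum_{a\in A}a^m/\delta_A(a)=0$ for $m\le n-2$ and $=1$ for $m=n-1$ together give exactly the cancellation at the infinity coordinate that makes $GRS_k(\mathbf{a},\mathbf{v},\infty)$ self-dual. Note that the paper itself states this lemma without proof, citing \cite{ZF20}; your argument is essentially the standard proof given in that reference, so there is nothing to add.
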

\section{Evaluation sets from multiple sets of finite fields}

In this section, we will study the selection of evaluation sets for GRS codes from multiple intersecting subsets and provide a sufficient condition to guarantee that the evaluation sets chosen above satisfy the condition in Lemma \ref{lem2} or Lemma \ref{lem3}. 

Let $A_1,A_2,\dots,A_n$ be $n$ sets and 
$[n]=\{1,2,\dots,n\}$. Assume that $I$ is a subset of $[n]$ and  $\overline{I}=[n] \backslash I$, if $I\neq\emptyset$, then we define 
$$A_I\triangleq\bigcap\limits_{i\in I}A_i, \ \ \widetilde{A}_{{I}}\triangleq\bigcup\limits_{i\in I}A_i.$$ 
If $I=\emptyset$, then we define $A_I=\widetilde{A}_I=\emptyset$.  
Based on the definitions of $A_I$ and $\widetilde{A}_{{I}}$, we can divide the set $\bigcup\limits_{i=1}^n A_i$  into a union of pairwise disjoint sets, i.e., $\bigcup\limits_{i=1}^n A_i=\bigcup\limits_{ i=1}^{n}\left(\bigcup\limits_{I\subseteq [n],|I|=i}(A_I\backslash \widetilde{A}_{\overline{I}})\right)$. 
Choose 
\begin{equation}
S=\bigcup\limits_{k=0}^{\lfloor\frac{n-1}{2}\rfloor}\left(\bigcup\limits_{I\subseteq [n],|I|=2k+1}(A_I\backslash \widetilde{A}_{\overline{I}})\right).  \label{eq1}   
\end{equation}

Clearly, $S$ is a subset of $\bigcup\limits_{i=1}^n A_i$, which will be used to select evaluation sets from multiple intersecting subsets.
\begin{example} 
Let $n=2$, $A_1$ and $A_2$ be two sets. Then from Eq. (\ref{eq1}), we have 
 $$S=(A_1\backslash A_2)\bigcup(A_2\backslash A_1).$$
\end{example}
\begin{example}
Let $n=3$, $A_1$, $A_2$ and $A_3$ be three sets.  Then from the Eq. (\ref{eq1}), we have
$$S=(A_1\backslash(A_2\cup A_3))\bigcup(A_2\backslash(A_1\cup A_3))\bigcup(A_3\backslash(A_1\cup A_2))\bigcup(A_1\cap A_2\cap A_3).$$
In order to represent S more intuitively, the following Venn diagram is presented. When $n=2$ and $n=3$, the evaluation set S takes the shaded part of Figure 1 and Figure 2, respectively.

\begin{figure}[htbp]
\centering
\begin{minipage}[b]{0.3\textwidth}
    \centering
    \begin{tikzpicture}
    \draw[fill=none, draw=black] (0,0) circle(0.6);
    \draw[fill=none, draw=black] (0.6,0) circle(0.6);
    \begin{scope}
    \fill[gray!30] (0,0) circle(0.6); 
    \fill[gray!30] (0.6,0) circle(0.6);
    \end{scope}
    \begin{scope}
        \clip (0,0) circle(0.6);  
        \fill[white!30] (0.6,0) circle(0.6);  
    \end{scope}
    \draw[draw=black] (0,0) circle(0.6) node [left] {\scriptsize$A_1$}; 
    \draw[draw=black] (0.6,0) circle(0.6) node [right] {\scriptsize$A_2$};
    \end{tikzpicture}
    \caption{Two sets.} 
\end{minipage}%
\hspace{0.02\textwidth} 
\begin{minipage}[b]{0.5\textwidth}
    \centering
    \begin{tikzpicture}
    \draw[fill=none, draw=black] (0,0) circle(0.58) node {A\textsubscript{1}};
    \draw[fill=none, draw=black] (55:0.725) circle(0.58) node {A\textsubscript{2}};
    \draw[fill=none, draw=black] (0:0.75) circle(0.58) node {A\textsubscript{3}};
    \begin{scope}
    \fill[gray!30] (0,0) circle(0.58); 
    \fill[gray!30] (55:0.725) circle(0.58);
    \fill[gray!30] (0:0.75) circle(0.58);
    \end{scope}
    \begin{scope}
        \clip (0,0) circle(0.58);  
        \fill[white] (0:0.75) circle(0.58);  
    \end{scope}
    \begin{scope}
        \clip (55:0.725) circle(0.58);  
        \fill[white] (0:0.75) circle(0.58);  
    \end{scope}
    \begin{scope}
        \clip (55:0.725) circle(0.58);  
        \fill[white] (0,0) circle(0.58); 
    \end{scope}
    \begin{scope}
        \clip (55:0.725) circle(0.58);  
        \clip (0,0) circle(0.55); 
        \fill[gray!30] (0:0.75) circle(0.58); 
    \end{scope}
    \draw[draw=black] (0,0) circle(0.58) node [left] {\scriptsize$A_2$}; 
   \draw[draw=black] (55:0.725) circle(0.58) node [above] {\scriptsize$A_1$};
    \draw[draw=black] (0:0.75) circle(0.58) node [right] {\scriptsize$A_3$};
    \end{tikzpicture}
    \caption{Three sets.} 
\end{minipage}
\end{figure} 
\end{example}
If the set $S$ defined by Eq. (\ref{eq1}) satisfies the conditions in Lemma \ref{lem2} or Lemma \ref{lem3}, then the (extended) GRS code with $S$ as the evaluation set is a Euclidean self-dual code. Thus, our primary goal is to derive a sufficient condition to ensure $S$ satisfying 
Lemma \ref{lem2} or Lemma \ref{lem3}. We first give the following lemma as the preparation of the sufficient condition.

Let $A=\{a_1,a_2,\dots,a_s\}$ and $B=\{b_1,b_2,\dots,b_{s^{\prime}}\}$ be two sets, where $s, s^{\prime}\geq 1$. We define $A\uplus B=\{a_1,a_2,\dots,a_s,b_1,b_2,\dots,b_{s^{\prime}}\}$, allowing repeated elements to be included. It is clear that $A\uplus B$ may be a multiset in some cases. Therefore, we use $\widetilde{=}$ to represent the equality of multisets so as to distinguish it from the equality of ordinary sets. Let $B_l=\bigcup\limits_{j=1, \ j\neq l}^n\left(A_l\bigcap A_j\right)$, where $1\leq l\leq n$. 
\begin{lemma}\label{lem4} Based the above notations, the following results hold.

(1) For any $I\subseteq [n]$, if $|I|=1$, then 
\begin{equation}
(A_I\backslash \widetilde{A}_{\overline{I}})\cap B_l=\emptyset, \ \text{for~all}~ 1\leq l\leq n. \label{eq2}   
\end{equation}
If $2\leq |I|\leq n$, then
\begin{equation}
\#\{l \mid A_I\backslash \widetilde{A}_{\overline{I}}\subseteq B_l, 1\leq l\leq n\}=|I|.  \label{eq3}  
\end{equation} 

(2) Let $\mathcal{A}\widetilde{=}\biguplus\limits_{i=2}^{n}\biguplus\limits_{|I|=i}(\underbrace{A_I\backslash \widetilde{A}_{\overline{I}} \uplus \cdots\uplus A_I\backslash \widetilde{A}_{\overline{I}}}_{|I|})$ and $\mathcal{B}\widetilde{=}\biguplus\limits_{l=1}^n B_l$, 
then $\mathcal{A}\widetilde{=}\mathcal{B}$.
\end{lemma}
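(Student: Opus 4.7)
The plan is to reduce both parts of the lemma to the membership pattern of individual elements. For any $a \in \bigcup_{i=1}^n A_i$, introduce the ``type'' $I(a) = \{i \in [n] : a \in A_i\}$. By the disjointness of the decomposition $\bigcup_{i=1}^n A_i = \bigcup_{i=1}^n \bigcup_{|I|=i}(A_I \setminus \widetilde{A}_{\overline{I}})$ recalled just before the lemma, the element $a$ lies in exactly one piece, namely the one indexed by $I(a)$. This single observation drives the entire proof.

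For part (1), I would unfold the definition $B_l = \bigcup_{j \neq l}(A_l \cap A_j)$ and note that $a \in B_l$ is equivalent to $l \in I(a)$ together with $|I(a)| \geq 2$. So when $I = \{i_0\}$ and $a \in A_I \setminus \widetilde{A}_{\overline{I}}$, one has $I(a) = \{i_0\}$, hence $a \notin B_l$ for every $l$, giving Eq.\ (2). When $|I| \geq 2$ and $A_I \setminus \widetilde{A}_{\overline{I}}$ is nonempty, any $a$ in this piece satisfies $I(a) = I$, so $a \in B_l$ precisely when $l \in I$; thus the indices $l$ for which the entire piece sits in $B_l$ are exactly the elements of $I$, yielding $|I|$ such indices and establishing Eq.\ (3).

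For part (2), I would verify the multiset equality by a pointwise multiplicity count. On the left, each $a$ contributes to $\mathcal{A}$ only through the unique piece $A_{I(a)} \setminus \widetilde{A}_{\overline{I(a)}}$, and by the construction of $\mathcal{A}$ this piece is copied $|I(a)|$ times when $|I(a)| \geq 2$ and not at all when $|I(a)| = 1$ or $|I(a)| = 0$. On the right, part (1) says $a \in B_l$ iff $l \in I(a)$ and $|I(a)| \geq 2$, so the multiplicity of $a$ in $\mathcal{B} = \biguplus_{l=1}^n B_l$ equals $|I(a)|$ or $0$ under the same dichotomy. The two multiplicities agree for every $a$, which yields $\mathcal{A} \,\widetilde{=}\, \mathcal{B}$.

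I do not anticipate a real obstacle: once the type function $I(a)$ is introduced, every claim collapses to counting $|I(a)|$. The only mild subtlety worth writing out carefully is that the pieces with $|I| = 1$ contribute nothing on either side, since they are excluded from $\mathcal{A}$ by the index range $i \geq 2$ and excluded from every $B_l$ by Eq.\ (2); this consistency is exactly what makes the two multisets match.
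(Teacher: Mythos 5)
Your proof is correct and follows essentially the same route as the paper: part (1) rests on the observation that membership in $B_l$ means lying in $A_l$ and in at least one other $A_j$, and part (2) is deduced from part (1) by counting how many of the sets $B_1,\dots,B_n$ pick up each piece $A_I\backslash \widetilde{A}_{\overline{I}}$. The only difference is cosmetic: you phrase the count pointwise via the type $I(a)$ and element multiplicities, whereas the paper argues set-wise that each piece with $|I|\geq 2$ lies in exactly $|I|$ of the $B_l$ (and, like you, implicitly needs the piece to be nonempty for Eq.\ (\ref{eq3}), a point you in fact flag explicitly).
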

\begin{proof}
(1) Firstly, we consider the case of $|I|=1$. Without loss of generality, we can assume $I=\{1\}$. For any $ 1\leq l\leq n $, by the operation of sets,
 $$\left(A_I\backslash \widetilde{A}_{\overline{I}}\right)\bigcap B_l=\left(A_{1}\backslash (\bigcup\limits_{2\leq j\leq n}A_j)\right)\bigcap \left(\bigcup\limits_{j=1, \ j\neq l}^n(A_l\bigcap A_j)\right)=\emptyset.$$
Next, we consider the case of $2\leq |I|\leq n$. Without loss of generality, we can assume $I=\{1,2,\ldots,|I|\}$. On the one hand, for any $1\leq l\leq |I| $, it follows from $\bigcap\limits_{1\leq i\leq |I|}A_{i}\subseteq (A_l\cap(\bigcup\limits_{i=1, i\neq l}^n A_i))=B_l$ that
$$A_I\backslash \widetilde{A}_{\overline{I}}=\left((\bigcap\limits_{1\leq i\leq |I|}A_{i})\backslash (\bigcup\limits_{|I|+1\leq j\leq n}A_j)\right) \subseteq \left(\bigcup\limits_{j=1, \ j\neq l}^n(A_l\bigcap A_j)\right)= B_l.$$
On the other hand, for any $|I|+1\leq l\leq n$, we can get
 $\left(A_I\backslash \widetilde{A}_{\overline{I}}\right)\bigcap B_l=\emptyset$. Thus, for any $I\subseteq N$ with $ |I|\geq 2$, we obtain
$$\#\{l\ |A_I\backslash \widetilde{A}_{\overline{I}}\subseteq B_l, \ 1\leq l\leq n\}=|I|.$$

(2) For any $1\leq l\leq n$, by $B_l\subseteq \bigcup\limits_{i=1}^n A_i=\bigcup\limits_{ i=1}^{n}\left(\bigcup\limits_{|I|=i}(A_I\backslash \widetilde{A}_{\overline{I}})\right)$ and Eq. (\ref{eq2}),  we can deduce that $B_l$ is the union of some sets $A_I\backslash \widetilde{A}_{\overline{I}}$, where $|I|\geq 2$. Furthermore, it can be derived from Eq. (\ref{eq3}) that for each $I\subseteq [n]$ with $|I|\geq 2$, the number of sets $B_1, B_2,\cdots, B_n$ that contain set $A_I\backslash {\widetilde{A}_{\overline{I}}}$ is $|I|$. Thus,
$B_1\uplus\cdots\uplus B_n\widetilde{=}\biguplus\limits_{i=2}^n\biguplus\limits_{|I|=i}(\underbrace{A_I\backslash \widetilde{A}_{\overline{I}} \uplus \cdots\uplus A_I\backslash \widetilde{A}_{\overline{I}}}_{|I|})$, implying $\mathcal{A}=\mathcal{B}$.
\end{proof}
Based on Lemma \ref{lem4}, we provide a sufficient condition for $S$ to satisfy the conditions in Lemma \ref{lem2} or Lemma \ref{lem3}, which is the key theorem for the next constructions.
\begin{theorem}\label{th1}
Let $A_1, \ldots, A_n$ be $n$ subsets of $\mathbb{F}_q$.  For the fixed constant $c\in\{1,-1\}\subseteq \mathbb{F}_q$ and any $a\in A_I\backslash \widetilde{A}_{\overline{I}}$, where $I\subseteq [n]$ with $|I|=2k+1$ and $0\leq k\leq \lfloor\frac{n-1}{2}\rfloor$, if 
 $$\eta\left(\prod_{i\in I}\delta_{A_i}(a)\prod_{j\in \overline{I}}\pi_{A_j}(a)\right)=c,$$
 then for all $e\in S$,
 $$\eta(\delta_S(e))=c.$$
\end{theorem}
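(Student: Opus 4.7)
The plan is to reduce both $\delta_S(e)$ and the hypothesis product $\prod_{i\in I}\delta_{A_i}(e)\prod_{j\in\overline{I}}\pi_{A_j}(e)$ to a common factorization over the atomic pieces $A_J\setminus\widetilde{A}_{\overline{J}}$ provided by Lemma~\ref{lem1}, and then compare the two sides term by term via a parity argument. Fix $e\in A_I\setminus\widetilde{A}_{\overline{I}}\subseteq S$ with $|I|=2k+1$ odd; because the decomposition $\bigcup_{i=1}^nA_i=\bigsqcup_{J\subseteq[n],J\neq\emptyset}(A_J\setminus\widetilde{A}_{\overline{J}})$ is disjoint, such an $I$ is unique.

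First I would apply Lemma~\ref{lem1} to each factor $A_i$ (for $i\in I$, so $e\in A_i$) and each $A_j$ (for $j\in\overline{I}$, so $e\notin A_j$), using the disjoint decomposition $A_i=\bigcup_{J\ni i}(A_J\setminus\widetilde{A}_{\overline{J}})$. This produces
$$\delta_{A_i}(e)=\delta_{A_I\setminus\widetilde{A}_{\overline{I}}}(e)\prod_{J\neq I,\,i\in J}\pi_{A_J\setminus\widetilde{A}_{\overline{J}}}(e),\qquad \pi_{A_j}(e)=\prod_{J\ni j}\pi_{A_J\setminus\widetilde{A}_{\overline{J}}}(e).$$
Taking the product over $i\in I$ and $j\in\overline{I}$, the critical bookkeeping step is that for each nonempty $J\neq I$ the total exponent of $\pi_{A_J\setminus\widetilde{A}_{\overline{J}}}(e)$ equals $|I\cap J|+|\overline{I}\cap J|=|J|$, which is exactly the multiplicity identity established in Lemma~\ref{lem4}. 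Consequently,
$$\prod_{i\in I}\delta_{A_i}(e)\prod_{j\in\overline{I}}\pi_{A_j}(e)=\delta_{A_I\setminus\widetilde{A}_{\overline{I}}}(e)^{|I|}\prod_{\emptyset\neq J\neq I}\pi_{A_J\setminus\widetilde{A}_{\overline{J}}}(e)^{|J|}.$$

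Next I would apply $\eta$ to both sides and exploit parity: since $|I|$ is odd, $\eta\bigl(\delta_{A_I\setminus\widetilde{A}_{\overline{I}}}(e)^{|I|}\bigr)=\eta\bigl(\delta_{A_I\setminus\widetilde{A}_{\overline{I}}}(e)\bigr)$; for $J\neq I$ with $|J|$ even the factor $\pi_{A_J\setminus\widetilde{A}_{\overline{J}}}(e)^{|J|}$ is a perfect square, hence contributes $1$; and for $J\neq I$ with $|J|$ odd one has $\eta(\pi^{|J|})=\eta(\pi)$. On the other side, applying Lemma~\ref{lem1} to the disjoint decomposition $S=\bigcup_{|J|\text{ odd}}(A_J\setminus\widetilde{A}_{\overline{J}})$, with $e$ lying in the piece indexed by $I$, yields
$$\delta_S(e)=\delta_{A_I\setminus\widetilde{A}_{\overline{I}}}(e)\prod_{J\neq I,\,|J|\text{ odd}}\pi_{A_J\setminus\widetilde{A}_{\overline{J}}}(e).$$
Therefore $\eta(\delta_S(e))$ matches exactly the surviving $\eta$-factors on the hypothesis side, which by assumption equals $c$, completing the proof. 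The main obstacle will be the exponent count in the middle step: one must verify that the multiplicity of each $\pi_{A_J\setminus\widetilde{A}_{\overline{J}}}(e)$ in the expanded hypothesis product is uniformly $|J|$, so that the pieces with $|J|$ even cancel under $\eta$ while those with $|J|$ odd align precisely with the factors appearing in the Lemma~\ref{lem1} decomposition of $\delta_S(e)$; this identification is exactly what Lemma~\ref{lem4} was arranged to certify.
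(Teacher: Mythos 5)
Your argument is correct, and it reaches the conclusion by a cleaner route than the paper's. You expand the hypothesis product directly over the atomic partition $\bigcup_i A_i=\bigsqcup_{\emptyset\neq J\subseteq[n]}\bigl(A_J\setminus\widetilde{A}_{\overline{J}}\bigr)$: applying Lemma~\ref{lem1} to $A_i=\bigsqcup_{J\ni i}\bigl(A_J\setminus\widetilde{A}_{\overline{J}}\bigr)$ for $i\in I$ and the factorization of $\pi_{A_j}$ for $j\in\overline{I}$, the exponent of each $\pi_{A_J\setminus\widetilde{A}_{\overline{J}}}(e)$ with $J\neq I$ is $|I\cap J|+|\overline{I}\cap J|=|J|$, so even $|J|$ drops out under $\eta$ and the odd-$|J|$ factors reproduce exactly the Lemma~\ref{lem1} expansion of $\delta_S(e)$ over the odd atoms. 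The paper instead works backwards from $\delta_S(e)$, dividing and multiplying by $\prod_l\pi_{B_l}(e)$ (and $\delta_{B_i}(e)$ in the case $|I'|\ge 3$) with $B_l=\bigcup_{j\neq l}(A_l\cap A_j)$, and invokes the multiset identity of Lemma~\ref{lem4} to replace $\biguplus_l B_l$ by each atom repeated $|J|$ times; it also splits into the cases $|I'|=1$ and $|I'|\ge 3$. Your forward expansion makes the multiplicity $|J|$ a triviality of counting (so the $B_l$ machinery of Lemma~\ref{lem4} is not really needed, despite your citing it), treats all odd $|I|$ uniformly, and only uses multiplicativity of $\eta$ on nonzero factors, which is guaranteed since $e$ lies in no atom other than the one indexed by $I$. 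The one imprecision is your remark that the exponent count is ``exactly'' Lemma~\ref{lem4}; Lemma~\ref{lem4} states the count in terms of how many of the sets $B_1,\dots,B_n$ contain a given atom, whereas your count is the direct set-theoretic identity, but the two are equivalent in content and your version is self-contained, so this does not affect correctness.
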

\begin{proof}
Let $e\in A_{I^{\prime}}\backslash \widetilde{A}_{\overline{I^{\prime}}}\subseteq S$, where $|I^{\prime}|=2k^{\prime}+1$ and $0\leq k^{\prime}\leq \lfloor\frac{n-1}{2}\rfloor$. 
When $|I^{\prime}|=1$, without loss of generality, we can assume $I^{\prime}=\{1\}$, then 
$A_{I^{\prime}}\backslash \widetilde{A}_{\overline{I^{\prime}}}=A_1\backslash(A_2\cup\cdots\cup A_n).$ By Lemma \ref{lem1},
 \begin{eqnarray}
     \delta_S(e)&=&\left(\delta_{ A_{I^{\prime}}\backslash \widetilde{A}_{\overline{I^{\prime}}}}(e)\prod_{|I|=1, I\neq I^{\prime}}\pi_{ A_I\backslash {\widetilde{A}_{\overline{I}}}}(e)\right)\prod_{k=1}^{\lfloor\frac{n-1}{2}\rfloor}\prod_{|I|=2k+1}\pi_{ A_{I}\backslash {\widetilde{A}_{\overline{I}}}}(e)\nonumber\\
     &=&\frac{1}{\prod\limits_{l=1}^n\pi_{B_l}(e)}\left(\delta_{A_1}(e)\prod_{j\in \overline{I^{\prime}}}\pi_{A_j}(e)\right)\left(\prod_{k=1}^{\lfloor\frac{n-1}{2}\rfloor}\prod_{|I|=2k+1}\pi_{ A_{I}\backslash {\widetilde{A}_{\overline{I}}}}(e)\right).\label{eq4}
 \end{eqnarray}
According to Lemma \ref{lem4}, 
\begin{equation}
 \eta\left(\prod_{l=1}^n\pi_{B_l}(e)\right)=\eta\left(\prod_{i=2}^n\prod_{|I|=i}(\pi_{A_I\backslash {\widetilde{A}_{\overline{I}}}}(e))^{|I|}\right)
      =\eta\left(\prod_{k=1}^{\lfloor\frac{n-1}{2}\rfloor}\prod_{|I|=2k+1}\pi_{ A_{I}\backslash \widetilde{A}_{\overline{I}}}(e)\right). \label{eq5}  
\end{equation} 
Hence, when $|I^{\prime}|=1$, by {Eqs.} (\ref{eq4}) and (\ref{eq5}), 
$\eta(\delta_S(e))=\eta(\delta_{A_1}(e)\prod_{j\in \overline{I^{\prime}}}\pi_{A_j}(e))$ can be obtained.

For any $1\leq k^{\prime}\leq \lfloor\frac{n-1}{2}\rfloor$ and $|I^{\prime}|=2k^{\prime}+1$, we can assume $I^{\prime}=\{1,2,\dots,2k^{\prime}+1\}$, then $$ A_{I^{\prime}}\backslash \widetilde{A}_{\overline{I^{\prime}}}=(A_1\cap A_2\cap\cdots\cap A_{2k^{\prime}+1})\backslash(A_{2k^{\prime}+2}\cup\cdots\cup A_n),$$
which implies that $e\in A_i$ for any $1\leq i\leq 2k^{\prime}+1$.
Since $|I^{\prime}|\geq 2$, it follows from Lemma \ref{lem4} that $e\in A_{I^{\prime}}\backslash \widetilde{A}_{\overline{I^{\prime}}}\subseteq B_l$ for any $1\leq l\leq 2k^{\prime}+1$.
Thus, by Lemma \ref{lem1},
\begin{eqnarray*}
    \delta_S(e)&=&\delta_{ A_{I^{\prime}}\backslash \widetilde{A}_{\overline{I^{\prime}}}}(e)\left(\prod_{|I|=1}\pi_{ A_I \backslash {\widetilde{A}_{\overline{I}}}}(e)\right)\left(\prod_{k=1}^{\lfloor\frac{n-1}{2}\rfloor}\prod_{|I|=2k+1, I\neq I^{\prime}}\pi_{ A_{I}\backslash {\widetilde{A}_{\overline{I}}}}(e)\right)\\
    &=&\delta_{ A_{I^{\prime}}\backslash \widetilde{A}_{\overline{I^{\prime}}}}(e)\left(\prod_{i=1}^{2k^{\prime}+1}\frac{\delta_{A_i}(e)}{\delta_{B_i}(e)}\prod_{j=2k^{\prime}+2}^n\frac{\pi_{A_j}(e)}{\pi_{B_j}(e)}\right)\left(\prod_{k=1}^{\lfloor\frac{n-1}{2}\rfloor}\prod_{|I|=2k+1, I\neq I^{\prime}}\pi_{ A_{I}\backslash {\widetilde{A}_{\overline{I}}}}(e)\right)\\
    &=&\frac{\delta_{ A_{I^{\prime}}\backslash \widetilde{A}_{\overline{I^{\prime}}}}(e)}{\prod\limits_{i\in I^{\prime}}\delta_{B_i}(e)\prod\limits_{j\in \overline{I^{\prime}}}\pi_{B_j}(e)}\left(\prod_{i\in I^{\prime}}\delta_{A_i}(e)\prod_{j\in \overline{I^{\prime}}}\pi_{A_j}(e)\right)\left(\prod_{k=1}^{\lfloor\frac{n-1}{2}\rfloor}\prod_{|I|=2k+1, I\neq I^{\prime}}\pi_{ A_{I}\backslash {\widetilde{A}_{\overline{I}}}}(e)\right).
\end{eqnarray*}
Similarly, from Lemma \ref{lem4}, 
$$\biguplus_{l=1}^n B_l \widetilde{=} (\underbrace {A_{I^{\prime}}\backslash \widetilde{A}_{\overline{I^{\prime}}}\uplus \cdots \uplus A_{I^{\prime}}\backslash \widetilde{A}_{\overline{I^{\prime}}}}_{|I^{\prime}|})\biguplus \left(\biguplus_{i=2}^n\biguplus_{|I|=i, I\neq I^{\prime}}(\underbrace{A_I\backslash \widetilde{A}_{\overline{I}} \uplus \cdots\uplus A_I\backslash \widetilde{A}_{\overline{I}}}_{|I|})\right).$$
Hence, when $e\in A_{I^{\prime}}\backslash \widetilde{A}_{\overline{I^{\prime}}}$, we can show that
\begin{eqnarray*}
  \eta\left(\prod_{i\in I^{\prime}}\delta_{B_i}(e)\prod_{j\in \overline{I^{\prime}}}\pi_{B_j}(e)\right)&=&\eta\left(\left(\delta_{ A_{I^{\prime}}\backslash \widetilde{A}_{\overline{I^{\prime}}}}(e)\right)^{2k^{\prime}+1}\cdot\ \prod_{i=2}^{n}\prod_{|I|=i, I\neq I^{\prime}}\left(\pi_{ A_{I}\backslash {\widetilde{A}_{\overline{I}}}}(e)\right)^{|I|}\right) \\
  &=&\eta\left(\delta_{ A_{I^{\prime}}\backslash \widetilde{A}_{\overline{I^{\prime}}}}(e)\cdot\ \prod_{k=1}^{\lfloor\frac{n-1}{2}\rfloor}\prod_{|I|=2k+1, I\neq I^{\prime}}\pi_{ A_{I}\backslash {\widetilde{A}_{\overline{I}}}}(e)\right),
\end{eqnarray*}
which implies that
$\eta(\delta_S(e))=\eta(\prod\limits_{i\in I^{\prime}}\delta_{A_i}(e)\prod\limits_{j\in \overline{I^{\prime}}}\pi_{A_j}(e))$. 
\end{proof}
\begin{remark}
(1) When $n=2$, $S=\bigcup\limits_{|I|=1}(A_I\backslash \widetilde{A}_{\overline{I}})=(A_1\backslash A_2)\bigcup (A_2\backslash A_1).$ For any $a_1\in A_1\backslash A_2$ and $a_2\in A_2 \backslash A_1$, if $$\eta(\delta_{A_1}(a_1)\pi_{A_2}(a_1))=\eta(\delta_{A_2}(a_2)\pi_{A_1}(a_2))=c,$$ 
then, according to Theorem \ref{th1}, $\eta(\delta_{S}(e))=c$ for any $e\in S$.
Clearly, Lemma 5 in  \cite{FXF21} is a special case of Theorem \ref{th1}.\\
(2) When $n=3$, $S=\left(A_1\backslash \left(A_2\cup A_3\right)\right)\bigcup \left(A_2\backslash \left(A_1\cup A_3\right)\right)\bigcup \left(A_3\backslash \left(A_2\cup A_1\right)\right) \bigcup (A_1\cap A_2$ $ \cap A_3).$
For any $a_1\in A_1\backslash \left(A_2\cup A_3\right)$, $a_2\in A_2\backslash (A_1 \cup A_3)$, $a_3\in A_3\backslash \left(A_1\cup A_2\right)$ and $a_4\in A_1\cap A_2\cap A_3$, if
\begin{equation*}
 \begin{split}
  \eta(\delta_{A_1}(a_1)\pi_{A_2}(a_1)\pi_{A_3}(a_1))
  &=\eta(\delta_{A_2}(a_2)\pi_{A_1}(a_2)\pi_{A_3}(a_2))\\
  &=\eta(\delta_{A_3}(a_3)\pi_{A_1}(a_3)\pi_{A_2}(a_3))\\
  &=\eta(\delta_{A_1}(a_4)\delta_{A_2}(a_4)\delta_{A_3}(a_4))
  =c,   
 \end{split}   
\end{equation*}
then, according to Theorem \ref{th1}, we have $\eta(\delta_{S}(e))=c$ for any $e\in S$.
\end{remark}

\section{MDS Euclidean self-dual codes from norm functions}
In this section, based on Lemma \ref{lem2} or Lemma \ref{lem3}, we will present some new constructions of MDS Euclidean self-dual codes via Theorem \ref{th1}. Throughout this section, let $q=r^2$ and $r=p^m$, where $p$ is an odd prime.

Suppose that $\theta$ is a primitive element of $\mathbb{F}_q$ and  $l$ is a positive integer with $l|(r-1)$. Let $\alpha=\theta^{\frac{q-1}{l}}$, $\beta=\theta^{r+1}$, $M =\langle \alpha \rangle$ and $\mathbb{F}_r^{*}=\langle \beta \rangle$. Clearly, $M$ is the subgroup of $\mathbb{F}^*_r$ of size $l$. 
The sufficient and necessary condition for two cosets $\beta^{i}M$ and $\beta^{j}M$ to be different is given as follows.
\begin{lemma}\label{lem5}
For any $0\leq i<j < r-1$,
$\beta^{i}M\neq \beta^{j}M$ holds if and only if $i\not\equiv  j\pmod {\frac{r-1}{l}}$ holds.
\end{lemma}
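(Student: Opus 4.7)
The statement is equivalent to a routine subgroup/coset identification, so the plan is to reduce the coset equality to an exponent divisibility and then unfold the definitions of $\alpha$ and $\beta$ to rewrite $M$ as a subgroup of $\langle \beta\rangle$.

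First I would apply the standard coset criterion in the multiplicative group $\mathbb{F}_r^*$: since $M$ is a subgroup of $\mathbb{F}_r^*$ (because $\alpha = \theta^{(q-1)/l}$ satisfies $\alpha^{r-1}=\theta^{(r-1)(q-1)/l}=1$ whenever $l\mid r-1$), the equality $\beta^i M = \beta^j M$ holds if and only if $\beta^{j-i}\in M$.

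Next I would express both $\alpha$ and $\beta$ through the primitive element $\theta$ of $\mathbb{F}_q=\mathbb{F}_{r^2}$. Writing $q-1=(r-1)(r+1)$, one finds
\[
\alpha \;=\; \theta^{(q-1)/l} \;=\; \theta^{(r+1)(r-1)/l} \;=\; \beta^{(r-1)/l},
\]
so $M=\langle\alpha\rangle=\langle\beta^{(r-1)/l}\rangle$. Since $\beta$ generates $\mathbb{F}_r^*$, it has order $r-1$, hence for any integer $k$ we have $\beta^k\in M$ if and only if $(r-1)/l \mid k$ in $\mathbb{Z}/(r-1)\mathbb{Z}$, equivalently $k\equiv 0\pmod{(r-1)/l}$.

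Finally I would combine the two equivalences. Applied with $k=j-i$, this yields $\beta^iM=\beta^jM$ iff $j-i\equiv 0\pmod{(r-1)/l}$, i.e.\ $i\equiv j\pmod{(r-1)/l}$, which is the claim. The only step that deserves care is the verification $M\subseteq\mathbb{F}_r^*$ and the identification $\alpha=\beta^{(r-1)/l}$; everything else is the standard fact that in a cyclic group of order $r-1$ generated by $\beta$, the unique subgroup of order $l$ equals $\langle\beta^{(r-1)/l}\rangle$. I do not anticipate a serious obstacle.
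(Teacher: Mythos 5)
Your proof is correct and follows essentially the same route as the paper: both reduce the coset equality to $\beta^{j-i}\in M$ and then to a divisibility of exponents, with the paper computing in $\langle\theta\rangle$ modulo $q-1$ while you equivalently note $\alpha=\beta^{(r-1)/l}$ and compute in $\langle\beta\rangle$ modulo $r-1$. No issues.
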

\begin{proof}
For any  $0\leq i<j < r-1$, we can show 
\begin{eqnarray*}
    \beta^{i}M=\beta^{j}M& \Leftrightarrow&\beta^{i-j}=\theta^{(r+1)(i-j)}\in M\\
    & \Leftrightarrow&\frac{q-1}{l}\mid(r+1)(i-j)\\
    & \Leftrightarrow&\frac{r-1}{l}\mid(i-j).
\end{eqnarray*}
\end{proof}

Suppose that $s$ is even with $0\leq s\leq \frac{r-1}{l}-1$. Let $b_0=1$, $b_i=\beta^{i}$, $b_{\frac{s}{2}+i}=b_{i}^{-1}$ for any $1\leq i\leq \frac{s}{2}$. By Lemma \ref{lem5}, $b_0,b_1,\dots,b_s$ are $s+1$ distinct representations of  $\mathbb{F}_r^{*}\big/M$.
For any $0\leq k\leq s$, define
\begin{equation}
 H_k \triangleq b_k M=\{b_k m\mid m\in M\}.\label{eq6}  
\end{equation}
Then $H_k$ is a subset of $\mathbb{F}_r^{*}$ with $|H_k|=l$, and $H_i\cap H_j=\emptyset$ for any $0\leq i\neq j\leq s$.

Let $N(x)=x^{\frac{q-1}{r-1}}=x^{r+1}$ be the norm function from $\mathbb{F}_q^{*}$ to $\mathbb{F}_r^{*}$. For any $0\leq i\leq l-1$, define 
\begin{equation}
 N_i \triangleq \{ x\in \mathbb{F}_q^{*}\mid N(x)=\alpha^i\}=\theta^{\frac{r-1}{l} i}\langle \theta^{r-1}\rangle.\label{eq7}
\end{equation}
Then $N_i$ is a subset of $\mathbb{F}_q^{*}$ with $|N_i|=r+1$, and $N_i\cap N_j=\emptyset$ for any $0\leq i\neq j\leq l-1$.
\begin{lemma}\label{lem6}
For any $0\leq i\leq l-1$, if $l$ is even and $r \equiv 3 \pmod 4$, then 
\begin{equation*}
N_i\cap\mathbb{F}_r^{*} =
  \begin{cases}
      \emptyset, & \text{if } i \ \text{is }\text{odd},\\
      \{ \theta^{\frac{i}{2}\cdot\frac{q-1}{l}},\theta^{\left(\frac{i}{2}+\frac{l}{2}\right)\cdot\frac{q-1}{l}}\}\subseteq H_0, & \text{if } i~ \text{is }\text{even}.
    \end{cases}
\end{equation*}
\end{lemma}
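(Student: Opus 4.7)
The plan is to reduce the membership condition $x\in N_i\cap\mathbb{F}_r^{*}$ to a single linear congruence in the exponents (with respect to the primitive element $\theta$), analyze its solvability using the hypotheses $r\equiv 3\pmod 4$ and $l$ even, and finally read off the solutions.

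First I would parametrize $\mathbb{F}_r^{*}$ inside $\mathbb{F}_q^{*}=\langle\theta\rangle$. Since $\beta=\theta^{r+1}$ generates $\mathbb{F}_r^{*}$, every $x\in\mathbb{F}_r^{*}$ has the unique form $x=\theta^{(r+1)t}$ with $0\leq t\leq r-2$. The defining condition $N(x)=x^{r+1}=\alpha^{i}=\theta^{\frac{q-1}{l}i}$ then becomes
\[
(r+1)^2 t \equiv \frac{(r-1)(r+1)}{l}\,i \pmod{q-1}.
\]
Dividing through by $r+1$ and reducing modulo $r-1$ (using $r+1\equiv 2\pmod{r-1}$) gives the key congruence
\[
2t \equiv \frac{r-1}{l}\,i \pmod{r-1}. \tag{$\ast$}
\]

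Next I would exploit the parity hypotheses. Because $r\equiv 3\pmod 4$, we may write $r-1=2u$ with $u$ odd. Since $l$ is even and $l\mid (r-1)=2u$ with $u$ odd, one sees that $l=2v$ where $v\mid u$, so $v$ is odd, and consequently $\frac{r-1}{l}=u/v$ is odd. Therefore the right-hand side of ($\ast$) is odd iff $i$ is odd. Since $\gcd(2,r-1)=2$, the congruence ($\ast$) is solvable iff its right-hand side is even, i.e.\ iff $i$ is even. This immediately gives the $i$-odd case $N_i\cap\mathbb{F}_r^{*}=\emptyset$.

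For $i$ even, I would solve ($\ast$) explicitly. Writing $\frac{r-1}{l}i=2\cdot\frac{(r-1)i}{2l}$ and cancelling a factor of $2$, we get $t\equiv\frac{(r-1)i}{2l}\pmod{u}$, which has exactly two solutions modulo $r-1=2u$, namely $t_1=\frac{(r-1)i}{2l}$ and $t_2=t_1+u$. Multiplying by $r+1$ recovers the exponents
\[
(r+1)t_1=\tfrac{i}{2}\cdot\tfrac{q-1}{l},\qquad (r+1)t_2=\tfrac{i}{2}\cdot\tfrac{q-1}{l}+\tfrac{q-1}{2}=\Bigl(\tfrac{i}{2}+\tfrac{l}{2}\Bigr)\cdot\tfrac{q-1}{l},
\]
which are precisely the two elements listed in the statement. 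Finally, both exponents are integer multiples of $\frac{q-1}{l}$, so the two elements lie in $\langle\alpha\rangle=M=H_0$, completing the proof.

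The main obstacle is the bookkeeping in reducing the original divisibility $(q-1)\mid(r+1)^2 t-\frac{(r-1)(r+1)}{l}i$ to the clean congruence ($\ast$) and then correctly identifying the second solution $t_2=t_1+u$ with the claimed exponent $\bigl(\tfrac{i}{2}+\tfrac{l}{2}\bigr)\cdot\tfrac{q-1}{l}$; once the equality $(r+1)u=\tfrac{q-1}{2}=\tfrac{l}{2}\cdot\tfrac{q-1}{l}$ is observed the match is immediate, but it is the step where the three hypotheses ($r\equiv 3\pmod 4$, $l$ even, $l\mid r-1$) must all be used simultaneously.
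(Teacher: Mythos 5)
Your proposal is correct and follows essentially the same route as the paper: both translate membership in $N_i\cap\mathbb{F}_r^{*}$ into a linear congruence on exponents of $\theta$, use that $\tfrac{r-1}{l}$ is odd (from $l$ even and $r\equiv 3\pmod 4$) to rule out odd $i$, and exhibit the same two solutions $\theta^{\frac{i}{2}\cdot\frac{q-1}{l}}$, $\theta^{(\frac{i}{2}+\frac{l}{2})\cdot\frac{q-1}{l}}\in H_0$ for even $i$. The only cosmetic difference is that you solve the congruence $2t\equiv\frac{r-1}{l}i\pmod{r-1}$ in the $\mathbb{F}_r^{*}$-exponent via gcd solvability, whereas the paper works with the $N_i$-exponent and finds the two solutions by a range-bounding argument.
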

\begin{proof}
For any $0\leq i\leq l-1$, if $e\in N_i\cap \mathbb{F}_r^{*}$, then there exist some $0\leq i_1\leq r$ and $\ 0\leq i_2\leq r-2$ such that 
\begin{equation}
  e=\theta^{\frac{r-1}{l} \cdot \ i+i_1(r-1)}=\theta^{i_2(r+1)}.  \label{eq8} 
\end{equation}
It is clear that Eq. (\ref{eq8}) is equivalent to 
\begin{equation*}
 \theta^{\left(\frac{r-1}{l} i+i_1(r-1)-i_2(r+1)\right)}=1.  
\end{equation*}
Hence, $q-1\mid\left(\frac{r-1}{l} i+i_1(r-1)-i_2(r+1)\right)$, which implies that 
\begin{equation}
 r+1\bigg|\left(2i_1-\frac{r-1}{l} i\right).\label{eq9}
\end{equation}
Thus to obtain $|N_i\cap \mathbb{F}_r^{*}|$ for any $0\leq i\leq l-1$, we only need to determine the number of $i_1$ satisfying $0\leq i_1\leq r$ and Eq. (\ref{eq9}).
Since $l$ is even and $r \equiv 3 \pmod 4$, $\frac{r-1}{l}$  is odd. If $i$ is odd, then Eq. (\ref{eq9}) does not hold, which implies that $N_i\cap\mathbb{F}_r^{*} =\emptyset$. If $i$ is even, by $0\leq 2i_1\leq 2r$, $0\leq i\leq l-1\leq r-2$, 
 $$-(r+1)< -\frac{r-1}{l}(l-1)\leq 2i_1-\frac{r-1}{l}i\leq 2r<2(r+1).$$
Therefore, Eq. (\ref{eq9}) holds if and only if $2i_1-\frac{r-1}{l}i=0\cdot (r+1)$ or $2i_1-\frac{r-1}{l}i=1\cdot (r+1)$, i.e., $i_1=\frac{i}{2}\cdot\frac{r-1}{l}$ or $i_1=\frac{i}{2}\cdot\frac{r-1}{l}+\frac{r+1}{2}$, which implies that
 $N_i\cap\mathbb{F}_r^{*} =\{ \theta^{\frac{i}{2}\cdot \frac{q-1}{l}}, \theta^{(\frac{i}{2}+\frac{l}{2})\cdot \frac{q-1}{l}}\}\subseteq H_0$.
\end{proof}
To facilitate subsequent calculations, we give the following lemmas.
\begin{lemma}\label{lem7}
With the above notations, the following results hold.

(1) For any $ 0\leq i\leq l-1$, we have $\pi_{N_i}(x)=\prod\limits_{a\in N_i}(x-a)=N(x)-\alpha^i=x^{r+1}-\alpha^{i}$.   

(2) For any $ 0\leq i\leq l-1$, we have $\delta_{N_i}(x)=(r+1)x^r$.
\end{lemma}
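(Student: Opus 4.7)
The plan is to prove both parts essentially as an immediate consequence of the definition of $N_i$ together with the identity $\delta_E(e) = \pi'_E(e)$ recorded right after the definitions in Section 2.

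For part (1), the strategy is to observe that $N_i = \{x \in \mathbb{F}_q^* : N(x) = \alpha^i\}$ is by definition precisely the set of solutions in $\mathbb{F}_q^*$ of the equation $x^{r+1} = \alpha^i$. Thus every $a \in N_i$ is a root of the monic polynomial $x^{r+1} - \alpha^i \in \mathbb{F}_q[x]$. Since $|N_i| = r+1 = \deg(x^{r+1} - \alpha^i)$ and $x^{r+1} - \alpha^i$ has at most $r+1$ roots in $\mathbb{F}_q$, the roots are exactly the elements of $N_i$, and a comparison of leading coefficients (both are monic) gives
$$\pi_{N_i}(x) = \prod_{a \in N_i}(x-a) = x^{r+1} - \alpha^i = N(x) - \alpha^i.$$

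For part (2), I would simply differentiate the closed form from part (1): $\pi'_{N_i}(x) = (r+1)x^r$. Since $\delta_{N_i}(e) = \pi'_{N_i}(e)$ for any $e \in N_i$, this yields $\delta_{N_i}(e) = (r+1)e^r$, which is the claimed formula.

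There is no real obstacle here; the only thing worth double-checking is that $\alpha^i \neq 0$ (which holds because $\alpha = \theta^{(q-1)/l} \in \mathbb{F}_q^*$) so that $x^{r+1} - \alpha^i$ has exactly $r+1$ distinct nonzero roots in $\mathbb{F}_q$, matching $|N_i| = r+1$. This ensures the degree/root count argument in part (1) is tight and guarantees that the factorization is genuinely a product over $N_i$ without multiplicity. Once (1) is established, (2) is a one-line derivative computation.
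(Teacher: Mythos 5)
Your proposal is correct and follows essentially the same route as the paper: identify $N_i$ as the full root set of the monic degree-$(r+1)$ polynomial $x^{r+1}-\alpha^i$ so that $\pi_{N_i}(x)=N(x)-\alpha^i$, then obtain (2) from $\delta_{N_i}(e)=\pi'_{N_i}(e)=(r+1)e^r$. Your extra remarks on monicity and the nonvanishing of $\alpha^i$ only make explicit what the paper leaves implicit.
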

\begin{proof}
(1) On the one hand, $\pi_{N_i}(x)=\prod\limits_{a\in N_i}(x-a)$ and $\deg (\pi_{N_i}(x))=|N_i|=r+1$. On the other hand, $N(a)-\alpha^{i}=0$ for any $a\in N_i$ and $\deg (N(x)-\alpha^{i})=r+1$, which implies that $\pi_{N_i}(x)=N(x)-\alpha^{i}=x^{r+1}-\alpha^{i}$.  \\
(2) According to the definitions of $\pi_{N_i}(x)$ and $\delta_{N_i}(x)$, the result is clear.
\end{proof}

\begin{lemma}\label{lem8}
Let $q=r^2$ and $r=p^m~(p\geq 3)$. Suppose that $l$ and $s$ are even with $l|(r-1)$, where $0\leq s\leq \frac{r-1}{l}-1$. Let $H_k$ and $N_i$ be defined in Eqs. (\ref{eq6}) and (\ref{eq7}) respectively, where $0\leq k\leq s$, $0\leq i\leq l-1$. For any $b\in N_i$ and {$b\notin H_0$}, we have
$$\eta\left(\pi_{H_0}(b)\right)=1, \ \eta\left(\prod_{k=1}^{s}\pi_{H_k}(b) \right)=1.$$
\end{lemma}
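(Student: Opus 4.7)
My plan is to evaluate each polynomial explicitly and then use the identity $\eta(y) = N(y)^{(r-1)/2}$ (with $N(y) = y^{r+1}$ the norm $\mathbb{F}_q \to \mathbb{F}_r$) to push the quadratic-character calculation into $\mathbb{F}_r$. The starting observation is the clean factorization $\pi_{H_k}(x) = x^l - b_k^l$, which follows because $M = \{\alpha^j : 0 \le j \le l-1\}$ is the group of $l$-th roots of unity in $\mathbb{F}_q^*$ and $H_k = b_k M$. Two algebraic facts I will invoke repeatedly are: from $b \in N_i$ we have $b^{r+1} = \alpha^i$, and since $\alpha^l = 1$ this gives $b^{lr} = b^{-l}$; and because $l$ is even, the unique order-two element of $M \subseteq \mathbb{F}_r^*$ is $\alpha^{l/2} = -1$.

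I would tackle the second claim first, since it is tailor-made for the pairing $b_{s/2+k} = b_k^{-1}$. Grouping the product as $\prod_{k=1}^{s/2}(b^l - b_k^l)(b^l - b_k^{-l})$ and using $b^l \cdot b^{-l} = 1$, each pair factors as $b^l\bigl(b^l + b^{-l} - b_k^l - b_k^{-l}\bigr)$; applying Frobenius (recall $b_k \in \mathbb{F}_r^*$ and $b^{lr} = b^{-l}$) gives its conjugate $b^{-l}\bigl(b^l + b^{-l} - b_k^l - b_k^{-l}\bigr)$. Multiplying shows that the norm of each pair is the perfect square $\bigl(b^l + b^{-l} - b_k^l - b_k^{-l}\bigr)^2$, and a direct Frobenius check confirms that the inner quantity lies in $\mathbb{F}_r^*$; it is nonzero once we assume (implicitly) that $b \notin H_k \cup H_{s/2+k}$. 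Since a nonzero square in $\mathbb{F}_r^*$ has $\eta$-value $1$, each pair contributes $1$ and the product claim follows.

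For the first claim I apply the same device to $y = b^l - 1$: the norm simplifies as $N(b^l - 1) = (b^l - 1)(b^{-l} - 1) = -(b^l - 1)^2/b^l$, and exploiting $l$ even I set $w = b^{l/2}$ to obtain the further factorization $-(w - w^{-1})^2$. Hence $\eta(b^l - 1) = \eta_r(-1)\cdot \eta_r\bigl((w - w^{-1})^2\bigr)$, where $\eta_r$ is the quadratic character of $\mathbb{F}_r^*$. Computing the Frobenius action via $w^r = (b^r)^{l/2} = \alpha^{il/2}/w = (-1)^i/w$ yields $(w - w^{-1})^r = (-1)^{i+1}(w - w^{-1})$, so $w - w^{-1}$ lies in $\mathbb{F}_r$ when $i$ is odd and in the complementary Frobenius eigenspace (whose nonzero squares are nonsquares of $\mathbb{F}_r^*$) when $i$ is even. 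Combining this with $\eta_r(-1) = (-1)^{(r-1)/2}$ reduces the first claim to a parity identity relating $i$ and $r \bmod 4$; the main obstacle I expect is precisely this bookkeeping step, together with verifying that $b \notin H_0$ is exactly the nondegeneracy needed to keep $w - w^{-1} \neq 0$.
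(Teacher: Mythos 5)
Your reduction $\eta(y)=\eta_r\bigl(N(y)\bigr)$ (with $\eta_r$ the quadratic character of $\mathbb{F}_r^*$), combined with $\pi_{H_k}(x)=x^l-b_k^l$ and $b^{lr}=b^{-l}$, is sound, and for the \emph{second} assertion it yields a complete and correct proof: pairing $H_k$ with $H_{s/2+k}$, each product $(b^l-b_k^l)(b^l-b_k^{-l})=b^l T_k$ with $T_k=b^l+b^{-l}-b_k^l-b_k^{-l}\in\mathbb{F}_r$, so its norm is $T_k^2$, a nonzero square in $\mathbb{F}_r$ (the nonvanishing you flag is automatic: $b\in N_i\cap H_k$ would force $b\in\mathbb{F}_r^*$ with $b_k^{2l}=1$, impossible since $2k\le s<\frac{r-1}{l}$). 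This is a genuinely different route from the paper, which instead argues that $\prod_{k=1}^{s}\pi_{H_k}(b)$ has norm $1$ and hence lies in $\langle\theta^{r-1}\rangle$; your version needs only that the norm be a square of $\mathbb{F}_r$, which is exactly what the computation delivers.

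The gap is in the \emph{first} assertion, and it is not mere bookkeeping. Your own formulas give $N\bigl(\pi_{H_0}(b)\bigr)=(b^l-1)(b^{-l}-1)=-(w-w^{-1})^2$ and $(w-w^{-1})^r=(-1)^{i+1}(w-w^{-1})$, hence $\eta\bigl(\pi_{H_0}(b)\bigr)=\eta_r(-1)$ when $i$ is odd and $-\eta_r(-1)$ when $i$ is even; this value alternates with the parity of $i$, so no parity identity can make it identically $1$, and the deferred final step cannot be closed. Concretely, take $r=7$, $l=2$, $\theta$ primitive in $\mathbb{F}_{49}$ and $b=\theta^3\in N_1\setminus H_0$: then $\pi_{H_0}(b)=b^2-1$ has norm $-1$, a nonsquare in $\mathbb{F}_7$, so $\eta\bigl(\pi_{H_0}(b)\bigr)=-1$. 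Note that your (correct) norm computation $N\bigl(\pi_{H_0}(b)\bigr)=-b^{-l}(b^l-1)^2$ also contradicts the key step of the paper's own proof, which asserts $\bigl(\pi_{H_0}(b)\bigr)^{r+1}=1$; so the first claim can only hold under an additional parity hypothesis linking $i\bmod 2$ with $r\bmod 4$ (for $r\equiv 3\pmod 4$ only even $i$ works, while the paper's Appendix A invokes it for $b\in N_{2i+1}$), and your attempt, carried out faithfully, does not prove the statement as written.
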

\begin{proof}
Firstly, we consider $\pi_{H_0}(b)$. Since $b\in N_i$ and $H_0\subseteq \mathbb{F}_r^*$,  
\begin{equation}
\left(\pi_{H_0}(b)\right)^r=\prod_{\xi\in H_0}(b^r-\xi)=\prod_{\xi\in H_0}(b^{-1}\alpha^{i}-\xi)=\prod_{\xi\in H_0}(b^{-1}-\alpha^{-i} \xi). \label{eq10}   
\end{equation}
Note that $\alpha^{-i}\xi$ runs over $H_0$ when $\xi$ runs over $H_0$, thus
\begin{equation}
\left(\pi_{H_0}(b)\right)^r=
\prod_{\xi\in H_0}(b^{-1}-\xi). \label{eq11}   
\end{equation}
When $\xi$ runs over $H_0$, we can also know that $\xi^{-1}$ runs over $H_0$ , by Eq. (\ref{eq11}), we have 
$$\left(\pi_{H_0}(b)\right)^{-1}=\left(\pi_{H_0}(b)\right)^{-r^2}=\left(\prod\limits_{\xi\in H_0}(b^{-1}-\xi)\right)^{-r}=\prod\limits_{\xi\in H_0}(b^{r}-\xi).$$ Similar to the calculation of Eq. (\ref{eq10}), 
$$\left(\pi_{H_0}(b)\right)^{-1}=\prod\limits_{\xi\in H_0}(b^{r}-\xi)=\prod\limits_{\xi\in H_0}(b^{-1}-\xi)$$ can be obtained.
Thus, 
$$\left(\pi_{H_0}(b)\right)^{-1}=\prod_{\xi\in H_0}(b^{-1}-\xi)=     \left(\pi_{H_0}(b)\right)^r,$$ which implies that $\left(\pi_{H_0}(b)\right)^{r+1}=1$.
Since $\pi_{H_0}(b)\in \mathbb{F}_q^{*}=\langle \theta\rangle$, there exists an integer $x$ such that $\pi_{H_0}(b)=\theta^{(r-1)\cdot x}$, so $\pi_{H_0}(b)$ is a square element in $\mathbb{F}_q$. 

Next, we consider $\prod\limits_{k=1}^{s}\pi_{H_k}(b) $. Note that $\alpha^{-i}\xi$ runs over $H_k$ when $\xi$ runs over $H_k$,
\begin{equation}
\left(\pi_{H_k}(b)\right)^r=\prod_{\xi\in H_k}(b^{-1}-\xi) \label{eq12}  
\end{equation}  can be obtained similarly.
By $b\in N_i$ and Eq. (\ref{eq12}), 
$$\left(\pi_{H_k}(b)\right)^{-1}=\left(\prod\limits_{\xi\in H_k}(b^{-1}-\xi)\right)^{-r}=\prod\limits_{\xi\in H_{k}}(b^{-1}-\alpha^{-i}\xi^{-1}).$$
Since $b_{\frac{s}{2}+i}=b_i^{-1}$, we can show that $\alpha^{-i}\xi^{-1}$runs over $H_{k+\frac{s}{2}\pmod s}$ when $\xi$ runs over $H_k$, which implies that 
$$\left(\pi_{H_k}(b)\right)^{-1}=\prod\limits_{\xi\in H_{k+\frac{s}{2}\pmod s}}(b^{-1}-\xi).$$
Hence, $$ \left(\prod_{k=1}^s\pi_{H_k}(b)\right)^{-1}=\prod_{k=1}^s\prod_{\xi\in H_k}(b^{-1}-\xi)=\left(\prod_{k=1}^s\pi_{H_k}(b)\right)^r,$$ i.e.,  $\left(\prod\limits_{k=1}^s\pi_{H_k}(b)\right)^{r+1}=1$.
Since $\prod\limits_{k=1}^s\pi_{H_k}(b)\in \mathbb{F}_q^{*}=\langle \theta\rangle$, there exists an integer $x$ such that $\prod\limits_{k=1}^s\pi_{H_k}(b)=\theta^{(r-1)\cdot x}$, implying that $\prod\limits_{k=1}^s\pi_{H_k}(b)$ is a square element in $\mathbb{F}_q$.   
\end{proof}
Recall that $q=r^2$ and $r=p^m$, $M =\langle \alpha \rangle$ is the subgroup of $\mathbb{F}_r^{*}$ of order $l$. When $l$ is even, according to Theorem \ref{th1} and Lemma \ref{lem6}, we provide the following construction based on the sets given in Eq. (\ref{eq6}) and Eq. (\ref{eq7}).
\begin{theorem}\label{th2}
 Let $q=r^2$ and $r=p^m$ with $r\equiv3 \pmod4$. Suppose that $l$ and $s$ are even with $l|(r-1)$, $0\leq s\leq \frac{r-1}{l}-1$. For any $0\leq l_1\leq \frac{l}{2}$, $0\leq l_2\leq \frac{l}{2}$, put $n=sl+(l_1+l_2)(r+1)+1$, then there exists a $q$-ary MDS Euclidean self-dual code of length $n+1$.  
\end{theorem}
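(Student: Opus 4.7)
\emph{Plan.} The strategy is to apply Theorem \ref{th1} with $n=2$ to two disjoint subsets of $\mathbb{F}_q$ and then invoke Lemma \ref{lem3} to obtain the extended-GRS self-dual code of length $n+1$. I would set
\[
A_1 \;=\; \{0\}\cup H_1\cup\cdots\cup H_s, \qquad A_2 \;=\; \bigcup_{i\in E} N_i \;\cup\; \bigcup_{j\in O} N_j,
\]
where $E\subseteq\{0,2,\dots,l-2\}$ has size $l_1$ and $O\subseteq\{1,3,\dots,l-1\}$ has size $l_2$. By Lemma \ref{lem6} (which is applicable because $l$ is even and $r\equiv 3\pmod 4$), $N_j\cap\mathbb{F}_r^*=\emptyset$ for odd $j$ and $N_i\cap\mathbb{F}_r^*\subseteq H_0$ for even $i$. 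Combined with $0\notin N_i$ and the deliberate omission of $H_0$ from $A_1$, this forces $A_1\cap A_2=\emptyset$, so $S:=A_1\cup A_2$ has cardinality $(1+sl)+(l_1+l_2)(r+1)=n$, which is odd.

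Since $q=r^2\equiv 1\pmod 4$ gives $\eta(-1)=1$, the hypothesis of Lemma \ref{lem3} reduces to $\eta(\delta_S(a))=1$ for every $a\in S$, and disjointness of $A_1,A_2$ allows the $n=2$ case of Theorem \ref{th1} (see Remark 1(1)) to further reduce this to
\[
\eta\bigl(\delta_{A_1}(a)\pi_{A_2}(a)\bigr)=1\ \ (a\in A_1), \qquad \eta\bigl(\delta_{A_2}(a)\pi_{A_1}(a)\bigr)=1\ \ (a\in A_2).
\]
The workhorse is the observation that every element of $\mathbb{F}_r^*$ is a square in $\mathbb{F}_q$: since $r$ is odd, $\mathbb{F}_r^*=\langle\theta^{r+1}\rangle=\langle(\theta^{(r+1)/2})^2\rangle$.

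I would then split the verification into two cases. For $a\in A_1$, Lemma \ref{lem1} applied to the partition $A_1=\{0\}\sqcup H_1\sqcup\cdots\sqcup H_s$, together with the closed forms $\pi_{H_k}(x)=x^l-b_k^l$ and $\delta_{H_k}(x)=l b_k^l/x$ on $H_k$, places $\delta_{A_1}(a)$ in $\mathbb{F}_r^*$; at $a=0$ the residual product $\prod_{k=1}^s\pi_{H_k}(0)$ telescopes via the pairing $b_{s/2+i}=b_i^{-1}$ and the identity $\prod_{m\in M}m=-1$ (both valid because $l$ is even). Moreover $\pi_{A_2}(a)=\prod_i(a^{r+1}-\alpha^i)=\prod_i(a^2-\alpha^i)\in\mathbb{F}_r\setminus\{0\}$ (nonvanishing because $a\notin A_2$), so the whole product lies in $\mathbb{F}_r^*$ and the squareness is automatic. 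For $a\in N_i\subseteq A_2$, Lemma \ref{lem7} gives $\delta_{N_i}(a)=(r+1)a^r$ and $\pi_{N_j}(a)=\alpha^i-\alpha^j\in\mathbb{F}_r^*$, whence $\eta(\delta_{A_2}(a))=\eta(a^r)=\eta(a)$; and Lemma \ref{lem8} gives $\eta\bigl(\prod_{k=1}^s\pi_{H_k}(a)\bigr)=1$, so $\eta(\pi_{A_1}(a))=\eta\bigl(a\cdot\prod_{k=1}^s\pi_{H_k}(a)\bigr)=\eta(a)$, and the two copies of $\eta(a)$ cancel. The main subtlety is the pair of elements $N_i\cap H_0$ (two for each even $i\in E$), at which Lemma \ref{lem8}'s stated hypothesis $a\notin H_0$ fails; however, the proof of its second conclusion uses only $a\neq 0$ and $a\notin H_k$ for $k\geq 1$, both of which still hold, and at those $a\in\mathbb{F}_r^*$ the $\mathbb{F}_r^*$-squareness principle makes the condition automatic anyway. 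With all cases verified, Lemma \ref{lem3} delivers the claimed $q$-ary MDS Euclidean self-dual code of length $n+1$.
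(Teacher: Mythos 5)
Your proposal is correct and follows essentially the same route as the paper: the evaluation set is the same (the cosets $H_1,\dots,H_s$, the zero element, and $l_1+l_2$ norm fibers $N_i$), and the verification rests on the same ingredients --- Lemma \ref{lem6} for disjointness, Lemmas \ref{lem7} and \ref{lem8} for the character computations, the fact that every element of $\mathbb{F}_r^*$ is a square in $\mathbb{F}_q$, and Theorem \ref{th1} combined with Lemma \ref{lem3}. The only differences are organizational --- you apply the two-set case of Theorem \ref{th1} with $0$ attached to the $H$-part, whereas the paper uses the three-set split $A$, $B'=B\cup\{0\}$, $C$ with $0$ attached to the odd-index fibers (immaterial here, since all pieces are pairwise disjoint and the resulting pointwise conditions on $S$ coincide) --- and your explicit handling of the points of $N_i\cap H_0$, where the stated hypothesis of Lemma \ref{lem8} fails, correctly covers a corner case that the paper disposes of with ``similarly.''
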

\begin{proof}
Let $$A=\bigcup\limits_{k=1}^{s}H_k, B=\bigcup\limits_{i=0}^{l_1-1} N_{2i+1}, C=\bigcup\limits_{j=0}^{l_2-1}N_{2j}, B^{\prime}=B\cup\{0\},$$ where $H_k$ and $N_i$ are defined in Eq. (\ref{eq6}) and Eq. (\ref{eq7}), respectively. Choose 
$$S=(A\backslash (B^{\prime}\cup C))\bigcup (B^{\prime}\backslash \left(A\cup C\right))\bigcup (C\backslash (A\cup B^{\prime}))\bigcup(A\cap B^{\prime}\cap C),$$   
then by Lemma \ref{lem6}, 
$$|S|=|A|+|B^{\prime}|+|C|=sl+(l_1+l_2)(r+1)+1=n.$$

Firstly, for any $a\in A \backslash (B^{\prime}\cup C)$, we will calculate $\eta(\delta_{A}(a)\pi_{B^{\prime}}(a)\pi_{C}(a))$.
Assume that $a\in H_k$, where $1\leq k \leq s$, then by Lemma \ref{lem1} and Lemma \ref{lem7},
\begin{eqnarray*}
     \delta_{A}(a)\pi_{B^{\prime}}(a)\pi_{C}(a)&=& \pi_{H_k}^{\prime}(a)\prod_{m=1, m\neq k}^{s}\pi_{H_m}(a)\left(\prod_{i=0}^{l_1-1}\pi_{N_{2i+1}}(a)\cdot a\right)\prod_{j=0}^{l_2-1}\pi_{N_{2j}}(a)\\
     &=&a \cdot \pi_{H_k}^{\prime}(a)\prod_{m=1, m\neq k}^{s}\pi_{H_m}(a)\prod_{i=0}^{l_1-1}(N(a)-\alpha^{2i+1})\prod_{j=0}^{l_2-1}(N(a)-\alpha^{2j}).
\end{eqnarray*}
Since $H_k\subseteq \mathbb{F}_r^{*}$ and $M\subseteq \mathbb{F}_r^{*}$, we  can ontain $\pi_{H_k}^{\prime}(a)$, $ \pi_{H_k}(a)$ and $N(a)\in \mathbb{F}_r^{*}$, which implies that $\delta_{A}(a)\pi_{B^{\prime}}(a)\pi_{C}(a)\in \mathbb{F}_r.$ Thus, $$\eta\left(\delta_{A}(a)\pi_{B^{\prime}}(a)\pi_{C}(a)\right)=1$$ since each element of $\mathbb{F}_r$ is a square element in $\mathbb{F}_q$.

Secondly, for any $b\in B^{\prime} \backslash (A\cup C)$, we will calculate $\eta(\pi_{A}(b)\delta_{B^{\prime}}(b)\pi_{C}(b))$. If $b=0$, then by Lemma \ref{lem1} and Lemma \ref{lem7},
\begin{eqnarray*}
     \pi_{A}(0)\delta_{B^{\prime}}(0)\pi_{C}(0)&=& \prod_{k=1}^{s}\pi_{H_k}(0)\prod_{i=0}^{l_1-1}\pi_{N_{2i+1}}(0)\prod_{j=0}^{l_2-1}\pi_{N_{2j}}(0)\\
     &=&(-1)^{l_1+l_2}\prod_{k=1}^{s}\pi_{H_k}(0)\prod_{i=0}^{l_1-1}\prod_{j=0}^{l_2-1}\alpha^{2i+2j+1}.
\end{eqnarray*}
Note that $q=r^2$ and $-1=\theta^{\frac{q-1}{2}}$, thus $\eta(-1)=1$. Since $H_k\subseteq \mathbb{F}_r^{*}$ and $0\in \mathbb{F}_r$, $\eta\left(\prod\limits_{k=1}^{s}\pi_{H_k}(0)\right)=1$. Moreover, by $\alpha=\theta^{\frac{q-1}{l}}$ and $l\mid (r-1)$, we can get $\eta(\alpha)=1$. In a word, 
\begin{equation}
\eta\left(\pi_{A}(0)\delta_{B^{\prime}}(0)\pi_{C}(0)\right)=1.    
\end{equation}
If $b \neq 0$, then we assume  $b\in N_{2i+1}$, where $0\leq i\leq l_1-1$. Then by Lemma \ref{lem1} and Lemma \ref{lem7},
 \begin{eqnarray*}
     \pi_{A}(b)\delta_{B^{\prime}}(b)\pi_{C}(b)&=& \prod_{k=1}^{s}\pi_{H_k}(b)\left(b\cdot\pi_{N_{2i+1}}^{\prime}(b)\prod_{ i_1=0, i_1\neq i}^{l_1-1}\pi_{N_{2i_1+1}}(b)\right) \prod_{j=0}^{l_2-1}\pi_{N_{2j}}(b)\\
      &=&b^{r+1}\prod_{i_1=0, i_1\neq i}^{l_1-1}(N(b)-\alpha^{2i_{1}+1})\prod_{j=0}^{l_2-1}(N(b)-\alpha^{2j})\prod_{k=1}^{s}\pi_{H_k}(b).
\end{eqnarray*}
Since $r+1$ is even and $\prod\limits_{i_1=0, i_1\neq i}^{l_1-1}(N(b)-\alpha^{2i_1+1})\prod\limits_{j=0}^{l_2-1}(N(b)-\alpha^{2j})\in \mathbb{F}_r^{*}$, it is sufficient to consider $\eta\left(\prod\limits_{k=1}^{s}\pi_{H_k}(b)\right)$. By Lemma \ref{lem8}, we can show that 
\begin{equation}
\eta\left(\pi_{A}(b)\delta_{B^{\prime}}(b)\pi_{C}(b)\right)=\eta\left(\prod_{k=1}^{s}\pi_{H_k}(b) \right)=1.  \label{eq14}  
\end{equation}
Finally, for any $c\in C \backslash (A\cup B^{\prime})$, we assume  $c\in N_{2j}$, where $0\leq j\leq l_2-1$. Similarly, we have
$$\eta\left(\pi_A(c)\pi_{B^{\prime}}(c)\delta_C(c)\right)=1.$$

In summary, for any $a\in A \backslash (B^{\prime}\cup C)$, $b\in B^{\prime} \backslash (A\cup C)$ and $c\in C \backslash (A\cup B^{\prime})$, we have
$$\eta\left(\delta_{A}(a)\pi_{B^{\prime}}(a)\pi_{C}(a)\right)=\eta\left(\pi_{A}(b)\delta_{B^{\prime}}(b)\pi_{C}(b)\right)=\eta\left(\pi_A(c)\pi_{B^{\prime}}(c)\delta_C(c)\right)=1.$$   
According to Theorem \ref{th1}, for any $e\in S$, we can get $\eta\left(-\delta_S(e)\right)=\eta\left(\delta_S(e)\right)=1$. Again by Lemma \ref{lem3}, there exists a $q$-ary MDS Euclidean self-dual code of length $n+1$.  
\end{proof}
By modifying the evaluation set of Theorem \ref{th2}, we can provide another construction of MDS Euclidean self-dual codes.
\begin{theorem}\label{th3}
Let $q=r^2$ and $r=p^m$ with $r\equiv3 \pmod 4$. Suppose that $l$ and $s$ are even with $l|(r-1)$, $0\leq s\leq \frac{r-1}{l}-1$. For any $0\leq l_1\leq \frac{l}{2}$, $0\leq l_2\leq \frac{l}{2}$, put $n=(s+1)l+(l_1+l_2)(r+1)-2l_2+1$, then there exists a $q$-ary MDS Euclidean self-dual code of length $n+1$.     
\end{theorem}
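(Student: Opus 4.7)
The plan is to modify Theorem~\ref{th2}'s construction by enlarging the first evaluation block to include the coset $H_0$ and to compensate for the resulting overlap with $C$ by applying Theorem~\ref{th1} with four subsets. By Lemma~\ref{lem6}, under the standing hypotheses $r \equiv 3 \pmod 4$ and $l$ even, $H_0 \cap N_{2j} = \{\alpha^j, -\alpha^j\}$, so the overlap $D := H_0 \cap C$ has exactly $2l_2$ elements. I would then take
\[
A_1 = \bigcup_{k=0}^{s} H_k,\quad A_2 = \bigcup_{i=0}^{l_1-1} N_{2i+1} \cup \{0\},\quad A_3 = \bigcup_{j=0}^{l_2-1} N_{2j},\quad A_4 = D,
\]
and apply Theorem~\ref{th1} to these sets. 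Since $D \subseteq A_1 \cap A_3$ while every other pairwise intersection of the $A_i$'s is empty, the evaluation set $S$ defined by Eq.~(\ref{eq1}) decomposes as the disjoint union $S = (A_1 \setminus A_3) \sqcup A_2 \sqcup (A_3 \setminus A_1) \sqcup D$, giving $|S| = ((s+1)l - 2l_2) + (l_1(r+1)+1) + (l_2(r+1) - 2l_2) + 2l_2 = n$.

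With $c = 1$, the hypotheses of Theorem~\ref{th1} reduce to four non-vacuous cases: $I = \{1\}$ on $A_1 \setminus A_3$, $I = \{2\}$ on $A_2$, $I = \{3\}$ on $A_3 \setminus A_1$, and $I = \{1,3,4\}$ on $D$. For $I = \{1\}$ and $I = \{1,3,4\}$ every relevant $a$ lies in $\mathbb{F}_r^*$, so by Lemmas~\ref{lem1} and \ref{lem7} each factor of the associated product belongs to $\mathbb{F}_r^*$ and is automatically a square in $\mathbb{F}_q = \mathbb{F}_{r^2}$. For $I = \{2\}$ and $I = \{3\}$, I would follow the corresponding steps of Theorem~\ref{th2}: Lemma~\ref{lem8} gives $\eta(\pi_{A_1}(a)) = 1$; Lemma~\ref{lem7} converts $\delta_{N_i}(a)$ and $\pi_{N_i}(a)$ into $\mathbb{F}_r^*$-valued expressions; and the hypotheses $l \mid (r-1)$ and $r \equiv 3 \pmod 4$ yield $\eta(a) = 1$ for $a \in N_{2j}$, which combines with the explicit $a$-factor of $\pi_{A_2}$ to produce a square.

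The main obstacle is controlling the new factor $\pi_{A_4}(a) = \prod_{j=0}^{l_2-1}(a^2 - \alpha^{2j})$ in cases $I = \{2\}$ and $I = \{3\}$. Because $D$ is not a multiplicative coset, $\pi_{A_4}(x)$ does not reduce to a binomial of the form $x^{l'} - \text{const}$ the way $\pi_{H_k}(x) = x^l - b_k^l$ does, so Lemma~\ref{lem8}'s Frobenius trick does not apply verbatim. My plan is to exploit that $\pi_{A_4}$ has coefficients in $\mathbb{F}_r$, whence $\pi_{A_4}(a)^r = \pi_{A_4}(a^r)$; combined with $a^{r+1} = \alpha^{2i+1}$ (respectively $\alpha^{2j}$), this identity expresses $\pi_{A_4}(a)^{r+1}$ as an explicit element of $\mathbb{F}_r^*$ whose quadratic residuacity is then controlled by $l \mid (r-1)$. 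Once all four character conditions are verified, Theorem~\ref{th1} yields $\eta(\delta_S(e)) = 1$ for every $e \in S$; since $\eta(-1) = 1$ in $\mathbb{F}_{r^2}^*$, Lemma~\ref{lem3} produces a $q$-ary MDS Euclidean self-dual code of length $n+1$.
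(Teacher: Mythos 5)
Your route is genuinely different from the paper's: the proof in Appendix A keeps the three sets $A'=A\cup H_0$, $B'=B\cup\{0\}$, $C$ and takes the evaluation set of Eq. (\ref{eq1}) for those three sets, which \emph{omits} the overlap $D=H_0\cap C$ entirely, so the extra factor $\pi_D$ that you identify as the main obstacle never appears there; you instead enlarge the evaluation set to $S=A_1\cup A_2\cup A_3\supseteq D$ and invoke Theorem \ref{th1} with four subsets (your cardinality count $|S|=n$ is correct for that choice). The genuine gap is exactly the step you defer. The observation that $\pi_{A_4}(a)^{r+1}\in\mathbb{F}_r^{*}$ carries no information, because $x^{r+1}=N(x)\in\mathbb{F}_r^{*}$ for \emph{every} $x\in\mathbb{F}_q^{*}$; what you actually need is $\eta(\pi_{A_4}(a))=1$, i.e. that this norm is a \emph{square} in $\mathbb{F}_r^{*}$, and unlike the situation of Lemma \ref{lem8}, where the relevant product is shown to have $(r+1)$-st power equal to $1$, nothing in $l\mid(r-1)$ forces this for $\pi_{A_4}(x)=\prod_{j}(x^{2}-\alpha^{2j})$.

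In fact the needed condition is false, so the gap cannot be closed with your evaluation set. Take $r=7$, $q=49$, $l=6$, $s=0$, $l_1=1$, $l_2=2$ (all hypotheses of the theorem hold, and $n=27$), and write $\mathbb{F}_{49}=\mathbb{F}_7[w]$ with $w^{2}=-1$; then $\theta=3+w$ is primitive, $\alpha=\theta^{8}=3$, $H_0=\mathbb{F}_7^{*}$, $N_i=\{x:x^{8}=3^{i}\}$, $D=\{\pm1,\pm3\}$, and your set is $S=\{0\}\cup\mathbb{F}_7^{*}\cup N_0\cup N_1\cup N_2$ with $|S|=27=n$ and $\pi_S(x)=x(x^{2}-4)(x^{8}-1)(x^{8}-2)(x^{8}-3)$. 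At $e=2\in A_1\setminus(A_2\cup A_3\cup A_4)$ one gets $\delta_S(2)=2\cdot 4\cdot 3\cdot 2\cdot 1=6\in\mathbb{F}_7^{*}$, a square in $\mathbb{F}_{49}^{*}$, while at $e=\theta^{2}\in N_2\setminus D$ one gets $\delta_S(\theta^{2})=-\theta^{16}(\theta^{4}-4)=1+4w$, whose norm $(1+4w)(1-4w)=3$ is a non-residue modulo $7$, so $\delta_S(\theta^{2})$ is a non-square; the discrepancy is precisely your factor $\pi_{A_4}(\theta^{2})=(\theta^{4}-1)(\theta^{4}-2)$, which is a non-square while all remaining factors contribute $+1$. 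Hence $\eta(\delta_S(\cdot))$ is not constant on $S$, and since $\eta(-1)=1$ neither Lemma \ref{lem2} nor Lemma \ref{lem3} applies, for either choice of the constant $c$ in Theorem \ref{th1}. So the proposal does not merely leave a step unverified: the four-set construction itself fails for admissible parameters, and proving the statement requires a different evaluation set (or a different argument) than the one you build.
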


\begin{proof}
Please see \ref{app A} for the proof.
\end{proof}
\section{MDS Euclidean self-dual codes from multipliacative subgroups}
In this section, we will always assume $q=r^2$, where $r$ is an odd prime power with $r \equiv 3\pmod 4$. Let $\theta$ be a primitive element of $\mathbb{F}_q$, i.e., $\mathbb{F}_q^{*}=\langle\theta\rangle$. Under this assumption, we again provide some constructions of MDS Euclidean self-dual codes based on multiplicative subgroups of $\mathbb{F}_q^{*}$ and their cosets.
\subsection{The special case of three sets}
In this subsection, let 
$$\alpha=\theta^{u}, \beta=\theta^{v}, \gamma=\theta^{\frac{v}{2}},$$
where $u$, $v$ are two distinct divisors of $q-1$ and $v$ is even. Let $\langle\alpha\rangle$ and $\langle\beta\rangle$ be two multiplicative subgroups of $\mathbb{F}_q^{*}$ generated by $\alpha$ and $\beta$, respectively. Suppose $0\leq s\leq \frac{u}{\gcd(u,v)}$,  $0\leq t\leq \frac{v}{\gcd(u,v)}$ and  $0\leq s^{\prime}\leq \frac{u}{\gcd(u,v)}$. Denote $A_i=\beta^i\langle\alpha\rangle$,                                    $B_j=\alpha^j\langle\beta\rangle$ and $C_k={\gamma}^{2k+1}\langle\alpha\rangle$. Let 
\begin{equation}
 A\triangleq \bigcup_{i=0}^{s-1}A_i,\ B \triangleq \bigcup_{j=0}^{t-1}B_j, \ C\triangleq \bigcup_{k=0}^{{s^{\prime}}-1}C_k.  \label{eq15}
\end{equation}
\begin{lemma}[\cite{FXF21}]\label{lem9}
 Keep the above notations. Let $A$ and $B$ be defined as Eq. (\ref{eq15}), then
 $$|A|=s\frac{q-1}{u}, |B|=t\frac{q-1}{v} ~\text{and}~ |A\cap B|=\frac{(q-1)\gcd(u,v)}{uv}st.$$
\end{lemma}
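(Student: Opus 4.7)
The plan is to handle the three claims sequentially, all by reducing cosets in $\mathbb{F}_q^*=\langle\theta\rangle$ to divisibility conditions on exponents of $\theta$. The key technical input is the elementary fact that for any integer $n$, one has $\theta^n\in\langle\theta^u\rangle$ if and only if $u\mid n$, which holds because $u\mid(q-1)$.

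First I would compute $|A|$. Each $A_i=\beta^i\langle\alpha\rangle$ is a coset of $\langle\alpha\rangle$, a subgroup of order $(q-1)/u$. Two cosets $A_i$ and $A_{i'}$ coincide iff $\beta^{i-i'}=\theta^{v(i-i')}\in\langle\alpha\rangle$, i.e., iff $u\mid v(i-i')$, which is equivalent to $\frac{u}{\gcd(u,v)}\mid(i-i')$. Since $0\le i,i'\le s-1$ with $s\le u/\gcd(u,v)$, this forces $i=i'$, so $A_0,\dots,A_{s-1}$ are pairwise disjoint and $|A|=s(q-1)/u$. The calculation for $|B|$ is entirely symmetric after swapping the roles of $\alpha$ and $\beta$ and using $t\le v/\gcd(u,v)$.

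For $|A\cap B|$, the pairwise disjointness established above gives
$$|A\cap B|=\sum_{i=0}^{s-1}\sum_{j=0}^{t-1}|A_i\cap B_j|.$$
To evaluate a single term, I would observe that if $A_i\cap B_j$ contains some element $x$, then $A_i=x\langle\alpha\rangle$ and $B_j=x\langle\beta\rangle$, so $A_i\cap B_j=x(\langle\alpha\rangle\cap\langle\beta\rangle)$. Since $\langle\alpha\rangle\cap\langle\beta\rangle=\langle\theta^{\mathrm{lcm}(u,v)}\rangle$ has order $(q-1)/\mathrm{lcm}(u,v)=\gcd(u,v)(q-1)/(uv)$, this is the common size of every nonempty $A_i\cap B_j$.

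The final step, and the only one that genuinely requires an argument beyond bookkeeping, is to verify that every pair $(i,j)$ in the allowed range yields a nonempty intersection. This amounts to $\beta^i\alpha^{-j}\in\langle\alpha\rangle\langle\beta\rangle=\langle\alpha,\beta\rangle=\langle\theta^{\gcd(u,v)}\rangle$, i.e., $\gcd(u,v)\mid vi-uj$, which is immediate since $\gcd(u,v)$ divides both $u$ and $v$. Summing over all $st$ pairs then delivers $|A\cap B|=st\cdot\gcd(u,v)(q-1)/(uv)$. I do not anticipate any serious obstacle: the entire argument reduces to standard coset index calculations in the abelian group $\mathbb{F}_q^*$, and the hypotheses $s\le u/\gcd(u,v)$, $t\le v/\gcd(u,v)$ are used precisely to guarantee coset distinctness in parts (1) and (2).
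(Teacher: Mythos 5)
Your proof is correct: the coset-distinctness argument for $|A|$ and $|B|$ uses the bounds $s\le u/\gcd(u,v)$, $t\le v/\gcd(u,v)$ exactly where they are needed, and the computation of $|A\cap B|$ via $A_i\cap B_j=x\left(\langle\alpha\rangle\cap\langle\beta\rangle\right)$ together with $\langle\alpha\rangle\cap\langle\beta\rangle=\langle\theta^{\operatorname{lcm}(u,v)}\rangle$ and $\langle\alpha\rangle\langle\beta\rangle=\langle\theta^{\gcd(u,v)}\rangle$ is sound, the nonemptiness criterion $\gcd(u,v)\mid vi-uj$ being automatic. Your route is, however, genuinely different from the one the paper relies on: Lemma 9 is quoted from \cite{FXF21}, and both there and in the analogous Lemma 10 of this paper the argument writes every element as a power $\theta^x$ and counts solutions of simultaneous congruences such as $x\equiv iv\pmod u$, $x\equiv jw\pmod v$ via the Chinese remainder theorem, obtaining $\frac{q-1}{\operatorname{lcm}(u,v)}$ solutions modulo $q-1$ for each admissible pair of indices. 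Your subgroup-lattice argument is shorter and more conceptual, since the structure of cosets in the cyclic group $\mathbb{F}_q^{*}$ does all the work and no explicit congruence solving is needed; the paper's CRT-style bookkeeping, on the other hand, scales more transparently to the three-subgroup setting of Lemma 10, where triple intersections $A_i\cap B_j\cap C_k$ are nonempty only when the indices satisfy nontrivial divisibility constraints (e.g. $\frac{\gcd(u,v)}{\gcd(u,v,w)}\mid j$), a phenomenon that is invisible in the two-set case you treated but is precisely what the congruence analysis is set up to detect.
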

Based on the sets defined by Eq. (\ref{eq15}) and Lemma \ref{lem9}, we provide some constructions of MDS Euclidean self-dual codes as follows.
\begin{theorem}\label{th4}
Let $q=r^2$ and $r \equiv 3\pmod 4$. Suppose that $u$ and $v$ are factors of $q-1$. Let $0\leq s\leq \frac{u}{\gcd(u,v)}$, $0\leq s^{\prime}\leq \frac{u}{\gcd(u,v)}$ and $0 \leq t\leq \frac{v}{\gcd(u,v)}$. Put
$$n=(s+s^{\prime})\frac{q-1}{u}+t\frac{q-1}{v}-2\frac{(q-1)\gcd(u,v)}{uv}st.$$
Suppose the following conditions hold: 

(1) both $u$ and $v$ are even;

(2) $2u|(r+1)v$, $v|(r-1)u$ and $4\nmid v$;

(3) both $\frac{(r+1)v}{2u}$ and $s+s^{\prime}$ are odd.\\
Then there exists a $q$-ary MDS Euclidean self-dual code of length $n$.

\end{theorem}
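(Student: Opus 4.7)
The plan is to apply Theorem~\ref{th1} to the three sets $A$, $B$, $C$ defined in Eq.~(\ref{eq15}), and then invoke Lemma~\ref{lem2}. Accordingly, form the evaluation set
$$S=(A\setminus(B\cup C))\cup(B\setminus(A\cup C))\cup(C\setminus(A\cup B))\cup(A\cap B\cap C)$$
as described in Remark~1(2).

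First I would show that $A\cap C=B\cap C=\emptyset$, and hence also $A\cap B\cap C=\emptyset$. The sets $A$ and $C$ are unions of distinct cosets of $\langle\alpha\rangle$ with representatives $\beta^{i}=\theta^{iv}$ and $\gamma^{2k+1}=\theta^{(2k+1)v/2}$, while $B$ is a union of cosets of $\langle\beta\rangle$; in every case the question reduces to whether an odd power of $\gamma$ can lie in $\langle\alpha\rangle$, equivalently whether $2u\mid(\text{odd})\cdot v$. A short $2$-adic argument using $u$ even and $4\nmid v$ rules this out. Combined with $|A\cap B|=\frac{(q-1)\gcd(u,v)}{uv}st$ from Lemma~\ref{lem9}, inclusion--exclusion gives $|S|=|A|+|B|+|C|-2|A\cap B|=n$, and the parity hypotheses in condition~(3), together with condition~(2), force $n$ to be even.

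Next, since $A\cap B\cap C=\emptyset$, the hypothesis of Theorem~\ref{th1} reduces to the three single-region identities
$$\eta\bigl(\delta_A(a)\pi_B(a)\pi_C(a)\bigr)=\eta\bigl(\pi_A(b)\delta_B(b)\pi_C(b)\bigr)=\eta\bigl(\pi_A(c')\pi_B(c')\delta_C(c')\bigr)=c$$
for some common $c\in\{1,-1\}$, where $a,b,c'$ range over the respective single regions. Using the coset formula $\pi_{\beta^{i}\langle\alpha\rangle}(x)=x^{(q-1)/u}-\beta^{i(q-1)/u}$ and its $B$-, $C$-analogues, combined with Lemma~\ref{lem1} to extract each $\delta$ from the relevant $\pi$, every product simplifies to a monomial in $\theta$ times a factor lying in $\mathbb{F}_{r}^{*}$. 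Since $\mathbb{F}_{r}^{*}\subseteq(\mathbb{F}_{q}^{*})^{2}$ and $\eta(-1)=1$ (as $q\equiv 1\pmod 4$), each $\eta$-value is determined solely by the parity of the $\theta$-exponent of that monomial.

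The main obstacle is the exponent bookkeeping in this last step: one must track the powers of $\theta$ contributed by $\pi_{A_i}$, $\pi_{B_j}$ and $\pi_{C_k}$ at elements drawn from the three different regions, and verify that the three resulting parities coincide. The divisibilities $2u\mid(r+1)v$ and $v\mid(r-1)u$ in condition~(2) are exactly what is needed to rewrite powers of $\beta$ in terms of $\alpha$ (and vice versa) modulo squares, the condition that $(r+1)v/(2u)$ is odd controls the sign contributed by the offset $\gamma$ in $\pi_C$, and $s+s'$ odd compensates for the residual parity coming from the number of coset factors appearing in $\pi_A\pi_C$. Once the three $\eta$-values agree, Theorem~\ref{th1} gives $\eta(\delta_S(e))=c$ for every $e\in S$, so Lemma~\ref{lem2} produces a $q$-ary MDS Euclidean self-dual code of length $n$.
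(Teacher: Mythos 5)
Your strategy coincides with the paper's: the same evaluation set $S$, the same appeal to Theorem~\ref{th1} and Lemma~\ref{lem2}, the observation that $C$ is disjoint from $A$ and $B$ so that only the three single-region identities must be checked (the paper argues this disjointness by noting that elements of $A,B$ are squares while elements of $C$ are non-squares, since $4\nmid v$ makes every exponent $(2k+1)\tfrac{v}{2}+ju$ odd; your $2$-adic reformulation is equivalent), the count $|S|=|A|+|B|+|C|-2|A\cap B|=n$ via Lemma~\ref{lem9}, and the reduction of each character value to the parity of a $\theta$-exponent using the coset formulas, Lemma~\ref{lem1}, $\mathbb{F}_r^{*}\subseteq(\mathbb{F}_q^{*})^{2}$ and $\eta(-1)=1$. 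Your remark that $n$ is automatically even is also correct, since condition (2) with $r\equiv3\pmod4$ forces $\frac{q-1}{u}$ and $\frac{q-1}{v}$ to be even.

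However, the proposal stops exactly where the proof of this theorem actually lives. The whole content of Theorem~\ref{th4} is the ``exponent bookkeeping'' you defer: one must show that the $A$- and $B$-region values both equal $\eta\bigl(\theta^{\frac{(r+1)v}{2u}s^{\prime}}\bigr)$ (Eqs.~\eqref{eq17}, \eqref{eq18}), whereas the $C$-region value is $-\eta\bigl(\theta^{\frac{(r+1)v}{2u}s}\bigr)$ (Eq.~\eqref{eq19}); the extra minus sign comes from $\delta_C$, because $\gamma^{2i+1}\alpha^{j}$ is a non-square raised to the power $\frac{q-1}{u}-1$, which is odd since condition (2) and $r\equiv3\pmod4$ make $\frac{q-1}{u}$ even. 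Your sketch never identifies this sign asymmetry between the regions, and your account of condition (3) (``$s+s^{\prime}$ odd compensates for the residual parity coming from the number of coset factors in $\pi_A\pi_C$'') misses the actual mechanism: with $\frac{(r+1)v}{2u}$ odd, $s+s^{\prime}$ odd makes the exponents $\frac{(r+1)v}{2u}s$ and $\frac{(r+1)v}{2u}s^{\prime}$ of opposite parity, which is precisely what cancels the $-1$ contributed by $\delta_C$ and makes the three values coincide. Without carrying out this computation one cannot explain why this theorem requires $s+s^{\prime}$ odd while Corollary~\ref{cor1} requires the opposite parity condition, so as written the argument has a genuine gap at its decisive step, even though the overall plan is sound.
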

\begin{proof}
Choose 
\begin{equation*}
S=\left(A\backslash \left(B\cup C\right)\right)\bigcup \left(B\backslash \left(A\cup C\right)\right)\bigcup \left(C\backslash \left(A\cup B\right)\right)\bigcup\left(A\cap B\cap C\right),    
\end{equation*}
where $A$, $B$ and $C$ are defined as in Eq. (\ref{eq15}).
Firstly, we determine the size of $S$. It follows from Lemma \ref{lem9} that $|A\cap B|=\frac{(q-1)\gcd(u,v)}{uv}st$. Similar to the proof for Lemma \ref{lem9}, $|C|=s^{\prime}\frac{q-1}{u}$ can be obtained. Since each element of $A$ is a square in $\mathbb{F}_q$ and each element of $C$ is a non-square in $\mathbb{F}_q$, we can deduce that $A\cap C=\emptyset$. Similarly, $B\cap C=\emptyset$. Hence,
$$|S|=|A|+|B|+|C|-2|A\cap B|=n.$$

Secondly, based on Theorem \ref{th1}, we will show that $S$ is an evaluation set satisfying the conditions in Lemma \ref{lem2} or \ref{lem3}. For the convenience of later calculations, we make some preparations in the following.
For any $0\leq i\leq s-1$, $0\leq j\leq t-1$ and $0\leq k\leq s^{\prime}-1$, 
 $$\pi_{A_i}(x)=\prod_{l=0}^{\frac{q-1}{u}-1}(x-\beta^i\alpha^l)=x^{\frac{q-1}{u}}-\beta^{\frac{q-1}{u}i}, \ \delta_{A_i}(x)=\frac{q-1}{u}x^{\frac{q-1}{u}-1},$$
 $$\pi_{B_j}(x)=\prod_{l=0}^{\frac{q-1}{v}-1}(x-\alpha^j\beta^l)=x^{\frac{q-1}{v}}-\alpha^{\frac{q-1}{v}j}, \ \delta_{B_j}(x)=\frac{q-1}{v}x^{\frac{q-1}{v}-1},$$
$$\pi_{C_k}(x)=\prod_{l=0}^{\frac{q-1}{u}}(x-\gamma^{2k+1}\alpha^l)=x^{\frac{q-1}{u}}-\gamma^{\frac{q-1}{u}(2k+1)}, \ \delta_{C_k}(x)=\frac{q-1}{u}x^{\frac{q-1}{u}-1}.$$
~~\textbf{Step 1}: For any $\beta^i\alpha^j\in A \backslash (B \cup C)$, we will calculate $\eta\left(\delta_A(\beta^i\alpha^j)\pi_B(\beta^i\alpha^j)\pi_C(\beta^i\alpha^j)\right)$, where $0\leq i \leq s-1$, $0\leq j\leq \frac{q-1}{u}-1$. By Lemma \ref{lem1},
\begin{eqnarray*}
     \delta_{A}(\beta^i\alpha^j)&=& \delta_{A_i}(\beta^i\alpha^j)\prod_{l=0, l\neq i}^{s-1}\pi_{A_l}(\beta^i\alpha^j)\nonumber \\
     &=&\frac{q-1}{u}\left(\beta^i\alpha^j\right)^{\frac{q-1}{u}-1}\prod_{l=0, l\neq i}^{s-1}\left(\theta^{\frac{q-1}{u}iv}-\theta^{\frac{q-1}{u}lv}\right).
\end{eqnarray*}
Denote $\Omega=\prod\limits_{l=0, l\neq i}^{s-1}\left(\theta^{\frac{q-1}{u}iv}-\theta^{\frac{q-1}{u}lv}\right)$. Since $u\mid (r+1)v$,
 $\left(\theta^{\frac{q-1}{u}iv}\right)^{r+1}$ $=\left(\theta^{\frac{(r+1)v}{u}i}\right)^{q-1}=1$.
 Hence,
\begin{eqnarray*}
  \Omega^r&=&\prod_{l=0, l\neq i}^{s-1}\left(\left(\theta^{\frac{q-1}{u}iv}\right)^r-\left(\theta^{\frac{q-1}{u}lv}\right)^r\right)\\
  &=&\prod_{l=0, l\neq i}^{s-1}\left(\theta^{-\frac{q-1}{u}iv}-\theta^{-\frac{q-1}{u}lv}\right)\\
  &=&\prod_{l=0, l\neq i}^{s-1}\theta^{-\frac{q-1}{u}v(i+l)}\left(\theta^{\frac{q-1}{u}lv}-\theta^{\frac{q-1}{u}iv}\right)\\
  &=&(-1)^{s-1}\theta^{-\frac{q-1}{u}v\left((s-2)i+\frac{s(s-1)}{2}\right)}\Omega,
\end{eqnarray*}
which implies that there exists an integer $k$ such that
\begin{equation*}
\Omega=\theta^{\frac{r+1}{2}(s-1)-\frac{v(r+1)}{u}\left((s-2)i+\frac{s(s-1)}{2}\right)+k(r+1)}.    
\end{equation*}
Thus,  
\begin{equation}
 \delta_{A}(\beta^i\alpha^j)=\frac{q-1}{u}\theta^{e},\label{eq16}
\end{equation}
where $e=(\frac{q-1}{u}-1)vi-uj+\frac{r+1}{2}(s-1)-\frac{v(r+1)}{u}\left((s-2)i+\frac{s(s-1)}{2}\right)+k(r+1)$.
By the conditions (1) and (2), $u$, $v$ and $\frac{v(r+1)}{u}$ are even.  Since $r \equiv 3\pmod 4$, $\frac{r+1}{2}$ is even. Thus, from Eq. (\ref{eq16}), 
$\eta( \delta_{A}(\beta^i\alpha^j))=1$ can be obtained.
Based on Lemma \ref{lem1},
 $$\pi_B(\beta^i\alpha^j)=\prod_{l=0}^{t-1}\pi_{B_l}(\beta^i\alpha^j)=\prod_{l=0}^{t-1}\left(\theta^{\frac{q-1}{v}uj}-\theta^{\frac{q-1}{v}ul}\right).$$
 Since $v\mid u(r-1)$, we have
 $\left(\theta^{\frac{q-1}{v}uj}\right)^{r-1}=\left(\theta^{\frac{u(r-1)}{v}j}\right)^{q-1}=1$ and $\left(\theta^{\frac{q-1}{v}ul}\right)^{r-1}=1$,
which implies that $\pi_B(\beta^i\alpha^j)\in \mathbb{F}_r^{*}$, i.e., $\eta\left(\pi_B(\beta^i\alpha^j)\right)=1$.
According to Lemma \ref{lem1},
\begin{eqnarray*}
  \pi_{C}(\beta^i\alpha^j)&=&\prod_{l=0}^{s^{\prime}-1}\pi_{C_l}(\beta^i\alpha^j)
  =\prod_{l=0}^{s^{\prime}-1}\left(\theta^{\frac{q-1}{u}vi}-\theta^{\frac{q-1}{u}\cdot\frac{v}{2}(2l+1)}\right).
\end{eqnarray*}
Since $2u\mid(r+1)v$, we have
 $$(\theta^{\frac{q-1}{u}vi})^{r+1}=(\theta^{\frac{v(r+1)}{u}i})^{q-1}=1,
 \ (\theta^{\frac{q-1}{u}\cdot\frac{v}{2}(2l+1)})^{r+1}=(\theta^{\frac{v(r+1)}{2u}(2l+1)})^{q-1}=1.$$
Similarly, there exists an integer $k$ such that
 $$\pi_{C}(\beta^i\alpha^j)=\theta^{\frac{r+1}{2}s^{\prime}-\frac{(r+1)v}{2u}\left(2is^{\prime}+s^{\prime}+(s^{\prime}-1)s^{\prime}\right)+k(r+1)}.$$
From $r \equiv 3\pmod 4$, we can deduce that $\eta\left(\pi_{C}(\beta^i\alpha^j)\right)=\eta\left(\theta^{\frac{(r+1)v}{2u}s^{\prime}}\right)$. Hence,
\begin{equation}
  \eta\left(\delta_{A}(\beta^i\alpha^j)\pi_B(\beta^i\alpha^j)\pi_{C}(\beta^i\alpha^j)\right)=\eta\left(\theta^{\frac{(r+1)v}{2u}s^{\prime}}\right).  \label{eq17} 
\end{equation}
~~\textbf{Step 2}: For any $\alpha^i\beta^j\in B \backslash (A \cup C)$, we will calculate $\eta\left(\pi_A(\alpha^i\beta^j)\delta_B(\alpha^i\beta^j)\pi_C(\alpha^i\beta^j)\right)$, where $0\leq i \leq t-1$, $0\leq j\leq \frac{q-1}{v}-1$. By Lemma \ref{lem1},
\begin{eqnarray*}
 \pi_A(\alpha^i\beta^{j})=\prod_{l=0}^{s-1}\pi_{A_l}(\alpha^i\beta^{j})
 =\prod_{l=0}^{s-1}(\theta^{\frac{q-1}{u}vj}-\theta^{\frac{q-1}{u}vl}).
\end{eqnarray*}
Similarly, since $u\mid (r+1)v$, there exists an integer $k$ such that
$$\pi_A(\alpha^i\beta^{j})=\theta^{\frac{r+1}{2}s-\frac{v(r+1)}{u}\left(sj+\frac{s(s-1)}{2}\right)+k(r+1)}.$$
From $r \equiv 3\pmod 4$ and $2u\mid(r+1)v$,  we have $\eta(\pi_A(\alpha^i\beta^{j})=1$.
By Lemma \ref{lem1} again, 
\begin{eqnarray*}
 \delta_B(\alpha^i\beta^j)&=&\pi_{B_i}^{\prime}(\alpha^i\beta^j)\prod_{l=0, l\neq i}^{t-1}\pi_{B_l}(\alpha^i\beta^j)\\
 &=&\frac{q-1}{v}\left(\alpha^i\beta^j\right)^{\frac{q-1}{v}-1}\prod_{l=0, l\neq i}^{t-1}\left(\theta^{\frac{q-1}{v}iu}-\theta^{\frac{q-1}{v}lu}\right).
\end{eqnarray*}
Denote $\Omega=\prod\limits_{l=0, l\neq i}^{t-1}\left(\theta^{\frac{q-1}{v}iu}-\theta^{\frac{q-1}{v}lu}\right)$. Since $v\mid (r-1)u$, $\Omega\in \mathbb{F}_r^{*}$, which implies that
 $\eta\left(\delta_B(\alpha^i\beta^j)\right)=1.$
According to  Lemma \ref{lem1},
 $$\pi_{C}(\alpha^i\beta^j)=\prod_{l=0}^{s^{\prime}-1}\left(\theta^{\frac{q-1}{u}vj}-\theta^{\frac{q-1}{u}\cdot\frac{v}{2}(2l+1)}\right).$$
Since $2u\mid(r+1)v$, there exists an integer $k$ such that
 $$\pi_{C}(\alpha^i\beta^j)=\theta^{\frac{r+1}{2}s^{\prime}-\frac{(r+1)v}{2u}\left(2js^{\prime}+s^{\prime}+(s^{\prime}-1)s^{\prime}\right)+k(r+1)}.$$
 Again by $r \equiv 3\pmod 4$, we can show that
\begin{equation}
  \eta\left(\pi_{A}(\alpha^i\beta^j)\delta_B(\alpha^i\beta^j)\pi_{C}(\alpha^i\beta^j)\right)=\eta\left(\theta^{\frac{(r+1)v}{2u}s^{\prime}}\right).  \label{eq18} 
\end{equation}
~~\textbf{Step 3}: For any $\gamma^{2i+1}\alpha^j\in C \backslash (A \cup B)$, we will calculate $\eta\left(\pi_A(\gamma^{2i+1}\alpha^j)\pi_B(\gamma^{2i+1}\alpha^j)\delta_C(\gamma^{2i+1}\alpha^j)\right)$, where $0\leq i \leq s^{\prime}-1$, $0\leq j\leq \frac{q-1}{u}-1$. By Lemma \ref{lem1},
\begin{eqnarray*}
\pi_A(\gamma^{2i+1}\alpha^j)
&=&\prod_{l=0}^{s-1}(\theta^{\frac{v}{2}\cdot\frac{q-1}{u}(2i+1)}-\theta^{\frac{q-1}{u}lv}).
\end{eqnarray*}
Since $2u\mid(r+1)v$, there exists an integer $k$ such that
$$\pi_A(\gamma^{2i+1}\alpha^j)=\theta^{\frac{r+1}{2}s-\frac{v(r+1)}{2u}\left((2i+1)s+s(s-1)\right)+k(r+1)}.$$
Similarly, we have $\eta\left(\pi_A(\gamma^{2i+1}\alpha^j)\right)=\eta\left(\theta^{\frac{(r+1)v}{2u}s}\right).$
Since $v\mid u(r-1)$, we can deduce that
\begin{eqnarray*}
 \pi_B(\gamma^{2i+1}\alpha^j)&=&\prod_{l=0}^{t-1}\pi_{B_l}(\gamma^{2i+1}\alpha^j)\\
 &=&\prod_{l=0}^{t-1}\left((\theta^{\frac{r-1}{2}(2i+1)+\frac{(r-1)u}{v}j})^{r+1}-(\theta^{\frac{(r-1)u}{v}l})^{r+1}\right)\in \mathbb{F}_r^{*}.
\end{eqnarray*} 
Hence, we can get $\eta\left(\pi_B(\gamma^{2i+1}\alpha^j)\right)=1.$
By Lemma \ref{lem1} again,
\begin{eqnarray*}
     \delta_{C}(\gamma^{2i+1}\alpha^j)&=& \pi_{C_i}^{\prime}(\gamma^{2i+1}\alpha^j)\prod_{l=0, l\neq i}^{s^{\prime}-1}\pi_{C_l}(\gamma^{2i+1}\alpha^j)\\
     &=&\frac{q-1}{u}\left(\gamma^{2i+1}\alpha^j\right)^{\frac{q-1}{u}-1}\prod_{l=0, l\neq i}^{s^{\prime}-1}\left(\theta^{\frac{v}{2}\cdot\frac{q-1}{u}(2i+1)}-\theta^{\frac{v}{2}\cdot\frac{q-1}{u}(2l+1)}\right).
\end{eqnarray*}
Denote $\Omega=\prod\limits_{l=0,l\neq i}^{s^{\prime}-1}\left(\theta^{\frac{v}{2}\cdot\frac{q-1}{u}(2i+1)}-\theta^{\frac{v}{2}\cdot\frac{q-1}{u}(2l+1)}\right)$. Since $2u\mid(r+1)v$, there exists an integer $k$ such that 
 $$\Omega=\theta^{\frac{r+1}{2}(s^{\prime}-1)-\frac{v(r+1)}{2u}\left(2is^{\prime}+(s^{\prime}-1)s^{\prime}+2s^{\prime}-4i-2\right)+k(r+1)},$$
we can deduce that $\eta(\Omega)=1$. By the condition (2), $\frac{q-1}{u}-1$ is odd, which implies that
$\eta\left(\delta_{C}(\gamma^{2i+1}\alpha^j)\right)=-1$.
Furthermore, we have
\begin{equation}
 \eta\left(\pi_A(\gamma^{2i+1}\alpha^j)\pi_B(\gamma^{2i+1}\alpha^j)\delta_C(\gamma^{2i+1}\alpha^j)\right)=-\eta\left(\theta^{\frac{(r+1)v}{2u}s}\right).\label{eq19} 
\end{equation}
When both $\frac{(r+1)v}{2u}$ and $s+s^{\prime}$ are odd, we have $\eta\left(\theta^{\frac{(r+1)v}{2u}s^{\prime}}\right)=-\eta\left(\theta^{\frac{(r+1)v}{2u}s}\right)$. According to Theorem \ref{th1} and Eqs. (\ref{eq17}), (\ref{eq18}) and (\ref{eq19}), $\eta(\delta_S(e))=\eta\left(\theta^{\frac{(r+1)v}{2u}s^{\prime}}\right)$ for any $e\in S$. By Lemma \ref{lem2}, there exists a $q$-ary MDS Euclidean self-dual code of length $n$.
\end{proof}
 When we slightly modify the conditions in Theorem \ref{th4}  and add the zero element into the set $A$, MDS Euclidean self-dual code of length $n+2$ are obtained.
\begin{corollary}\label{cor1}
Let $q=r^2$ and $r \equiv 3 \pmod 4$. Suppose $u$ and $v$ are factors of $q-1$. Let $0\leq s\leq \frac{u}{\gcd(u,v)}$, $0\leq s^{\prime}\leq \frac{u}{\gcd(u,v)}$ and $0 \leq t\leq \frac{v}{\gcd(u,v)}$. Put
$$n=(s+s^{\prime})\frac{q-1}{u}+t\frac{q-1}{v}-2\frac{(q-1)\gcd(u,v)}{uv}st.$$
Suppose the following conditions hold: 

(1) both $u$ and $v$ are even;

(2) $2u|(r+1)v$, $v|(r-1)u$ and $4\nmid v$;

(3) both $s$ and $s^{\prime}$ are even or $\frac{(r+1)v}{2u}$ is even.\\
Then there exists a $q$-ary MDS Euclidean self-dual code of length $n+2$.
\end{corollary}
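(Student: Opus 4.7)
My plan is to imitate the proof of Theorem~\ref{th4} but replace $A$ by $A' := A \cup \{0\}$. Since $0 \notin B \cup C$, this enlarges the evaluation set by exactly one element, producing $|S| = n+1$ (odd) rather than $n$ (even). The target is therefore the hypothesis of Lemma~\ref{lem3}, which would yield an MDS Euclidean self-dual code of length $|S|+1 = n+2$. Explicitly, I set
$$S = \bigl(A' \setminus (B \cup C)\bigr) \cup \bigl(B \setminus (A' \cup C)\bigr) \cup \bigl(C \setminus (A' \cup B)\bigr) \cup (A' \cap B \cap C),$$
and verify $|S| = n + 1$ by the same disjointness arguments as in Theorem~\ref{th4} together with $0 \in A' \setminus (B \cup C)$.

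Next, I apply Theorem~\ref{th1} with $c = 1$. For nonzero elements of $S$, the relevant products differ from those of Theorem~\ref{th4} only by an extra factor $e$ coming from $\pi_{\{0\}}(e)$ inside $\pi_{A'}$ or $\delta_{A'}$. Since $u, v$ are even, any $e \in A \cup B$ has the form $\theta^{(\text{even})}$, so $\eta(e) = 1$, and Steps~1 and~2 of Theorem~\ref{th4} still produce the value $\eta(\theta^{(r+1)vs'/(2u)})$. For $e \in C$, written $e = \gamma^{2i+1}\alpha^j$, the condition $4 \nmid v$ makes $v/2$ odd, so $\eta(e) = -1$; this cancels the leading minus sign appearing in Step~3 of Theorem~\ref{th4} and converts that value to $\eta(\theta^{(r+1)vs/(2u)})$.

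The new point $e = 0$ requires a direct computation of $\eta\bigl(\pi_A(0)\pi_B(0)\pi_C(0)\bigr)$, since $\delta_{A'}(0) = \pi_A(0)$. Using $\pi_{A_i}(0) = -\beta^{(q-1)i/u}$, $\pi_{B_j}(0) = -\alpha^{(q-1)j/v}$ and $\pi_{C_k}(0) = -\gamma^{(q-1)(2k+1)/u}$, the full product equals $\pm \theta^{E}$, where the exponent $E$ splits into three contributions, each of which can be rewritten so that it carries either a factor $(r-1)$ (even) or $(r+1)/2$ (even, since $r+1 \equiv 0 \pmod 4$); hence $E$ is even. As $-1$ is a square in $\mathbb{F}_{r^2}$, this yields $\eta$-value $1$.

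Finally, Corollary~\ref{cor1}'s condition (3)---either $s, s'$ both even, or $(r+1)v/(2u)$ even---forces both $\eta(\theta^{(r+1)vs'/(2u)})$ and $\eta(\theta^{(r+1)vs/(2u)})$ to equal $1$, so all four required values in Theorem~\ref{th1} are $1$. Thus $\eta(\delta_S(e)) = 1$ for every $e \in S$, and $\eta(-\delta_S(e)) = \eta(-1) = 1$ in $\mathbb{F}_{r^2}$. Lemma~\ref{lem3} then produces the desired MDS Euclidean self-dual code of length $n+2$. The main delicate point is the $e = 0$ computation together with the observation that the $\eta(e) = -1$ on $C$ exactly absorbs the sign from Step~3 of Theorem~\ref{th4}; this is precisely why the odd-odd requirement in Theorem~\ref{th4} relaxes to the more flexible condition (3) of Corollary~\ref{cor1}.
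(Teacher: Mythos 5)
Your proposal is correct and follows essentially the same route as the paper's own proof (Appendix B): adjoin $0$ to $A$, note that the extra evaluation-point factor is a square for points of $A\cup B$ (both $u,v$ even) but a non-square for points of $C$ (since $4\nmid v$ makes $\gamma$ a non-square), which absorbs the sign in Eq.~(19), compute $\eta\bigl(\delta_{A'}(0)\pi_B(0)\pi_C(0)\bigr)=1$ directly, and conclude via Theorem~\ref{th1} and Lemma~\ref{lem3}. No substantive differences or gaps.
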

\begin{proof}
Please see \ref{app B} for the proof.
\end{proof}
By modifying the constraints on the two factors $u$ and $v$ proposed in Corollary \ref{cor1},  we can also  provide the following construction of MDS Euclidean self-dual codes with length $n+1$ or $n+2$.
\begin{theorem}\label{th5}
Let $q=r^2$ and $r\equiv 3\pmod 4$. Suppose that $u$ and $v$ are factors of $q-1$. Let $0\leq s, s^{\prime}\leq \frac{u}{\gcd(u,v)}$ and  $0\leq t\leq \frac{v}{\gcd(u,v)}$. Put 
$$n=(s+s^{\prime})\frac{q-1}{u}+t\frac{q-1}{v}-\frac{2(q-1)\gcd(u,v)}{uv}st.$$
Suppose the following conditions hold:

(1) $2^l|u$,  $v \equiv2^l\pmod {2^{l+1}}$ for some $l\geq 2$ ;

(2) $2u|v(r+1)$, $v|u(r-1)$;

(3) both $s$ and $s^{\prime}$ are even or $\frac{v(r+1)}{2u}$ is even.\\
When $n$ is odd, there exists a $q$-ary MDS Euclidean self-dual code of length $n+1$. When $n$ is even, there exists a $q$-ary MDS Euclidean self-dual code of length $n+2$.
\end{theorem}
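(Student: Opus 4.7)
The plan is to parallel the construction and three-step quadratic-character analysis of Theorem \ref{th4}, together with the zero-augmentation trick of Corollary \ref{cor1}, adjusting every parity argument to reflect hypothesis (1), which forces $v_2(u) \ge l \ge 2$ and $v_2(v) = l$. First I would define $A$, $B$, $C$ by Eq.~(\ref{eq15}) and take
\[
S = \bigl(A \setminus (B \cup C)\bigr) \cup \bigl(B \setminus (A \cup C)\bigr) \cup \bigl(C \setminus (A \cup B)\bigr) \cup (A \cap B \cap C).
\]
A short $2$-adic argument shows $A \cap C = B \cap C = \emptyset$: every exponent of $\theta$ representing an element of $A \cup B$ is divisible by $2^l$, whereas every exponent representing an element of $C$ has $2$-adic valuation exactly $l-1$, and $v_2(q-1) \ge v_2(u) \ge l$, so these exponents cannot be congruent modulo $q-1$. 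In particular $A \cap B \cap C = \emptyset$, so $|S| = |A| + |B| + |C| - 2|A \cap B| = n$ by Lemma \ref{lem9}.

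Next, I would apply Theorem \ref{th1} with $c = 1$, verifying
\[
\eta\bigl(\delta_A(a)\pi_B(a)\pi_C(a)\bigr) = \eta\bigl(\pi_A(b)\delta_B(b)\pi_C(b)\bigr) = \eta\bigl(\pi_A(c)\pi_B(c)\delta_C(c)\bigr) = 1
\]
for the three types of points in $S$. The factor-by-factor expansions are structurally identical to Steps~1--3 of the proof of Theorem \ref{th4}, yielding expressions of the form $\eta(\theta^{\frac{v(r+1)}{2u}s})$ and $\eta(\theta^{\frac{v(r+1)}{2u}s'})$. The decisive new feature is that hypothesis (1) gives $4 \mid v$, so $\gamma = \theta^{v/2}$ is itself a square and hence every element of $C$ is a square; consequently the factor $\eta\bigl((\gamma^{2i+1}\alpha^j)^{\frac{q-1}{u}-1}\bigr)$, which produced the residual $-1$ in Step~3 of Theorem \ref{th4}, equals $+1$ here. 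Under condition (3), the residual characters $\eta(\theta^{\frac{v(r+1)}{2u}s})$ and $\eta(\theta^{\frac{v(r+1)}{2u}s'})$ are both $+1$, so $\eta(\delta_S(e)) = 1$ for every $e \in S$.

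If $n$ is odd, then $|S| = n$ is odd and Lemma \ref{lem3} immediately delivers a $q$-ary MDS Euclidean self-dual code of length $n+1$. If $n$ is even, I would adjoin $0$ as in Corollary \ref{cor1}: using Lemma \ref{lem1} together with the closed forms for $\pi_{A_i}(0)$, $\pi_{B_j}(0)$, $\pi_{C_k}(0)$, a short calculation (exploiting $\eta(-1) = 1$ and the fact that every element of $A \cup B \cup C$ is a square in $\mathbb{F}_q^*$) shows $\eta(-\delta_{S\cup\{0\}}(e)) = 1$ for all $e \in S \cup \{0\}$, and Lemma \ref{lem3} then yields a code of length $n+2$. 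The main obstacle is careful $2$-adic bookkeeping in the three steps: one must check that under $v_2(u) \ge l$, $v_2(v) = l$, $l \ge 2$, and $r \equiv 3 \pmod 4$, every exponent of $\theta$ arising in the expansions of $\delta_A, \pi_A, \delta_B, \pi_B, \delta_C, \pi_C$ has the correct parity, so that condition (3) collapses all three characters to $+1$ simultaneously, and that the contribution from the adjoined $0$ is compatible.
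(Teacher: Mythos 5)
Your proposal is correct and follows essentially the same route as the paper: the same sets $A,B,C$ from Eq.~(\ref{eq15}), the same parity/$2$-adic argument for $A\cap C=B\cap C=\emptyset$, the same reuse of the Step~1--3 computations of Theorem~\ref{th4} with the key observation that $4\mid v$ makes $\gamma$ a square and removes the residual $-1$, and the same appeal to Theorem~\ref{th1} and Lemmas~\ref{lem2}/\ref{lem3}. The only (harmless) difference is bookkeeping of the element $0$: the paper adjoins $0$ to $A$ from the outset so that $|S|=n+1$ in both parity cases, whereas you keep $|S|=n$ and adjoin $0$ only when $n$ is even, which yields the same conclusions.
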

\begin{proof} 
Choose $$S=(A^{\prime}\backslash (B\cup C))\bigcup (B\backslash (A^{\prime}\cup C))\bigcup (C\backslash (A^{\prime}\cup B))\bigcup(A^{\prime}\cap B\cap C),$$ 
where $A$, $B$ and $C$ are defined as in Eq. (\ref{eq18}) and $A^{\prime}=A\cup\{0\}$.
Firstly, we determine the size of $S$. By Lemma 6 of \cite{FXF21}, $|A\cap B|=\frac{(q-1)\gcd(u,v)}{uv}st$. Next we will prove $A\cap C=\emptyset$. Otherwise, let $e \in A\cap C$, then there exist some integers $0\leq i_1\leq s-1$, $1\leq j_1\leq \frac{q-1}{u}$, $0\leq i_2\leq s^{\prime}-1$ and $1\leq j_2\leq \frac{q-1}{u}$ such that
\begin{equation}
 e=\beta^{i_1}\alpha^{j_1}=\gamma^{2i_2+1}\alpha^{j_2}. \label{eq20}  
\end{equation}
Note that Eq. (\ref{eq20}) is equivalent to
$$\theta^{u(j_1-j_2)+v(i_1-i_2-\frac{1}{2})}=1.$$    
Thus, $u|\frac{v}{2}(2i_1-2i_2-1)$, which implies that $\frac{u}{\gcd(u,\frac{v}{2})}|(2i_1-2i_2-1)$. Since $2^l\nmid \frac{v}{2}$ and $2^l\mid u$, we have $2\mid \frac{u}{\gcd(u,\frac{v}{2})}$, implying $2\mid (2i_1-2i_2-1)$, which is a contradiction. Hence, $A\cap C=\emptyset$. Similarly, $B\cap C=\emptyset$.
Hence,
$$|S|=|A|+|B|+|C|-2|A\cap B|+1=n+1.$$

Secondly, based on Theorem \ref{th1}, we will show that $S$ is an evaluation set satisfying the conditions of Lemma \ref{lem2} or \ref{lem3}.

Since $2^l|u$ and $v \equiv 2^l \pmod {2^{l+1}}$ for some $l\geq 2$, both $u$ and $v$ are even. For any $\beta^i\alpha^j\in A \backslash (B \cup C)$, where $0\leq i \leq s-1$, $0\leq j\leq \frac{q-1}{u}-1$, it follows from Eq. (\ref{eq17}) and the condition (2) that 
\begin{equation}
  \eta\left(\delta_{A}(\beta^i\alpha^j)\pi_B(\beta^i\alpha^j)\pi_{C}(\beta^i\alpha^j)\right)=\eta\left(\theta^{\frac{(r+1)v}{2u}s^{\prime}}\right).\label{eq21} 
\end{equation}
Similarly, for any $\alpha^i\beta^j\in B \backslash (A \cup C)$, where $0\leq i \leq t-1$, $0\leq j\leq \frac{q-1}{v}-1$,
\begin{equation*}
  \eta\left(\pi_{A}(\alpha^i\beta^j)\delta_B(\alpha^i\beta^j)\pi_{C}(\alpha^i\beta^j)\right)=\eta\left(\theta^{\frac{(r+1)v}{2u}s^{\prime}}\right).   
\end{equation*}
By $v \equiv 2^l \pmod {2^{l+1}}$ for some $l\geq 2$,  we can deduce that $4|v$. Thus, $\gamma$ is a square element in $\mathbb{F}_q$. According to the proof of Theorem \ref{th4} and the condition (2), for any $\gamma^{2i+1}\alpha^j\in C \backslash (A \cup B)$, where $0\leq i \leq s^{\prime}-1$, $0\leq j\leq \frac{q-1}{u}-1$,
\begin{equation*}
 \eta\left(\pi_A(\gamma^{2i+1}\alpha^j)\pi_B(\gamma^{2i+1}\alpha^j)\delta_C(\gamma^{2i+1}\alpha^j)\right)=\eta\left(\theta^{\frac{(r+1)v}{2u}s}\right). 
\end{equation*}

 For any $a\in A^{\prime} \backslash (B \cup C)$, if $a=0$, then by the proof of Corollary \ref{cor1}, 
\begin{equation*}
 \eta\left(\delta_{A^{\prime}}(0)\pi_B(0)\pi_C(0)\right)=1. 
\end{equation*}
Otherwise, for any $\beta^i\alpha^j\in A \backslash (B \cup C)$, where $0\leq i \leq s-1$, $0\leq j\leq \frac{q-1}{u}-1$,
\begin{equation*}
\delta_{A^{\prime}}(\beta^i\alpha^j)\pi_B(\beta^i\alpha^j)\pi_{C}(\beta^i\alpha^j)=\beta^{i}\alpha^{j} \cdot \delta_{A}(\beta^i\alpha^j)\pi_B(\beta^i\alpha^j)\pi_{C}(\beta^i\alpha^j).   
\end{equation*}
According to the condition (1) and Eq. (\ref{eq21}),
\begin{equation*}
  \eta\left(\delta_{A^{\prime}}(\beta^i\alpha^j)\pi_B(\beta^i\alpha^j)\pi_{C}(\beta^i\alpha^j)\right)=\eta\left(\theta^{\frac{(r+1)v}{2u}s^{\prime}}\right).   
\end{equation*}
Similarly, for any $\alpha^i\beta^j\in B \backslash (A \cup C)$, where $0\leq i \leq t-1$, $0\leq j\leq \frac{q-1}{v}-1$,
\begin{equation*}
  \eta\left(\pi_{A^{\prime}}(\alpha^i\beta^j)\delta_B(\alpha^i\beta^j)\pi_{C}(\alpha^i\beta^j)\right)=\eta\left(\theta^{\frac{(r+1)v}{2u}s^{\prime}}\right).   
\end{equation*}
Finally, for any $\gamma^{2i+1}\alpha^j\in C \backslash (A \cup B)$, where $0\leq i \leq s^{\prime}-1$, $0\leq j\leq \frac{q-1}{u}-1$,
\begin{equation*}
 \pi_{A^{\prime}}(\gamma^{2i+1}\alpha^j)\pi_B(\gamma^{2i+1}\alpha^j)\delta_C(\gamma^{2i+1}\alpha^j)=\gamma^{2i+1}\alpha^j \cdot \pi_{A}(\gamma^{2i+1}\alpha^j)\pi_B(\gamma^{2i+1}\alpha^j)\delta_C(\gamma^{2i+1}\alpha^j). 
\end{equation*}
Since  $\gamma$ is a square element in $\mathbb{F}_q$, we obtain
\begin{equation*}
 \eta\left(\pi_{A^{\prime}}(\gamma^{2i+1}\alpha^j)\pi_B(\gamma^{2i+1}\alpha^j)\delta_C(\gamma^{2i+1}\alpha^j)\right)=\eta\left(\theta^{\frac{(r+1)v}{2u}s}\right). 
\end{equation*}

When $\frac{(r+1)v}{2u}$ is even or both $s$ and $s^{\prime}$ are even, we can deduce that $\eta\left(\theta^{\frac{(r+1)v}{2u}s^{\prime}}\right)=\eta\left(\theta^{\frac{(r+1)v}{2u}s}\right)=1$. According to Theorem \ref{th1}, we have $\eta(\delta_S(e))=\eta(-\delta_S(e))=1$ for any $e\in S$. When $n$ is odd, by Lemma \ref{lem2}, there exists a $q$-ary MDS Euclidean self-dual code of length $n+1$. When $n$ is even, by Lemma \ref{lem3}, there exists a $q$-ary MDS Euclidean self-dual code of length $n+2.$
\end{proof}
\subsection{The general case of three sets}
In this subsection, we let 
$$\alpha=\theta^{u}, \beta=\theta^{v}, \gamma=\theta^{w},$$
where $u$, $v$ and $w$ are three divisors of $q-1$. Let $\langle\alpha\rangle$, $\langle\beta\rangle$ and $\langle\gamma\rangle$ be the three multiplicative subgroups of $\mathbb{F}_q^{*}$ generated by $\alpha$, $\beta$ and $\gamma$, respectively. Suppose $0\leq s\leq \frac{u}{\gcd(u,v)}$,  $0\leq t\leq \frac{v}{\gcd(v,w)}$ and  $0\leq f\leq \frac{w}{\gcd(w,u)}$. Denote $A_i=\beta^i\langle\alpha\rangle$,                                    $B_j=\gamma^j\langle\beta\rangle$ and $C_k={\alpha}^{k}\langle\gamma\rangle$. Let 
\begin{equation}
 A\triangleq \bigcup_{i=0}^{s-1}A_i,\ B \triangleq \bigcup_{j=0}^{t-1}B_j, \ C\triangleq \bigcup_{k=0}^{{f-1}} C_k.  \label{eq22}
\end{equation}

{In Lemma \ref{lem9}, the authors given the sizes of $A$, $B$ and $C$, then the following lemma gives the cardinality of $A\cap B$, $B\cap C$, $C\cap A$ and $A\cap B\cap C$.}
\begin{lemma}\label{lem10}
 Keep the above notations. Let $s^{\prime}=\left\lfloor\frac{(s-1)\gcd(u,v,w)}{\gcd(w,u)}\right\rfloor+1$, $t^{\prime}=\left\lfloor\frac{(t-1)\gcd(u,v,w)}{\gcd(u,v)}\right\rfloor+1$ and $f^{\prime}=\left\lfloor\frac{(f-1)\gcd(u,v,w)}{\gcd(v,w)}\right\rfloor+1$,
 then 
 $$|A\cap B|=\frac{(q-1)}{\operatorname{lcm}(u,v)}st^{\prime}, \ |B\cap C|=\frac{(q-1)}{\operatorname{lcm}(v,w)}tf^{\prime}, \ |C\cap A|=\frac{(q-1)}{\operatorname{lcm}(w,u)}fs^{\prime}$$
 and
 $$|A\cap B\cap C|=\frac{(q-1)}{\operatorname{lcm}(u,v,w) }s^{\prime}t^{\prime}f^{\prime}.$$
\end{lemma}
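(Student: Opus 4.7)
The plan is to compute each intersection by decomposing the three sets into their defining cosets and analyzing coset intersections inside the abelian group $\mathbb{F}_q^*$. The hypotheses $s\le u/\gcd(u,v)$, $t\le v/\gcd(v,w)$, $f\le w/\gcd(w,u)$ guarantee that the cosets $A_i=\beta^i\langle\alpha\rangle$ (and similarly $B_j$, $C_k$) are pairwise distinct, so each of $A$, $B$, $C$ is an honest disjoint union of its defining cosets and $A\cap B=\bigsqcup_{i,j}(A_i\cap B_j)$, with the analogous disjoint decomposition for the triple intersection.

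For the pairwise case I would use the standard fact that in an abelian group the intersection of two cosets of subgroups is either empty or a coset of the intersection of the subgroups. Thus $A_i\cap B_j$ is either empty or has size $|\langle\alpha\rangle\cap\langle\beta\rangle|=(q-1)/\operatorname{lcm}(u,v)$, and it is non-empty iff $\beta^i\gamma^{-j}\in\langle\alpha\rangle\langle\beta\rangle=\langle\theta^{\gcd(u,v)}\rangle$. In exponents this becomes $iv-jw\equiv 0\pmod{\gcd(u,v)}$, and since $\gcd(u,v)\mid v$ the $iv$ term vanishes, leaving the condition $\gcd(u,v)\mid jw$, equivalently $\tfrac{\gcd(u,v)}{\gcd(u,v,w)}\bigm| j$. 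Counting the $j\in\{0,\ldots,t-1\}$ satisfying this divisibility gives exactly $t'=\lfloor(t-1)\gcd(u,v,w)/\gcd(u,v)\rfloor+1$ values while every $i$ is admissible, so $|A\cap B|=st'(q-1)/\operatorname{lcm}(u,v)$. The formulas for $|B\cap C|$ and $|C\cap A|$ follow by the symmetric arguments obtained by cyclically permuting $(u,v,w)$ and $(s,t,f)$.

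For the triple intersection, $A_i\cap B_j\cap C_k$ is either empty or a coset of $\langle\alpha\rangle\cap\langle\beta\rangle\cap\langle\gamma\rangle=\langle\theta^{\operatorname{lcm}(u,v,w)}\rangle$, so its size is $(q-1)/\operatorname{lcm}(u,v,w)$ when non-empty. Writing $x=\theta^a$, non-emptiness is equivalent to solvability of the system $a\equiv iv\pmod u$, $a\equiv jw\pmod v$, $a\equiv ku\pmod w$. Using the distributive identity $\gcd(\operatorname{lcm}(m_1,m_2),m_3)=\operatorname{lcm}(\gcd(m_1,m_3),\gcd(m_2,m_3))$, one shows that for three moduli pairwise compatibility implies global solvability; hence the condition reduces to the three independent congruences $\gcd(u,v)\mid jw$, $\gcd(v,w)\mid ku$, $\gcd(u,w)\mid iv$, namely $\tfrac{\gcd(u,v)}{\gcd(u,v,w)}\mid j$, $\tfrac{\gcd(v,w)}{\gcd(u,v,w)}\mid k$, $\tfrac{\gcd(u,w)}{\gcd(u,v,w)}\mid i$. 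Counting yields $s't'f'$ admissible triples, and multiplying by the per-intersection size gives $|A\cap B\cap C|=s't'f'(q-1)/\operatorname{lcm}(u,v,w)$.

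The main obstacle is the triple-intersection step: one must carefully justify that pairwise compatibility of the three congruences modulo $u$, $v$, $w$ implies global compatibility, which is precisely where the $\gcd/\operatorname{lcm}$ distributive identity is invoked. Once this is granted, the remaining work is routine bookkeeping translating the three divisibility conditions into the advertised floor expressions.
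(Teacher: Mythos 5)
Your proof is correct and follows essentially the same route as the paper: decompose $A$, $B$, $C$ into their defining cosets, characterize non-emptiness of $A_i\cap B_j$ (resp. $A_i\cap B_j\cap C_k$) by the divisibility conditions $\gcd(u,v)\mid jw$, $\gcd(v,w)\mid ku$, $\gcd(u,w)\mid iv$, and count the admissible indices to obtain $t'$, $f'$, $s'$. The only cosmetic difference is that you get the size of each non-empty intersection from the fact that it is a coset of $\langle\theta^{\operatorname{lcm}(u,v)}\rangle$ (resp. $\langle\theta^{\operatorname{lcm}(u,v,w)}\rangle$) and quote the general ``pairwise compatibility implies solvability'' form of CRT, whereas the paper counts solutions of the congruence system modulo $q-1$ and checks the compatibility of the third congruence by hand using the same identity $\gcd(\operatorname{lcm}(u,v),w)=\operatorname{lcm}(\gcd(u,w),\gcd(v,w))$.
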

\begin{proof}
 Firstly, we consider the size of $A\cap B$. For any $0\leq i\leq s-1$, we think about what conditions $j$ should satisfy when $A_i\cap B_j\neq \emptyset$. Now let $e\in A_i\cap B_j$, then there exist some $0\leq i_1\leq \frac{q-1}{u}-1$ and $0\leq j_1\leq \frac{q-1}{v}-1$ such that
 $e=\beta^{i}\alpha^{i_1}=\gamma^{j}\beta^{j_1}$,
 which is equivalent to
 $$\theta^{v(i-j_1)+ui_1-wj}=1.$$
Thus, we have $\gcd(u,v)\mid wj$, which implies that $\frac{\gcd(u,v)}{\gcd(u,v,w)}\mid j$. 

For $0\leq i\leq s-1$ and $0\leq j\leq t-1$ with $\frac{\gcd(u,v)}{\gcd(u,v,w)}\mid j$, we will calculate the size of $A_i\cap B_j$ for some fixed $i$, $j$, $k$. Let $x\equiv iv+i_1u\equiv jw+j_1v \pmod {q-1}$, then the size of $A_i\cap B_j$ is equal to the number of solutions of Eq. (\ref{eq23}) in the sense of modulo $q-1$.
\begin{equation}
          \begin{cases}
             x  \equiv iv \ ( {\rm mod} \ u),\\
             x \equiv jw   \ ({\rm mod} \ v).
          \end{cases}\label{eq23}
\end{equation}
Because $\frac{\gcd(u,v)}{\gcd(u,v,w)}\mid j$, then we have $\gcd(u,v)\mid jw$. In order to study the solution of Eq. (\ref{eq23}), we first consider the solution of the following system of equations.
\begin{equation}
          \begin{cases}
             \frac{x}{\gcd(u,v)}  \equiv \frac{iv}{\gcd(u,v)} \ ( {\rm mod} \ \frac{u}{\gcd(u,v)}),\\
             \frac{x}{\gcd(u,v)} \equiv \frac{jw}{\gcd(u,v)}   \ ({\rm mod} \ \frac{v}{\gcd(u,v)}).
          \end{cases}
\end{equation}
Denote $y_1\equiv (\frac{v}{\gcd(u,v)})^{-1} \pmod{ \frac{u}{\gcd(u,v)}}$ and $y_2\equiv (\frac{u}{\gcd(u,v)})^{-1} \pmod{\frac{v}{\gcd(u,v)}}$.
Since \\$\gcd(\frac{u}{\gcd(u,v)},\frac{v}{\gcd(u,v)})=1$, by the {Chinese remainder theorem}, Eq. (\ref{eq24}) has a unique solution
$$\frac{x}{\gcd(u,v)}\equiv y_1\cdot \frac{v}{\gcd(u,v)}\frac{iv}{\gcd(u,v)}+y_2 \cdot\frac{u}{\gcd(u,v)}\frac{jw}{\gcd(u,v)}\pmod{\frac{\operatorname{lcm}(u,v)}{\gcd(u,v)}}.$$
Thus, Eq. (\ref{eq23}) has a unique solution as follows
\begin{equation}
 x\equiv y_1 \cdot\frac{v}{\gcd(u,v)}{iv}+y_2 \cdot\frac{u}{\gcd(u,v)}{jw}\pmod{\operatorname{lcm}(u,v)},  \label{eq25} 
\end{equation}
which implies that Eq. (\ref{eq23}) has $\frac{(q-1)}{\operatorname{lcm}(u,v)}$ solutions in the sense of modulo $q-1$, i.e., $|A_i\cap B_j|=\frac{q-1}{\operatorname{lcm}(u,v)}$. From $0\leq j\leq  t-1$, the number of $j$ satisfying $\frac{\gcd(u,v)}{\gcd(u,v,w)}\mid j$ is $\left\lfloor(t-1)/\frac{\gcd(u,v)}{\gcd(u,v,w)}\right\rfloor+1$, which shows that $|A\cap B|=\frac{(q-1)}{\operatorname{lcm}(u,v)}\left(\left\lfloor\frac{(t-1)\gcd(u,v,w)}{\gcd(u,v)}\right\rfloor+1\right)s$. Similarly, we can give the size of $B\cap C$ and $C\cap A$.

Next, we consider the size of $A\cap B\cap C$. For any $0\leq i\leq s-1$, $0\leq j\leq t-1$ and $0\leq k\leq f-1$, if $A_i\cap B_j\cap C_k\neq \emptyset$, then we must have $A_i\cap B_j\neq \emptyset$, $B_j\cap C_k\neq \emptyset$ and $C_k\cap A_i\neq \emptyset$, which implies that $\frac{\gcd(u,w)}{\gcd(u,v,w)}\mid i$, $\frac{\gcd(u,v)}{\gcd(u,v,w)}\mid j$ and $\frac{\gcd(v,w)}{\gcd(u,v,w)}\mid k$.

For $i$, $j$ and $k$ that satisfy the above conditions, we will calculate the size of $A_i\cap B_j\cap C_k$ for some fixed $i$, $j$, $k$. Let $x\equiv iv+i_1u\equiv jw+j_1v\equiv kw+k_1u \pmod {q-1}$, where $0\leq i_1\leq \frac{q-1}{u}-1$, $0\leq j_1\leq \frac{q-1}{v}-1$ and $0\leq k_1\leq \frac{q-1}{w}-1$. It is clear that the size of $A_i\cap B_j\cap C_k$ is equal to the number of solutions of Eq. (\ref{eq26}) in the sense of modulo $q-1$.
\begin{equation}
    \begin{cases}
     x  \equiv iv \ ( {\rm mod} \ u),\\
              x \equiv jw   \ ({\rm mod} \ v),\\
              x \equiv ku   \ ({\rm mod} \ w).   
    \end{cases}\label{eq26}
\end{equation}
According to Eq. (\ref{eq25}), we can know that Eq. (\ref{eq26}) is equivalent to the following  system of equations,
\begin{equation}
    \begin{cases}
      x  \equiv  y_1 \cdot\frac{v}{\gcd(u,v)}{iv}+y_2 \cdot\frac{u}{\gcd(u,v)}{jw}\ ({\rm mod}\ \operatorname{lcm}(u,v)),\\
              x \equiv ku  \ ({\rm mod} \ w).   
    \end{cases}\label{eq27}
\end{equation}
In order to study the solution of Eq. (\ref{eq27}), we will prove that  $$\gcd(\operatorname{lcm}(u,v),w)\mid \frac{u}{\gcd(u,v)}{jw}, ~\gcd(\operatorname{lcm}(u,v),w)\mid \frac{v}{\gcd(u,v)}{iv}~ \text{and} ~ \gcd(\operatorname{lcm}(u,v),w)\mid ku.$$ 
Note that $\gcd(\operatorname{lcm}(u,v),w)\mid \frac{u}{\gcd(u,v)}{jw}$ is clear.
Since $\gcd(u,\frac{v}{\gcd(u,v)})=1$, $\gcd(\operatorname{lcm}(u,v),w)=\gcd(\frac{uv} {\gcd(u,v)},w)=\gcd(u,w)\gcd(\frac{v}{\gcd(u,v)},w)$ can be obtained. By the condition $\gcd(u,w)\mid iv$, we can deduce that $\gcd(\operatorname{lcm}(u,v),w)\mid \frac{v}{\gcd(u,v)}{iv}$. 
Since 
$$\gcd(\operatorname{lcm}(u,v),w)=\operatorname{lcm}(\gcd(u,w),\gcd(v,w))=\frac{\gcd(u,w)\gcd(v,w)}{\gcd(u,v,w)},$$
and
$$\gcd(\operatorname{lcm}(u,v),w,u)=\gcd(u,w),$$
we obtain $\frac{\gcd(\operatorname{lcm}(u,v),w)}{\gcd(\operatorname{lcm}(u,v),w,u)}=\frac{\gcd(v,w)}{\gcd(u,v,w)}$. By the condition $\frac{\gcd(v,w)}{\gcd(u,v,w)}\mid k$, we can deduce that $\gcd(\operatorname{lcm}(u,v),w)\mid ku$. According to the {Chinese remainder theorem}, Eq. (\ref{eq27}) has a unique solution in the sense of modulo $\operatorname{lcm}(u,v,w)$, 
implying that Eq. (\ref{eq27}) has $\frac{q-1}{\operatorname{lcm}(u,v,w)}$ solutions in the sense of modulo $q-1$, i.e.,
$$|A_i\cap B_j\cap C_k|=\frac{q-1}{\operatorname{lcm}(u,v,w)}.$$ 
Similar to the calculation of $|A\cap B|$, we have $|A\cap B\cap C|=\frac{(q-1)}{\operatorname{lcm}(u,v,w) }s^{\prime}t^{\prime}f^{\prime}$. 
\end{proof} 
Based on the three sets $A$, $B$ and $C$ defined by Eq. (\ref{eq22}), we provide a generic construction of MDS Euclidean self-dual codes as follows.
\begin{theorem}\label{th6}
 Let $q=r^2$ and $r \equiv 3 \pmod 4$. Suppose that $u$, $v$ and $w$ are three factors of $q-1$. Let $0\leq s\leq \frac{u}{\gcd(u,v)}$, $0 \leq t\leq \frac{v}{\gcd(v,w)}$ and $0\leq f\leq \frac{w}{\gcd(w,u)}$. Put
$$n=s\frac{q-1}{u}+t\frac{q-1}{v}+f\frac{q-1}{w}-\frac{2(q-1)st^{\prime}}{\operatorname{lcm}(u,v)}-\frac{2(q-1)tf^{\prime}}{\operatorname{lcm}(v,w)}-\frac{2(q-1)fs^{\prime}}{\operatorname{lcm}(w,u)}+\frac{4(q-1)s^{\prime}t^{\prime}f^{\prime}}{\operatorname{lcm}(u,v,w) }.$$
Suppose the following conditions hold: 

(1) $n$, $u$, $v$ and $w$ are even; 

(2) $u\mid (r+1)v $, $u\mid (r+1)w$, $v\mid (r-1)u $, $v\mid (r-1)w$, $w\mid (r-1)u $ and  $w\mid (r-1)v$;

(3) both $\frac{(r+1)v}{u}s$ and $\frac{w(r+1)}{u}s$ are even. \\
Then there exists a $q$-ary MDS Euclidean self-dual code of length $n$.   
\end{theorem}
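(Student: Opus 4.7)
The plan is to take the evaluation set
$$S=\bigl(A\backslash(B\cup C)\bigr)\,\bigcup\,\bigl(B\backslash(A\cup C)\bigr)\,\bigcup\,\bigl(C\backslash(A\cup B)\bigr)\,\bigcup\,(A\cap B\cap C),$$
which is precisely the set produced by Eq.~(\ref{eq1}) with $n=3$, and then to apply Theorem~\ref{th1} with constant $c=1$, followed by Lemma~\ref{lem2}. In particular, it suffices to (i) verify $|S|=n$ and (ii) show that the four $\eta$-quantities appearing in the $n=3$ case of Remark~1 all equal $1$.

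For step~(i), note that $S$ consists of those elements lying in an odd number of the three sets $A$, $B$, $C$. A standard inclusion-exclusion count gives
$$|S|=|A|+|B|+|C|-2\bigl(|A\cap B|+|B\cap C|+|C\cap A|\bigr)+4\,|A\cap B\cap C|.$$
Substituting $|A|=s(q-1)/u$, $|B|=t(q-1)/v$, $|C|=f(q-1)/w$ together with the cardinalities of the pairwise and triple intersections supplied by Lemma~\ref{lem10}, this matches the expression for $n$ in the statement exactly. Condition~(1) ($n$ even) then guarantees Lemma~\ref{lem2} is applicable.

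For step~(ii), I follow the template of Theorem~\ref{th4}. The explicit formulae $\pi_{A_i}(x)=x^{(q-1)/u}-\beta^{i(q-1)/u}$, $\delta_{A_i}(x)=\frac{q-1}{u}x^{(q-1)/u-1}$ and their analogues for $B_j$ and $C_k$ combined with Lemma~\ref{lem1} reduce each of the four $\eta$-expressions to a product of (a) powers of $\theta$ with explicit exponents, (b) elements of the form $\theta^{m(q-1)/u\cdot v}-\theta^{m'(q-1)/u\cdot v}$ (and its analogues with $v\leftrightarrow w$), and (c) elements of $\mathbb{F}_r^{*}$. The divisibility conditions in~(2) ensure that elements of type~(b) either lie in $\mathbb{F}_r^{*}$ (and are therefore squares in $\mathbb{F}_q=\mathbb{F}_{r^2}$) or satisfy $x^{r+1}=1$, and the latter are again squares in $\mathbb{F}_q$ because $r$ is odd. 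Elements of type~(c) are automatic squares. Thus, after this cleanup, each $\eta$-value reduces to $\eta$ of a monomial in $\theta$ whose exponent is an integer linear combination of $\tfrac{(r+1)v}{u}s$, $\tfrac{(r+1)w}{u}s$, and quantities that are automatically even thanks to $r\equiv 3\pmod 4$ (which makes $\tfrac{r+1}{2}$ even and $\eta(-1)=1$) together with~(1).

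The main obstacle will be bookkeeping for the triple-intersection type $e\in A\cap B\cap C$, where Lemma~\ref{lem1} forces one to carry $\delta_A(e)\delta_B(e)\delta_C(e)$ and to expand each $\delta$ into a leading derivative factor $\pi'_{A_i}(e)$ etc.\ together with three products of the form $\prod_{l\neq i}(\theta^{m_1}-\theta^{m_2})$. Each of the three derivative factors carries the odd power $e^{(q-1)/u-1}$, $e^{(q-1)/v-1}$, $e^{(q-1)/w-1}$, and one needs all three contributions to combine (against the cross factors arising because $e$ lies simultaneously in $A_i$, $B_j$ and $C_k$) to an even exponent of $\theta$. The parity hypotheses in~(3), namely $\tfrac{(r+1)v}{u}s$ and $\tfrac{(r+1)w}{u}s$ both even, are tailored precisely for this cancellation, and a careful exponent tally should show that all four $\eta$-values equal $1$. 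Invoking Theorem~\ref{th1} with $c=1$ and Lemma~\ref{lem2} completes the construction.
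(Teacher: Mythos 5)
Your overall architecture is the same as the paper's: the same evaluation set $S$ (odd-membership part of $A,B,C$), the cardinality count via Lemma~\ref{lem10}, the four case-by-case $\eta$-computations following the template of Theorem~\ref{th4}, and then Theorem~\ref{th1} plus Lemma~\ref{lem2}. However, there is a genuine flaw in the verification target you set yourself. You plan to show that all four $\eta$-quantities equal $1$ and to invoke Theorem~\ref{th1} with $c=1$; this is not provable under hypotheses (1)--(3). When one carries out the exponent tally, each of the four exponents $e_1,e_2,e_3,e_4$ reduces modulo $2$ to $\frac{(r+1)v}{u}\frac{s(s-1)}{2}$ (the terms involving $u$, $v$, $w$, $\frac{r+1}{2}$, $k(r+1)$, and the multiples of $\frac{(r+1)v}{u}s$, $\frac{(r+1)w}{u}s$ all drop out by (1)--(3) and $r\equiv 3\pmod 4$), but the residual term $\frac{(r+1)v}{u}\frac{s(s-1)}{2}$ is \emph{not} an integer combination of $\frac{(r+1)v}{u}s$ and $\frac{(r+1)w}{u}s$ and need not be even: e.g.\ $\frac{(r+1)v}{u}$ odd and $s=2$ satisfies (3) yet gives an odd exponent, so the common value is $\eta(\theta)^{\mathrm{odd}}=-1$. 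This is exactly why Corollaries~\ref{cor2} and \ref{cor3} impose the extra condition that $\frac{(r+1)v}{u}\frac{s(s-1)}{2}$ be even. The theorem itself survives because Lemma~\ref{lem2} only requires the common sign, not the sign $+1$; so the fix is to prove that all four values equal the single constant $\eta\bigl(\theta^{\frac{(r+1)v}{u}\frac{s(s-1)}{2}}\bigr)$ (which is what the paper does, cf.\ Eqs.~(\ref{eq29})--(\ref{eq32})) and apply Theorem~\ref{th1} with whichever $c\in\{1,-1\}$ this is. As written, the ``careful exponent tally'' you defer to would fail.

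A secondary inaccuracy in step~(ii): you assert that the individual difference terms $\theta^{m\frac{q-1}{u}v}-\theta^{m'\frac{q-1}{u}v}$ are squares because they ``satisfy $x^{r+1}=1$.'' A difference of two $(r+1)$-th roots of unity is generally not an $(r+1)$-th root of unity, so this elementwise argument breaks down. What actually works (and what the paper does in Theorem~\ref{th4} and in the proof of Theorem~\ref{th6}) is to control the whole product $\Omega=\prod_{l}(\theta^{m}-\theta^{m_l})$ by comparing $\Omega^r$ with $\Omega$ via conjugation, which yields $\Omega=\theta^{\frac{r+1}{2}s-\cdots+k(r+1)}$ up to explicit exponents; only then can the quadratic character be read off. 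With these two repairs your argument coincides with the paper's proof.
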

\begin{proof}
Choose 
\begin{equation*}
S=\left(A\backslash \left(B\cup C\right)\right)\bigcup \left(B\backslash \left(A\cup C\right)\right)\bigcup \left(C\backslash \left(A\cup B\right)\right)\bigcup\left(A\cap B\cap C\right),    
\end{equation*}
where $A$, $B$ and $C$ are defined as in Eq. (\ref{eq22}).
By Lemma \ref{lem10}, 
$$|S|=|A|+|B|+|C|-2|A\cap B|-2|B\cap C|-2|C\cap A|+4|A\cap B\cap C|=n$$
~~Firstly, for any $\beta^{i_1}\alpha^{j_1}\in A \backslash (B \cup C)$, we will calculate $\eta(\delta_{A}(\beta^{i_1}\alpha^{j_1})\pi_B(\beta^{i_1}\alpha^{j_1})\pi_C(\beta^{i_1}\alpha^{j_1}))$, where $0\leq i_1 \leq s-1$, $0\leq j_1\leq \frac{q-1}{u}-1$. Based on Eq. (\ref{eq16}), 
\begin{equation}
 \delta_{A}(\beta^{i_1}\alpha^{j_1})=\frac{q-1}{u}\theta^{e_1},\label{eq28}
\end{equation}
where $e_1=(\frac{q-1}{u}-1)vi_1-uj_1+\frac{r+1}{2}(s-1)-\frac{v(r+1)}{u}\left((s-2)i_1+\frac{s(s-1)}{2}\right)+k(r+1)$. By the conditions $v\mid (r-1)u$ and $v\mid (r-1)w$, 
\begin{equation*}
\pi_B(\beta^{i_1}\alpha^{j_1})=\prod\limits_{l=0}^{t-1}\pi_{B_l}(\beta^{i_1}\alpha^{j_1})=\prod\limits_{l=0}^{t-1}   (\theta^{\frac{j_1u(r-1)}{v}(r+1)}-\theta^{\frac{lw(r-1)}{v}(r+1)}) \in \mathbb{F}_r^{*}.
\end{equation*}
Similarly, we can deduce that $\pi_C(\beta^{i_1}\alpha^{j_1})\in \mathbb{F}_r^{*}$. Thus, 
\begin{equation}
 \eta(\delta_{A}(\beta^{i_1}\alpha^{j_1})\pi_B(\beta^{i_1}\alpha^{j_1})\pi_C(\beta^{i_1}\alpha^{j_1}))=\eta(\theta^{e_1}).\label{eq29}   
\end{equation}
~~Secondly, for any $\gamma^{i_2}\beta^{j_2}\in B\backslash (A\cup C)$, we will calculate $\eta\left(\pi_A(\gamma^{i_2}\beta^{j_2})\delta_B(\gamma^{i_2}\beta^{j_2})\pi_{C}(\gamma^{i_2}\beta^{j_2})\right)$, where $0\leq i_2\leq t-1$, $0\leq j_2\leq \frac{q-1}{v}-1$. By Lemma \ref{lem1},
\begin{equation*}
\pi_A(\gamma^{i_2}\beta^{j_2})=\prod\limits_{l=0}^{s-1}\pi_{A_l}(\gamma^{i_2}\beta^{j_2})=\prod\limits_{l=0}^{s-1}(\theta^{\frac{q-1}{u}(wi_2+vj_2)}-\theta^{\frac{q-1}{u}vl}).   
\end{equation*}
Since $u\mid (r+1)v$ and $u\mid (r+1)w$, we can deduce that $\left(\theta^{\frac{q-1}{u}(wi_2+vj_2)}\right)^{r}=\theta^{-\frac{q-1}{u}(wi_2+vj_2)}$ and $\left(\theta^{\frac{q-1}{u}vl}\right)^{r}=\theta^{-\frac{q-1}{u}vl}$. Thus,
\begin{eqnarray*}
 \left(\pi_A(\gamma^{i_2}\beta^{j_2})\right)^r&=&\prod\limits_{l=0}^{s=1}\left(\theta^{-\frac{q-1}{u}(wi_2+vj_2)}-\theta^{-\frac{q-1}{u}vl}\right)\\
 &=&\prod\limits_{l=0}^{s-1}\theta^{-\frac{q-1}{u}(wi_2+vj_2+vl)}\left(\theta^{\frac{q-1}{u}vl}-\theta^{\frac{q-1}{u}(wi_2+vj_2)}\right)\\
 &=&(-1)^s\theta^{-\frac{q-1}{u}(swi_2+svj_2+\frac{s(s-1)}{2}v)}\pi_A(\gamma^{i_2}\beta^{j_2}),
\end{eqnarray*}
which implies that $\left(\pi_A(\gamma^{i_2}\beta^{j_2})\right)^{r-1}=(-1)^s\theta^{-\frac{q-1}{u}(swi_2+svj_2+\frac{s(s-1)}{2}v)}$. Note that $-1=\theta^{(r-1)\frac{r+1}{2}}$, $\frac{q-1}{u}w=(r-1)\frac{w(r+1)}{u}$ and $\frac{q-1}{u}v=(r-1)\frac{v(r+1)}{u}$, then there exists an integer $k^{\prime}$ such that
$$\pi_A(\gamma^{i_2}\beta^{j_2})=\theta^{\frac{r+1}{2}s-\frac{w(r+1)}{u}si_2-\frac{v(r+1)}{u}\left(sj_2+\frac{s(s-1)}{2}\right)+k^{\prime}(r+1)}.$$
According to Lemma \ref{lem1}, 
\begin{eqnarray*}
\delta_B(\gamma^{i_2}\beta^{j_2})&=&\delta_{B_{i_2}}(\gamma^{i_2}\beta^{j_2})\prod\limits_{l=0, l\neq i_2}^{t-1}\pi_{B_l}(\gamma^{i_2}\beta^{j_2})\\
&=&\frac{q-1}{v}(\gamma^{i_2}\beta^{j_2})^{\frac{q-1}{v}-1}\prod\limits_{l=0, l\neq i_2}^{t-1}\left((\gamma^{i_2}\beta^{j_2})^{\frac{q-1}{v}}-\gamma^{\frac{(q-1)l}{v}}\right)\\
&=&\frac{q-1}{v}\theta^{(\frac{q-1}{v}-1)i_2w-j_2v}\prod\limits_{ l=0, l\neq i_2}^{t-1}\left(\theta^{\frac{(q-1)w}{v}i_2}-\theta^{\frac{(q-1)w}{v}l}\right).
\end{eqnarray*}
Since $v\mid (r-1)w$, we have $\prod\limits_{l=0, l\neq i_2}^{t-1}\left(\theta^{\frac{(q-1)w}{v}i_2}-\theta^{\frac{(q-1)w}{v}l}\right)\in \mathbb{F}_r^{*}$, which implies that 
$$\eta\left(\delta_B(\gamma^{i_2}\beta^{j_2})\right)=\eta\left(\theta^{(\frac{q-1}{v}-1)i_2w-j_2v}\right).$$
Because $w\mid (r-1)u$ and $w\mid (r-1)v$, so we can deduce that
$$\pi_{C}(\gamma^{i_2}\beta^{j_2})=\prod\limits_{l=0}^{f-1}\pi_{C_l}=\prod\limits_{l=0}^{f-1}\left(\theta^{\frac{(q-1)v}{w}j_2}-\theta^{\frac{(q-1)u}{v}l}\right)\in \mathbb{F}_r^{*}.$$
Hence, we have 
\begin{equation}
 \eta\left(\pi_A(\gamma^{i_2}\beta^{j_2})\delta_B(\gamma^{i_2}\beta^{j_2})\pi_{C}(\gamma^{i_2}\beta^{j_2})\right)=\eta\left(\theta^{e_2}\right),   \label{eq30}
\end{equation}
where $e_2=\frac{r+1}{2}s-\frac{w(r+1)}{u}si_2-\frac{v(r+1)}{u}\left(sj_2+\frac{s(s-1)}{2}\right)+k^{\prime}(r+1)+(\frac{q-1}{v}-1)i_2w-j_2v$.

Next, for any $\alpha^{i_3}\gamma^{j_3}\in C\backslash (A\cup B)$, we will calculate $ \eta(\pi_{A}(\alpha^{i_3}\gamma^{j_3})\pi_{B}(\alpha^{i_3}\gamma^{j_3})\delta_C(\alpha^{i_3}\gamma^{j_3}))$, where $0\leq i_3\leq f-1$, $0\leq j_3\leq \frac{q-1}{w}-1$. By Lemma \ref{lem1},
\begin{equation*}
\pi_{A}(\alpha^{i_3}\gamma^{j_3})=\prod\limits_{l=0}^{s-1}\pi_{A_l}(\alpha^{i_3}\gamma^{j_3})= \prod\limits_{l=0}^{s-1}(\theta^{\frac{(q-1)w}{u}j_3}-\theta^{\frac{(q-1)v}{u}l}).   
\end{equation*}
By the conditions $u\mid (r+1)w$ and $u\mid (r+1)v$, we can know that there exists an integer $k^{\prime\prime}$ such that 
$$\pi_{A}(\alpha^{i_3}\gamma^{j_3})=\theta^{\frac{r+1}{2}s-\frac{(r+1)w}{u}sj_3-\frac{(r+1)v}{u}\frac{s(s-1)}{2}+k^{\prime\prime}(r+1)}.$$
Since $v\mid (r-1)u$ and $v\mid (r-1)w$, we can deduce that 
\begin{equation*}
\pi_{B}(\alpha^{i_3}\gamma^{j_3})=\prod\limits_{l=0}^{t-1}\pi_{B_l}(\alpha^{i_3}\gamma^{j_3})=\prod\limits_{l=0}^{t-1}(\theta^{\frac{(q-1)u}{v}i_3+\frac{(q-1)w}{v}j_3}-\theta^{\frac{(q-1)w}{v}l})\in \mathbb{F}_r^{*}.    
\end{equation*}
Based on Lemma \ref{lem1}, 
\begin{eqnarray*}
 \delta_C(\alpha^{i_3}\gamma^{j_3})&=&\delta_{C_{i_3}}(\alpha^{i_3}\gamma^{j_3})\prod\limits_{l=0, l\neq i_3}^{f-1}\pi_{C_l}(\alpha^{i_3}\gamma^{j_3})\\
 &=&\frac{q-1}{w}(\alpha^{i_3}\gamma^{j_3})^{\frac{q-1}{w}-1}\prod\limits_{l=0, l\neq i_3}^{f-1}\left((\alpha^{i_3}\gamma^{j_3})^{\frac{q-1}{w}}-\alpha^{\frac{q-1}{w}l}\right)\\
 &=&\frac{q-1}{w}\theta^{(\frac{q-1}{w}-1)i_3u-j_3w}\prod\limits_{l=0, l\neq i_3}^{f-1}\left(\theta^{\frac{(q-1)u}{w}i_3}-\theta^{\frac{(q-1)u}{w}l}\right).
\end{eqnarray*}
By the condition $w\mid (r-1)u$, we can know that $\prod\limits_{l=0, l\neq i_3}^{f-1}\left(\theta^{\frac{(q-1)u}{w}i_3}-\theta^{\frac{(q-1)u}{w}l}\right)\in \mathbb{F}_r^{*}$, which implies that $\eta(\delta_C(\alpha^{i_3}\gamma^{j_3}))=\eta(\theta^{(\frac{q-1}{w}-1)i_3u-j_3w})$.
Thus, we have
\begin{equation}
 \eta(\pi_{A}(\alpha^{i_3}\gamma^{j_3})\pi_{B}(\alpha^{i_3}\gamma^{j_3})\delta_C(\alpha^{i_3}\gamma^{j_3}))=\eta(\theta^{e_3}), \label{eq31}  
\end{equation}
where $e_3=\frac{r+1}{2}s-\frac{(r+1)w}{u}sj_3-\frac{v(r+1)}{u}\frac{s(s-1)}{2}+k^{\prime\prime}(r+1)+(\frac{q-1}{w}-1)i_3u-j_3w$.

Finally, for any $e\in A\cap B\cap C$, we will calculate $\eta(\delta_A(e)\delta_{B}(e) \delta_{C}(e))$. We  assume $e=\beta^{i_1}\alpha^{j_1}\in A$, where $0\leq i_1 \leq s-1$, $0\leq j_1\leq \frac{q-1}{u}-1$. Then by Eq. (\ref{eq29}), we have 
\begin{equation*}
 \delta_A(e)=\delta_A(\beta^{i_1}\alpha^{j_1})=\frac{q-1}{u}\theta^{e_1}. 
\end{equation*}
Similarly, we can assume $e\in B$ or $e\in C$. Thus,
$$\eta(\delta_{B}(e))
=\eta\left(\theta^{(\frac{q-1}{v}-1)i_2w-j_2v}\right), \ \eta(\delta_{C}(e))=
\eta(\theta^{(\frac{q-1}{w}-1)i_3u-j_3w}).$$
Hence, we have
\begin{equation}
\eta(\delta_A(e)\delta_{B}(e) \delta_{C}(e))=\eta(\theta^{e_4}),\label{eq32}
\end{equation}
where $e_4=e_1+(\frac{q-1}{v}-1)i_2w-j_2v+(\frac{q-1}{w}-1)i_3u-j_3w$.

By the conditions (2) and (3), we can deduce that $\eta(\theta^{e_i})=\eta(\theta^{\frac{v(r+1)}{u}\frac{s(s-1)}{2}})$ for any $i=1,2,3,4$. According to Theorem \ref{th1}, for any $e\in S$, we know that $\eta(\delta_S(e))$ are the same. When $n$ is even, by Lemma \ref{lem2}, there exists a $q$-ary MDS Euclidean self-dual code of length $n$. When $n$ is odd, by Lemma \ref{lem3}, there exists a $q$-ary MDS Euclidean self-dual code of length $n+1.$
\end{proof}
When we modify the conditions of Theorem \ref{th6} slightly, another construction of MDS Euclidean self-dual codes with length $n+1$ by Lemma \ref{lem3} is proposed as follows.
\begin{corollary}\label{cor2}
 Let $q=r^2$ and $r \equiv 3 \pmod 4$. Suppose $u$, $v$ and $w$ are three factors of $q-1$. Let $0\leq s\leq \frac{u}{\gcd(u,v)}$, $0 \leq t\leq \frac{v}{\gcd(v,w)}$ and $0\leq f\leq \frac{w}{\gcd(w,u)}$. Put
$$n=s\frac{q-1}{u}+t\frac{q-1}{v}+f\frac{q-1}{w}-\frac{2(q-1)st^{\prime}}{\operatorname{lcm}(u,v)}-\frac{2(q-1)tf^{\prime}}{\operatorname{lcm}(v,w)}-\frac{2(q-1)fs^{\prime}}{\operatorname{lcm}(w,u)}+\frac{4(q-1)s^{\prime}t^{\prime}f^{\prime}}{\operatorname{lcm}(u,v,w) }.$$
Suppose the following conditions hold: 

(1) $n$ is odd, $u$, $v$ and $w$ are even; 

(2) $u\mid (r+1)v $, $u\mid (r+1)w$, $v\mid (r-1)u $, $v\mid (r-1)w$, $w\mid (r-1)u $ and  $w\mid (r-1)v$;

(3)  both $\frac{(r+1)v}{u}s$ and $\frac{w(r+1)}{u}s$ are even; 

(4) $\frac{(r+1)v}{u}\frac{s(s-1)}{2}$ is even. \\
Then there exists a $q$-ary MDS Euclidean self-dual code of length $n+1$.   
\end{corollary}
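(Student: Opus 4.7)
The plan is to reuse the evaluation-set construction and the bulk of the character computations already established in the proof of Theorem~\ref{th6}, adjusting only the final step so that Lemma~\ref{lem3} applies. Specifically, I would take
$$S = \bigl(A\setminus(B\cup C)\bigr)\cup\bigl(B\setminus(A\cup C)\bigr)\cup\bigl(C\setminus(A\cup B)\bigr)\cup(A\cap B\cap C),$$
with $A$, $B$, $C$ as in Eq.~(\ref{eq22}). Lemma~\ref{lem10} again yields $|S| = n$, which is odd by condition~(1). Since $q = r^2$ with $r$ odd forces $q\equiv 1\pmod 4$, we have $\eta(-1) = 1$, so to invoke Lemma~\ref{lem3} it suffices to prove $\eta(\delta_S(e)) = 1$ for every $e\in S$; then $\eta(-\delta_S(e)) = \eta(-1)\eta(\delta_S(e)) = 1$ and a $q$-ary MDS Euclidean self-dual code of length $|S|+1 = n+1$ will follow.

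The four case-by-case computations for $a\in A\setminus(B\cup C)$, $b\in B\setminus(A\cup C)$, $c\in C\setminus(A\cup B)$, and $e\in A\cap B\cap C$ carry over verbatim from the proof of Theorem~\ref{th6}, because they use only conditions~(1)--(3), which coincide with those assumed here. The outcome is
$$\eta(\delta_A(a)\pi_B(a)\pi_C(a)) = \eta(\pi_A(b)\delta_B(b)\pi_C(b)) = \eta(\pi_A(c)\pi_B(c)\delta_C(c)) = \eta(\delta_A(e)\delta_B(e)\delta_C(e)) = \eta\bigl(\theta^{\frac{v(r+1)}{u}\cdot\frac{s(s-1)}{2}}\bigr).$$
This common value is $c\in\{\pm 1\}$ as required by Theorem~\ref{th1}.

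At this point condition~(4) enters: it guarantees that the exponent $\frac{v(r+1)}{u}\cdot\frac{s(s-1)}{2}$ is even, so that $\eta\bigl(\theta^{\frac{v(r+1)}{u}\cdot\frac{s(s-1)}{2}}\bigr) = 1$. Hence the hypothesis of Theorem~\ref{th1} is verified with $c = 1$, and Theorem~\ref{th1} immediately gives $\eta(\delta_S(e)) = 1$ for every $e\in S$. Combined with $\eta(-1) = 1$, this yields $\eta(-\delta_S(e)) = 1$ on $S$, and Lemma~\ref{lem3} produces the desired code of length $n+1$. No new obstacle appears: the heavy technical work has already been done in Theorem~\ref{th6}, and condition~(4) is precisely the parity refinement that promotes Theorem~\ref{th1}'s ``constant on $S$'' conclusion from an unspecified $c\in\{\pm 1\}$ to the specific value $c = +1$ demanded by Lemma~\ref{lem3}.
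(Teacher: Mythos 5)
Your proposal is correct and takes essentially the same route as the paper: it reuses the evaluation set $S$ and the character computations from the proof of Theorem~\ref{th6} to get the common value $\eta\bigl(\theta^{\frac{(r+1)v}{u}\frac{s(s-1)}{2}}\bigr)$, invokes condition~(4) to make that value equal to $1$, and then concludes via $\eta(-1)=1$ and Lemma~\ref{lem3}. Your explicit remark that the parity of $n$ plays no role in those computations (only the final application of Lemma~\ref{lem2} versus Lemma~\ref{lem3} changes) is a slight clarification of the paper's terser argument, not a different method.
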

\begin{proof}
Let $A$, $B$, $C$ and $S$ be defined as in Theorem \ref{th6}. From the proof of Theorem 6 and the condition (3), for any $e\in S$,    
$$\eta(\delta_S(e))=\eta(\theta^{\frac{(r+1)v}{u}\frac{s(s-1)}{2}})=1.$$ 
Since $\eta(-1)=1$, $\eta(-\delta_S(e))=1$. The theorem then follows from Lemma \ref{lem3}.
\end{proof}
Based on Corollary \ref{cor2}, when we consider adding the zero element into the set $A$, the MDS Euclidean self-dual codes with length $n+2$ are derived in the following.
\begin{corollary}\label{cor3}
 Let $q=r^2$ and $r \equiv 3 \pmod 4$. Suppose $u$, $v$ and $w$ are three factors of $q-1$. Let $0\leq s\leq \frac{u}{\gcd(u,v)}$, $0 \leq t\leq \frac{v}{\gcd(v,w)}$ and $0\leq f\leq \frac{w}{\gcd(w,u)}$. Put
$$n=s\frac{q-1}{u}+t\frac{q-1}{v}+f\frac{q-1}{w}-\frac{2(q-1)st^{\prime}}{\operatorname{lcm}(u,v)}-\frac{2(q-1)tf^{\prime}}{\operatorname{lcm}(v,w)}-\frac{2(q-1)fs^{\prime}}{\operatorname{lcm}(w,u)}+\frac{4(q-1)s^{\prime}t^{\prime}f^{\prime}}{\operatorname{lcm}(u,v,w) }.$$
Suppose the following conditions hold: 

(1) $n$, $u$, $v$ and $w$ are even; 

(2) $u\mid (r+1)v $, $u\mid (r+1)w$, $v\mid (r-1)u $, $v\mid (r-1)w$, $w\mid (r-1)u $ and  $w\mid (r-1)v$;

(3) both $\frac{(r+1)v}{u}s$ and $\frac{w(r+1)}{u}s$ are even;

(4) $\frac{(r+1)v}{u}\frac{s(s-1)}{2}$ is even. \\
Then there exists a $q$-ary MDS Euclidean self-dual code of length $n+2$.   
\end{corollary}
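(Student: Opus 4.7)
The plan is to modify the evaluation set from Corollary \ref{cor2} by appending the element $0$ and then apply Lemma \ref{lem3}. Set $A'=A\cup\{0\}$ and
$$S'\triangleq (A'\setminus(B\cup C))\cup(B\setminus(A'\cup C))\cup(C\setminus(A'\cup B))\cup(A'\cap B\cap C),$$
where $A,B,C$ are as in (\ref{eq22}). Since $0\notin B\cup C$, we have $|S'|=n+1$, which is odd by condition (1); moreover $q=r^2\equiv 1\pmod 4$ gives $\eta(-1)=1$, so to apply Lemma \ref{lem3} it suffices to verify $\eta(\delta_{S'}(e))=1$ for every $e\in S'$. This in turn will be handled by Theorem \ref{th1} applied to the triple $(A',B,C)$.

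For any nonzero $e\in S'$, Lemma \ref{lem1} gives $\delta_{A'}(e)=e\cdot\delta_A(e)$ when $e\in A$, and $\pi_{A'}(e)=e\cdot\pi_A(e)$ otherwise. Hence each of the four products appearing in Theorem \ref{th1} differs from the corresponding product computed in the proof of Theorem \ref{th6} only by a single extra factor of $e$. Under condition (1) the numbers $u,v,w$ are all even, so every element of $A\cup B\cup C$ has the form $\theta^{\textrm{even}}$ and is therefore a square in $\mathbb{F}_q$; consequently $\eta(e)=1$ and the quadratic characters in question are unaffected. Conditions (2)--(4) then collapse them all to $\eta\bigl(\theta^{(r+1)vs(s-1)/(2u)}\bigr)=1$, exactly as in the proof of Corollary \ref{cor2}.

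At the new point $e=0\in A'\setminus(B\cup C)$, Lemma \ref{lem1} applied to the disjoint decomposition $A'=A\sqcup\{0\}$ gives $\delta_{A'}(0)=\pi_A(0)$, so
$$\delta_{A'}(0)\,\pi_B(0)\,\pi_C(0)=(-1)^{|A|+|B|+|C|}\prod_{e'\in A\cup B\cup C}e'.$$
Each factor in this product is a square by the same evenness argument, and $\eta(-1)=1$, so the expression has quadratic character $1$. Combining this with the previous paragraph via Theorem \ref{th1} yields $\eta(\delta_{S'}(e))=1$ for every $e\in S'$, whence $\eta(-\delta_{S'}(e))=1$, and Lemma \ref{lem3} delivers a $q$-ary MDS Euclidean self-dual code of length $|S'|+1=n+2$. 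The main technical hurdle is the nonzero case: one must check that enlarging $A$ to $A'$ introduces exactly one extra factor of $e$ in each of the four branches of Theorem \ref{th1}, a factor that is then cleanly neutralized by the evenness of $u,v,w$.
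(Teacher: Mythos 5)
Your proposal is correct and follows essentially the same route as the paper's Appendix C: adjoin $0$ to $A$, observe that each of the four products in Theorem \ref{th1} picks up exactly one extra factor of $e$ (or reduces to $\pi_A(0)\pi_B(0)\pi_C(0)$ at $e=0$), use the evenness of $u,v,w$ to see these extra factors are squares, and then invoke the character computations from Theorem \ref{th6}/Corollary \ref{cor2} together with Lemma \ref{lem3}. The only slip is cosmetic: at $e=0$ the product is $(-1)^{|A|+|B|+|C|}\prod_{a\in A}a\prod_{b\in B}b\prod_{c\in C}c$ (elements in the pairwise intersections counted with multiplicity), not a product over the union $A\cup B\cup C$, but since every factor is a square and $\eta(-1)=1$ this does not affect the conclusion.
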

\begin{proof}
Please see \ref{app C} for the proof. 
\end{proof}

\section{Comparisons} 
In this section, we compare our results with previous ones. Some known results on MDS Euclidean self-dual codes are listed in Table \ref{tab D.3} (see \ref{appD}), while our results on the constructions of MDS Euclidean self-dual codes of length $n$ are presented in Table \ref{tab1}. Additionally, we provide some examples to illustrate that our constructions of MDS Euclidean self-dual codes are new.

\newcommand{\tabincell}[2]{\begin{tabular}
{@{}#1@{}}#2\end{tabular}}
{\footnotesize
\begin{longtable}{c|c|c|c}
\caption{Our results on MDS Euclidean self-dual codes of length $n$}\label{tab1}\\
\toprule
Classes & $q$ & $n$ & References\\
\midrule
\endfirsthead
\multicolumn{4}{c}{{\tablename\ \thetable{}  Our results on MDS Euclidean self-dual codes of length $n$ (continue)}} \\
\toprule
Classes & $q$ & $n$ & References\\
\midrule
\endhead
\bottomrule
\multicolumn{4}{r}{{}} \\
\endfoot
\bottomrule
\endlastfoot
\hline
1 & $q=r^{2}$, $r \equiv 3 \pmod 4$ & \tabincell{c}{$n=sl+(l_1+l_2)(r+1)+2$, \\$l$, $s$ even, $l\mid (r-1)$, \\ $0\leq s\leq \frac{r-1}{l}-1$,
$0\leq l_1\leq \frac{l}{2}$, $0\leq l_2\leq \frac{l}{2}$ } & Theorem \ref{th2}\\
\hline
2 & $q=r^{2}$, $r \equiv 3 \pmod 4$ & \tabincell{c}{$n=(s+1)l+(l_1+l_2)(r+1)-2l_2+2$, \\$l$, $s$ even, $l\mid (r-1)$, \\ $0\leq s\leq \frac{r-1}{l}-1$,
$0\leq l_1\leq \frac{l}{2}$, $0\leq l_2\leq \frac{l}{2}$ } & Theorem \ref{th3}\\
\hline
3& $q=r^{2}$, $r \equiv 3 \pmod 4$ &\tabincell{c}{$n=(s+s^{\prime})\frac{q-1}{u}+t\frac{q-1}{v}-2\frac{(q-1)\gcd(u,v)}{uv}st$,\\
$u$, $v$ even, $4\nmid v$, $\frac{(r+1)v}{2u}$ odd, \\ $2u\mid(r+1)v$, $v\mid (r-1)u$, $s+s^{\prime} $ odd, \\
$0\leq s, s^{\prime}\leq \frac{u}{\gcd(u,v)}$, $0\leq t\leq \frac{v}{\gcd(u,v)}$ }& Theorem \ref{th4}\\
\hline
4 &$q=r^{2}$, $r \equiv 3 \pmod 4$ &\tabincell{c}{$n=(s+s^{\prime})\frac{q-1}{u}+t\frac{q-1}{v}-2\frac{(q-1)\gcd(u,v)}{uv}st+2$,\\
$u$, $v$ even, $4\nmid v$, $2u\mid(r+1)v$, $v\mid (r-1)u$,\\  $0\leq s, s^{\prime}\leq \frac{u}{\gcd(u,v)}$, $0\leq t\leq \frac{v}{\gcd(u,v)}$, \\$\frac{(r+1)v}{2u}$ is even or both $s$ and $s^{\prime} $ are even }& Corollary \ref{cor1}\\
\hline
5 & $q=r^{2}$, $r \equiv 3 \pmod 4$ &\tabincell{c}{$n=(s+s^{\prime})\frac{q-1}{u}+t\frac{q-1}{v}-2\frac{(q-1)\gcd(u,v)}{uv}st+1$, \\
$n$ even, $4\nmid v$,
$2u\mid(r+1)v$, $v\mid (r-1)u$,\\$2^{l}\mid u$, $v\equiv 2^{l} \pmod {2^{l+1}}$ for some $l\geq 2$, \\ $0\leq s, s^{\prime}\leq \frac{u}{\gcd(u,v)}$, $0\leq t\leq \frac{v}{\gcd(u,v)}$, \\  $\frac{(r+1)v}{2u}$ is even or both $s$ and $s^{\prime} $ are even
}& Theorem \ref{th5}\\
\hline
6& $q=r^{2}$, $r \equiv 3 \pmod 4$ &\tabincell{c}{$n=s\frac{q-1}{u}+t\frac{q-1}{v}+f\frac{q-1}{w} -2\frac{(q-1)st^{\prime}}{\operatorname{lcm}(u,v)}-2\frac{(q-1)tf^{\prime}}{\operatorname{lcm}(v,w)}$\\$-2\frac{(q-1)fs^{\prime}}{\operatorname{lcm}(w,u)}+4\frac{(q-1)s^{\prime}t^{\prime}f^{\prime}}{\operatorname{lcm}(u,v,w)}$,
$n$, $u$, $v$ and $w$ even, \\
$u\mid (r+1)v$, $u\mid (r+1)w$, $v\mid (r-1)u$, \\$v\mid (r-1)w,$
$w\mid (r-1)u$ and $w\mid (r-1)v$,\\ $\frac{(r+1)v}{u}s$ and $\frac{(r+1)w}{u}s$ even}& Theorem \ref{th6}\\
\hline
7& $q=r^{2}$, $r \equiv 3 \pmod 4$ &\tabincell{c}{$n=s\frac{q-1}{u}+t\frac{q-1}{v}+f\frac{q-1}{w}-2\frac{(q-1)st^{\prime}}{\operatorname{lcm}(u,v)}-2\frac{(q-1)tf^{\prime}}{\operatorname{lcm}(v,w)}$\\$-2\frac{(q-1)fs^{\prime}}{\operatorname{lcm}(w,u)}+4\frac{(q-1)s^{\prime}t^{\prime}f^{\prime}}{\operatorname{lcm}(u,v,w)}+1$,
$n$, $u$, $v$ and $w$ even, \\
$u\mid (r+1)v$, $u\mid (r+1)w$, $v\mid (r-1)u$, \\$v\mid (r-1)w,$
$w\mid (r-1)u$ and $w\mid (r-1)v$, \\$\frac{(r+1)v}{u}s$, $\frac{(r+1)w}{u}s$ and  $\frac{(r+1)v}{u}\frac{s(s-1)}{2}$ even}& Corollary \ref{cor2}\\
\hline
8& $q=r^{2}$, $r \equiv 3 \pmod 4$ &\tabincell{c}{$n=s\frac{q-1}{u}+t\frac{q-1}{v}+f\frac{q-1}{w}-2\frac{(q-1)st^{\prime}}{\operatorname{lcm}(u,v)}-2\frac{(q-1)tf^{\prime}}{\operatorname{lcm}(v,w)}$\\$-2\frac{(q-1)fs^{\prime}}{\operatorname{lcm}(w,u)}+4\frac{(q-1)s^{\prime}t^{\prime}f^{\prime}}{\operatorname{lcm}(u,v,w)}+2$,
$n$, $u$, $v$ and $w$ even, \\$u\mid (r+1)v$, $u\mid (r+1)w$, $v\mid (r-1)u$,\\ $v\mid (r-1)w,$
$w\mid (r-1)u$ and $w\mid (r-1)v$, \\$\frac{(r+1)v}{u}s$, $\frac{(r+1)w}{u}s$ and $\frac{(r+1)v}{u}\frac{s(s-1)}{2}$ even}& Corollary \ref{cor3}\\
\hline
\end{longtable}
}
To compare our results with previous ones more intuitively  , we divide the known constructions of MDS Euclidean self-dual codes of length $n$ into the following cases.
\begin{itemize}
    \item $n\leq r$: Classes 3 and 5;
    \item $p\mid n$: Classes 4, 8, 10, 14 and 15;
    \item $p\mid (n-1)$: Classes 2, 9, 11, 16 and 17;
    \item $n=tm+c$ ($c=0,1,2$) and $m\mid (q-1)$: Classes 6, 7, 9, 12, 13, 18, 20, 21, 22, 23, 24, 25, 26, 27, 28, 29 and 30;
    \item $n=sa+tb+c$ ($c$ can be negative) and $a,b\mid (q-1)$: Classes 31 and 32, 33, 34, 35, 36, 37, 38, 39, 40, 41, 42, 43, 44 and 45;
\end{itemize}
(1) With the above classification, we can conclude that Classes 1-5 of Table \ref{tab1} are likely to be included only in the case of length $n=sa+tb+c$. However, the constant $c$ in our constructions is more flexible, which may provide more possibilities for the length. It is easy to verify that Classes 2, 3, 4 and 5 in Table \ref{tab1} can give rise to many new $q$-ary MDS Euclidean self-dual codes. Furthermore, the other classes in Table \ref{tab1} can increase the proportion of MDS Euclidean self-dual codes we have constructed out of all possible MDS Euclidean self-dual codes.\\ 
(2) Based on the above classification, it is clear that Classes 6-8 of Table \ref{tab1} are not included in any previous results.\\
(3) In \cite{FXF21}, for the fixed $q$,  MDS Euclidean self-dual codes with larger lengths are missing more, however, our results can complement them somewhat. Furthermore, under certain conditions, when $s = 0$, some of the results in \cite{FXF21} are special cases of our results.\\
(4) In Table \ref{tab2}, we list the proportion of possible lengths relative to $\frac{q}{2}$, where $N$ is the number of all MDS Euclidean self-dual codes constructed in each reference and $N_1$ is the number of new lengths in our constructions. The constructions in \cite{HFF21} and \cite{WLZ23} account for more than $34\%$ and $38\%$ of all possible MDS Euclidean self-dual codes. And in \cite{FXF21}, the constructions can give  more than $56\%$ of all possible MDS Euclidean self-dual codes. Additionally, 
all the constructions in Table \ref{tab D.3} can contribute more than $64\%$  of such codes in total.  However, in our results, $N/\frac{q}{2}$ is more than $85\%$, which is a considerable improvement compared to the previous results.

{\footnotesize
\begin{longtable}{c|c|c|c|c|c|c}
\caption{The ratios of $N$ and $\frac{q}{2}$}\label{tab2}\\
\toprule
$r$ & 
$N/(\frac{q}{2})$ of \cite{HFF21}&  $N/(\frac{q}{2})$ of \cite{WLZ23} & $N/(\frac{q}{2})$ of \cite{FXF21} &\tabincell{c}{ $N/(\frac{q}{2})$ of Table \ref{tab D.3}}&  $N/(\frac{q}{2})$ of us & \tabincell{c}{$N_1$}\\
\midrule
\endfirsthead
\multicolumn{7}{c}{{\tablename\ \thetable{}  xushangye}} \\
\toprule
$r$ & 
$N/(\frac{q}{2})$ of [18]&  $N/(\frac{q}{2})$ of [19] & $N/(\frac{q}{2})$ of [20] & \tabincell{c}{$N/(\frac{q}{2})$ of Table \ref{tab D.3}}& $N/(\frac{q}{2})$ of us & $N_1$\\
\midrule
\endhead
\bottomrule
\multicolumn{7}{r}{{continue}} \\
\endfoot
\bottomrule
\endlastfoot
\hline
151&
$34.95\%$ &$38.64\%$ & $56.32\%$ & $64.15\%$ &  $85.1\%$& 2459\\
\hline
163&
$34.28\%$& $38.56\%$& $56.67\%$& $63.99\%$& $85.19\%$& 2856\\
\hline
167 &
$34.27\%$&$38.53\%$ &$56.09\%$&$63.69\%$ & $85.19\%$&3031\\
\hline
179 &$34.9\%$ & $38.46\%$&$56.25\%$ &$64.3\%$ & $85.26\%$& 3448\\
\hline
\end{longtable}
}
Finally, we present some examples of our new constructions as follows.
\begin{example}
  In Theorem \ref{th3}, let $p=r=19$, $q=r^2=361$, $s=0$, $l=18$, $l_1=8$ and $l_2=6$. Then $n=(s+1)l+(l_1+l_2)(r+1)-2l_2+2=288$. So we can get an MDS Euclidean self-dual codes of length $288$ over $\mathbb{F}_{361}$. 
\end{example}
\begin{example}
  In Theorem \ref{th4}, let $p=r=19$, $q=r^2=361$, $s=9$, $s^{\prime}=10$, $t=2$, $ u=20$, and $v=18$. Then $n=(s+s^{\prime})\frac{q-1}{u}+t\frac{q-1}{v}-2\frac{(q-1)\gcd(u,v)}{uv}st=310$. So we can get an MDS Euclidean self-dual codes of length $310$ over $\mathbb{F}_{361}$. 
\end{example}
\begin{example}
  In Corollary \ref{cor1}, let $p=r=19$, $q=r^2=361$, $s=2$, $s^{\prime}=10$, $t=1$, $ u=20$, and $v=18$. Then $n=(s+s^{\prime})\frac{q-1}{u}+t\frac{q-1}{v}-2\frac{(q-1)\gcd(u,v)}{uv}st+2=314$. So we can get an MDS Euclidean self-dual codes of length $314$ over $\mathbb{F}_{361}$. 
\end{example}
\begin{example}
  In Theorem \ref{th5}, let $p=r=19$, $q=r^2=361$, $s=2$, $s^{\prime}=4$, $t=7$, $ u=20$, and $v=36$. Then $n=(s+s^{\prime})\frac{q-1}{u}+t\frac{q-1}{v}-2\frac{(q-1)\gcd(u,v)}{uv}st+2=124$. So we can get an MDS Euclidean self-dual codes of length $124$ over $\mathbb{F}_{361}$. 
\end{example}
\begin{example}
 In Theorem \ref{th6}, let $p=r=151$, $q=r^2=22801$, $u=164$, $v=2$, $w=108$, $s=3$, $t=1$ and $f=13$. Then $n=s\frac{q-1}{u}+t\frac{q-1}{v}+f\frac{q-1}{w}-\frac{2(q-1)st^{\prime}}{\operatorname{lcm}(u,v)}-\frac{2(q-1)tf^{\prime}}{\operatorname{lcm}(v,w)}-\frac{2(q-1)fs^{\prime}}{\operatorname{lcm}(w,u)}+\frac{4(q-1)s^{\prime}t^{\prime}f^{\prime}}{\operatorname{lcm}(u,v,w) }=9912$. So we can get an MDS Euclidean self-dual codes of length $9912$ over $\mathbb{F}_{22801}$.
\end{example}
\begin{example}
 In Corollary \ref{cor3}, let $p=r=151$, $q=r^2=22801$, $u=152$, $v=2$, $w=20$, $s=11$, $t=1$ and $f=2$. Then $n=s\frac{q-1}{u}+t\frac{q-1}{v}+f\frac{q-1}{w}-\frac{2(q-1)st^{\prime}}{\operatorname{lcm}(u,v)}-\frac{2(q-1)tf^{\prime}}{\operatorname{lcm}(v,w)}-\frac{2(q-1)fs^{\prime}}{\operatorname{lcm}(w,u)}+\frac{4(q-1)s^{\prime}t^{\prime}f^{\prime}}{\operatorname{lcm}(u,v,w) }=8192$. So we can get an MDS Euclidean self-dual codes of length $8192$ over $\mathbb{F}_{22801}$. 
\end{example}

\section{Conclusion}
 In this paper, we construct six new classes of MDS Euclidean self-dual codes over $\mathbb{F}_q$ via (extended) GRS codes (see Theorems \ref{th3}, \ref{th4}, \ref{th5}, \ref{th6} and Corollaries \ref{cor1}, \ref{cor3}). When these results are combined with those in Table \ref{tab1}, the proportion of all known MDS Euclidean self-dual codes relative to the total number of possible MDS Euclidean self-dual codes exceeds $85\%$ (see Table \ref{tab2}). The main idea of our construction is to choose suitable evaluation sets such that the corresponding (extended) GRS codes are Euclidean self-dual codes. It is worth mentioning that we provide a method for choosing an evaluation set from multiple intersecting subsets of finite fields. 
Additionally, we present a useful theorem (see Theorem \ref{th1}) that guarantees the chosen evaluation set satisfies the conditions in Lemma \ref{lem2} or \ref{lem3}, leading to new MDS Euclidean self-dual codes. It can be seen that the ratio $N/\frac{q}{2}$ increases significantly when a suitable evaluation set is chosen from three subsets. Therefore, constructing MDS Euclidean self-dual codes via more than three subsets of finite field is one of our future research directions.
\appendix
\section{The proof of Theorem 3}\label{app A}
\begin{proof}
 Let $$A=\bigcup\limits_{k=1}^{s}H_k, B=\bigcup\limits_{i=0}^{l_1-1} N_{2i+1}, C=\bigcup\limits_{j=0}^{l_2-1}N_{2j}, A^{\prime}=A\cup H_0, B^{\prime}=B\cup\{0\},$$ where $H_k$ and $N_i$ are defined in Eq. (\ref{eq6}) and Eq. (\ref{eq7}), respectively. Choose $$S=(A^{\prime}\backslash (B^{\prime}\cup C))\bigcup (B^{\prime}\backslash (A^{\prime}\cup C))\bigcup (C\backslash (A^{\prime}\cup B^{\prime}))\bigcup(A^{\prime}\cap B^{\prime}\cap C),$$   
 then by Lemma \ref{lem6},
  $$|S|=|A^{\prime}|+|B^{\prime}|+|C|-|A^{\prime}\cap C|=(s+1)l+(l_1+l_2)(r+1)-2l_2+1=n.$$
  
Firstly, for any $a\in A^{\prime}\backslash (B^{\prime}\cup C)$, we will calculate $\eta(\delta_{A^{\prime}}(a)\pi_{B^{\prime}}(a)\pi_{C}(a))$. Assume that $a\in H_k$, where $0\leq k\leq s$. Then by Lemma \ref{lem1} and Lemma \ref{lem7},
 \begin{eqnarray*}
     \delta_{A^{\prime}}(a)\pi_{B^{\prime}}(a)\pi_{C}(a)&=& \pi_{H_k}^{\prime}(a)\prod_{m=0, m\neq k}^{s}\pi_{H_m}(a)\left(\prod_{i=0}^{l_1-1}\pi_{N_{2i+1}}(a)\cdot a\right)\prod_{j=0}^{l_2-1}\pi_{N_{2j}}(a)\\
     &=&a \cdot \pi_{H_k}^{\prime}(a)\prod_{m=0, m\neq k}^{s}\pi_{H_m}(a)\prod_{i=0}^{l_1-1}(N(a)-\alpha^{2i+1})\prod_{j=0}^{l_2-1}(N(a)-\alpha^{2j})\in \mathbb{F}_r^{*}.
\end{eqnarray*}
Hence, $\eta\left(\delta_{A^{\prime}}(a)\pi_{B^{\prime}}(a)\pi_{C}(a)\right)=1$.

Secondly, for any $b\in B^{\prime} \backslash (A^{\prime}\cup C)$, we will  calculate $\eta(\pi_{A^{\prime}}(b)\delta_{B^{\prime}}(b)\pi_{C}(b))$. If $b=0$, then
$$\pi_{A^{\prime}}(0)\delta_{B^{\prime}}(0)\pi_{C}(0)=\pi_{H_0}(0)\pi_{A}(0)\delta_{B^{\prime}}(0)\pi_{C}(0).$$
By $\pi_{H_0}(0)\in \mathbb{F}_r^{*}$ and Eq. (13), we obtain $\eta\left( \pi_{A^{\prime}}(0)\delta_{B^{\prime}}(0)\pi_{C}(0)\right)=\eta\left( \pi_{A}(0)\delta_{B^{\prime}}(0)\pi_{C}(0)\right)=1.$
If $b\neq0$, we can assume  $b\in N_{2i+1}$, where $0\leq i\leq l_1-1$, then 
\begin{equation*}
\pi_{A^{\prime}}(b)\delta_{B^{\prime}}(b)\pi_{C}(b)= \pi_{H_0}(b)\pi_{A}(b)\delta_{B^{\prime}}(b)\pi_{C}(b).    
\end{equation*}
In Lemma \ref{lem8}, we have proved that $\pi_{H_0}(b)$ is a square element in $\mathbb{F}_q$. According to Eq. (\ref{eq14}), we have
$\eta\left(\pi_{A^{\prime}}(b)\delta_{B^{\prime}}(b)\pi_{C}(b)\right)=\eta\left(\pi_{A}(b)\delta_{B^{\prime}}(b)\pi_{C}(b) \right)=1.$

Finally, for any $c\in C \backslash (A^{\prime}\cup B^{\prime})$, we will calculate $\eta(\pi_{A^{\prime}}(c)\pi_{B^{\prime}}(c)\delta_{C}(c))$. Assume $c\in N_{2j}$, where $0\leq j\leq l_2-1$, then
$$\pi_{A^{\prime}}(c)\pi_{B^{\prime}}(c)\delta_{C}(c)=\pi_{H_0}(c)\pi_{A}(c)\pi_{B^{\prime}}(c)\delta_{C}(c).$$
Similarly as above, we can show that 
$\eta\left(\pi_{A^{\prime}}(c)\pi_{B^{\prime}}(c)\delta_{C}(c) \right)=\eta\left(\pi_{A}(c)\pi_{B^{\prime}}(c)\delta_{C}(c) \right)=1$.

In summary, for any $a\in A^{\prime} \backslash (B^{\prime}\cup C)$, $b\in B^{\prime} \backslash (A^{\prime}\cup C)$ and $c\in C \backslash (A^{\prime}\cup B^{\prime})$, we can deduce that
$$\eta\left(\delta_{A^{\prime}}(a)\pi_{B^{\prime}}(a)\pi_{C}(a)\right)=\eta\left(\pi_{A^{\prime}}(b)\delta_{B^{\prime}}(b)\pi_{C}(b)\right)=\eta\left(\pi_{A^{\prime}}(c)\pi_{B^{\prime}}(c)\delta_C(c)\right)=1.$$   
By Theorem \ref{th1}, $\eta\left(-\delta_S(e)\right)=\eta\left(\delta_S(e)\right)=1$ for any $e\in S$. Again by Lemma \ref{lem3}, there exists a $q$-ary MDS Euclidean self-dual code of length $n+1$.  
\end{proof}
\section{The proof of Corollary 1}\label{app B}
\begin{proof}
Choose $$S=\left(A^{\prime}\backslash \left(B\cup C\right)\right)\bigcup \left(B\backslash \left(A^{\prime}\cup C\right)\right)\bigcup \left(C\backslash \left(A^{\prime}\cup B\right)\right)\bigcup\left(A^{\prime}\cap B\cap C\right),$$ 
where $A$, $B$ and $C$ are defined as in Eq. (\ref{eq15}) and $A^{\prime}=A\cup\{0\}$.
Hence, 
$|S|=|A|+|B|+|C|-2|A\cap B|+1=n+1$.  
By Lemma \ref{lem1},
\begin{eqnarray*}
\delta_{A^{\prime}}(0)\pi_B(0)\pi_C(0)&=&\prod_{i=0}^{s-1}\pi_{A_i}(0)\prod_{j=0}^{t-1}\pi_{B_j}(0)\prod_{k=0}^{s^{\prime}-1}\pi_{C_k}(0)\\
&=&(-1)^{s+t+s^{\prime}}\theta^{\frac{(r+1)v}{2u}{s^{\prime}}^2(r-1)}\prod_{i=0}^{s-1}\beta^{\frac{q-1}{u}i}\prod_{j=0}^{t-1}\alpha^{\frac{q-1}{v}j}.
\end{eqnarray*}
Since $r-1$ is even, we have
$\eta\left(\delta_{A^{\prime}}(0)\pi_B(0)\pi_C(0)\right)=1$.
For any $\beta^i\alpha^j\in A^{\prime} \backslash (B \cup C)$, where $0\leq i \leq s-1$, $0\leq j\leq \frac{q-1}{u}-1$,
$$\delta_{A^{\prime}}(\beta^i\alpha^j)\pi_B(\beta^i\alpha^j)\pi_C(\beta^i\alpha^j)=\beta^i\alpha^j\cdot\delta_{A}(\beta^i\alpha^j)\pi_B(\beta^i\alpha^j)\pi_C(\beta^i\alpha^j).$$
By the condition (1) and Eq. (\ref{eq17}), 
\begin{equation*}
   \eta\left(\delta_{A^{\prime}}(\beta^i\alpha^j)\pi_B(\beta^i\alpha^j)\pi_C(\beta^i\alpha^j)\right)=\eta\left(\theta^{\frac{(r+1)v}{2u}s^{\prime}}\right). 
\end{equation*}
For any $\alpha^i\beta^j\in B \backslash (A^{\prime} \cup C)$, where $0\leq i \leq t-1$, $0\leq j\leq \frac{q-1}{v}-1$, by Lemma \ref{lem1},
$$\pi_{A^{\prime}}(\alpha^i\beta^j)\delta_B(\alpha^i\beta^j)\pi_C(\alpha^i\beta^j)=\alpha^i\beta^j\cdot\pi_A(\alpha^i\beta^j)\delta_B(\alpha^i\beta^j)\pi_C(\alpha^i\beta^j),$$
Similarly, according to Eq. (\ref{eq18}), we can deduce that
\begin{equation*}
   \eta\left(\pi_{A^{\prime}}(\alpha^i\beta^j)\delta_B(\alpha^i\beta^j)\pi_C(\alpha^i\beta^j)\right)=\eta\left(\theta^{\frac{(r+1)v}{2u}s^{\prime}}\right). 
\end{equation*}
For any $\gamma^{2i+1}\alpha^j\in C \backslash (A \cup B)$, where $0\leq i \leq s^{\prime}-1$, $0\leq j\leq \frac{q-1}{u}-1$, by Lemma \ref{lem1},
\begin{equation*}
\pi_{A^{\prime}}(\gamma^{2i+1}\alpha^j)\pi_B(\gamma^{2i+1}\alpha^j)\delta_C(\gamma^{2i+1}\alpha^j)=\gamma^{2i+1}\alpha^j\cdot\pi_{A}(\gamma^{2i+1}\alpha^j)\pi_B(\gamma^{2i+1}\alpha^j)\delta_C(\gamma^{2i+1}\alpha^j).    
\end{equation*}
By Eq. (\ref{eq19}) and $\eta(\gamma)=-1$, we can show that
 \begin{equation*}
 \eta\left(\pi_{A^{\prime}}(\gamma^{2i+1}\alpha^j)\pi_B(\gamma^{2i+1}\alpha^j)\delta_C(\gamma^{2i+1}\alpha^j)\right)=\eta\left(\theta^{\frac{(r+1)v}{2u}s}\right).
 \end{equation*}
 
When $\frac{(r+1)v}{2u}$ is even or both $s$ and $s^{\prime}$ are even, we have $\eta\left(\theta^{\frac{(r+1)v}{2u}s^{\prime}}\right)=\eta\left(\theta^{\frac{(r+1)v}{2u}s}\right)=1$. According to Theorem \ref{th1}, for any $e\in S$, we have $\eta(-\delta_S(e))=\eta(\delta_S(e))=1$. Again by Lemma \ref{lem3}, there exists a $q$-ary MDS Euclidean self-dual code of length $n+2$. 
\end{proof}
\section{The proof of Corollary 3}\label{app C}
\begin{proof}
Choose 
\begin{equation*}
S=\left(A^{\prime}\backslash \left(B\cup C\right)\right)\bigcup \left(B\backslash \left(A^{\prime}\cup C\right)\right)\bigcup \left(C\backslash \left(A^{\prime}\cup B\right)\right)\bigcup\left(A^{\prime}\cap B\cap C\right),    
\end{equation*}  
where $A$, $B$ and $C$ are defined as in Theorem \ref{th6}, and  $A^{\prime}=A\cup \{0\}$. Then $|S|=n+1$. Firstly, by Lemma \ref{lem1},
\begin{eqnarray*}
\delta_{A^{\prime}}(0)\pi_B(0)\pi_C(0)&=&\prod\limits_{i=0}^{s-1}\pi_{A_i}(0)\prod\limits_{j=0}^{t-1}\pi_{B_j}(0)\prod_{l=0}^{f-1}\pi_{C_l}(0)\\
&=&(-1)^{s+t+l}\prod\limits_{i=0}^{s-1}\beta^{\frac{q-1}{u}i}\prod\limits_{j=0}^{t-1}\gamma^{\frac{q-1}{v}j}\prod\limits_{l=0}^{f-1}\alpha^{\frac{q-1}{w}l}.
\end{eqnarray*}
Since $u$, $v$ and $w$ are even,  
$$\eta(\delta_{A^{\prime}}(0)\pi_B(0)\pi_C(0))=1.$$
Secondly, for any $\beta^{i_1}\alpha^{j_1}\in A^{\prime}\backslash (B\cup C)$, where $0\leq i_1 \leq s-1$, $0\leq j_1\leq \frac{q-1}{u}-1$, 
$$\delta_{A^{\prime}}(\beta^{i_1}\alpha^{j_1})\pi_B(\beta^{i_1}\alpha^{j_1})\pi_c(\beta^{i_1}\alpha^{j_1})=\beta^{i_1}\alpha^{j_1}\delta_{A}(\beta^{i_1}\alpha^{j_1})\pi_B(\beta^{i_1}\alpha^{j_1})\pi_c(\beta^{i_1}\alpha^{j_1}),$$
from the condition (1) and Eq. (\ref{eq29}), we can deduce that
$$\eta\left(\delta_{A^{\prime}}(\beta^{i_1}\alpha^{j_1})\pi_B(\beta^{i_1}\alpha^{j_1})\pi_c(\beta^{i_1}\alpha^{j_1})\right)=\eta(\theta^{e_1}).$$
For any $\gamma^{i_2}\beta^{j_2}\in B\backslash(A^{\prime}\cup C)$ and $\alpha^{i_3}\gamma^{j_3}\in C\backslash (A^{\prime}\cup B)$, where $0\leq i_2\leq t-1$, $0\leq j_2\leq \frac{q-1}{v}-1$, $0\leq i_3\leq f-1$ and $0\leq j_3\leq \frac{q-1}{w}-1$, 
$$\pi_{A^{\prime}}(\gamma^{i_2}\beta^{j_2})\delta_B(\gamma^{i_2}\beta^{j_2})\pi_C(\gamma^{i_2}\beta^{j_2})=\gamma^{i_2}\beta^{j_2}\pi_{A}(\gamma^{i_2}\beta^{j_2})\delta_B(\gamma^{i_2}\beta^{j_2})\pi_C(\gamma^{i_2}\beta^{j_2}),$$
$$\pi_{A^{\prime}}(\alpha^{i_3}\gamma^{j_3})\pi_B(\alpha^{i_3}\gamma^{j_3})\delta_C(\alpha^{i_3}\gamma^{j_3})=\alpha^{i_3}\gamma^{j_3}\pi_{A}(\alpha^{i_3}\gamma^{j_3})\pi_B(\alpha^{i_3}\gamma^{j_3})\delta_C(\alpha^{i_3}\gamma^{j_3}).$$
Thus, from the condition (1) and Eqs. (\ref{eq30}) and (\ref{eq31}), 
$$\eta(\pi_{A^{\prime}}(\gamma^{i_2}\beta^{j_2})\delta_B(\gamma^{i_2}\beta^{j_2})\pi_C(\gamma^{i_2}\beta^{j_2}))=\eta(\theta^{e_2}),$$
$$\eta(\pi_{A^{\prime}}(\alpha^{i_3}\gamma^{j_3})\pi_B(\alpha^{i_3}\gamma^{j_3})\delta_C(\alpha^{i_3}\gamma^{j_3}))=\eta(\theta^{e_3}).$$
Finally, for any $e\in A^{\prime}\cap B\cap C=A\cap B\cap C$, by Eq. \ref{eq32},
$\eta(\delta_A(e)\delta_{B}(e) \delta_{C}(e))=\eta(\theta^{e_4})$ can be obtained. 
In summary, by Theorem \ref{th1} and the conditions (1)-(4),  for any $e\in S$,
$$\eta(\delta_S(e))=\eta(-\delta_S(e))=1.$$
The theorem then follows from Lemma \ref{lem3}.
\end{proof}
\section{Some known results on MDS Euclidean self-dual codes of length $n$}\label{appD}

{\footnotesize
\begin{longtable}{c|c|c|c}
\caption{Some known results on MDS Euclidean self-dual codes of length $n$} \label{tab D.3}\\
\toprule
Classes & $q$ & $n$ &References\\
\midrule
\endfirsthead
\multicolumn{4}{c}{{\tablename\ \thetable{} Some known results on MDS Euclidean self-dual codes of length $n$ (continue) }} \\
\toprule
Classes & $q$ & $n$ &References\\
\midrule
\endhead
\bottomrule
\multicolumn{4}{r}{} \\
\endfoot
\bottomrule
\endlastfoot

 1& $q$ even  & $n \leq q$ & \cite{GG08} \\
 \hline
 2& $q$ odd  & $n = q+1$ & \cite{GG08,JX17} \\
  \hline
 3& $q=r^{2}$ & $n \leq r$ & \cite{JX17} \\
  \hline
  4&$q=r^{2}$, $r \equiv 3 \pmod 4$ & $n=2tr, t \leq \frac{r-1}{2}$ & \cite{JX17} \\
  \hline
  5&$q \equiv 1\pmod 4$ & $4^{n}n^{2} \leq q$ & \cite{JX17} \\
 \hline 
  6&$q \equiv 3\pmod 4$ & $n \equiv \textnormal{0 (mod 4)}$ and $(n-1) \mid (q-1)$ & \cite{TW17}\\
  \hline
 7& $q \equiv 1 \pmod  4$ & $(n-1) \mid (q-1)$ & \cite{TW17} \\
  \hline
 8& $q \equiv 1 \pmod  4$ & $n=2p^{\ell}, \ell \leq m$ & \cite{FF19} \\
  \hline
  9&$q \equiv 1 \pmod  4$ & $n=p^{\ell}+1, \ell \leq m$ & \cite{FF19} \\
  \hline
 10& $q=r^{s}$, $r$ odd, $s$ even & \tabincell{c}{$n=2tr^{\ell}$, $0 \leq \ell <s$,  and $1 \leq t \leq \frac{r-1}{2}$} & \cite{FF19} \\
  \hline
 11& $q=r^{s}$, $r$ odd, $s$ is even & \tabincell{c}{$n=(2t+1)r^{\ell}+1$, \\$0 \leq \ell <s$ and $0 \leq t \leq \frac{r-1}{2}$} & \cite{FF19} \\
  \hline
  12&$q$ odd & $(n-2) \mid (q-1)$, $\eta(2-n)=1$ & \cite{FF19,Y18} \\
  \hline
  13&$q \equiv 1 \pmod  4$ & $n \mid (q-1)$ & \cite{Y18} \\
  \hline
 14& $q=r^{s}$, $r$ odd, $s \geq 2$ & $n=tr$, $t$ even  and  $2t \mid (r-1)$ &\cite{Y18}\\
  \hline
  15&$q=r^{s}$, $r$ odd, $s \geq 2$ & \tabincell{c}{$n=tr$, $t$ even, \\$(t-1) \mid (r-1)$ and $\eta(1-t)=1$} &\cite{Y18}\\
  \hline
  16&$q=r^{s}$, $r$ odd, $s \geq 2$ & \tabincell{c}{$n=tr+1$, $t$ odd, $t \mid (r-1)$ and $\eta(t)=1$} &\cite{Y18}\\
  \hline
  17&$q=r^{s}$, $r$ odd, $s \geq 2$ & \tabincell{c}{$n=tr+1$, $t$ odd, $(t-1) \mid (r-1)$, \\ $\eta(t-1)=\eta(-1)=1$} &\cite{Y18}\\
  \hline
  18&$q=r^{2}$, $r$ odd & \tabincell{c}{$n=tm$, $\frac{q-1}{m}$ even,  and $1 \leq t \leq \frac{r+1}{\gcd(r+1, m)}$} & \cite{FLLL20}\\
  \hline
  19&$q=r^{2}$, $r$ odd & \tabincell{c}{$n=tm+1$, $tm$ odd, $m \mid (q-1)$,  \\ $2 \leq t \leq \frac{r+1}{2\gcd(r+1, m)}$} & \cite{FLLL20}\\
  \hline
 
  20&$q=r^{2}$, $r$ odd & \tabincell{c}{$n=tm+2$, $m \mid (q-1)$, $tm$ even, \\and $1 \leq t \leq \frac{r+1}{\gcd(r+1, m)}$ \\(except $t, m$ even and $r \equiv 1 \pmod 4$),   } & \cite{FLLL20}\\
  \hline
  21&$q=r^{2}$, $r$ odd & \tabincell{c}{$n=tm$, $\frac{q-1}{m}$ even, $1 \leq t \leq \frac{s(r-1)}{\gcd(s(r-1), m)}$,\\ $s$ even, $s \mid m$, and $\frac{r+1}{s}$ even } & \cite{FLLL20}\\
  \hline
 22& $q=r^{2}$, $r$ odd & \tabincell{c}{$n=tm+2$, $\frac{q-1}{m}$ even, $s$ even and $s \mid m$, \\$1 \leq t \leq \frac{s(r-1)}{\gcd(s(r-1), m)}$,   $\frac{r+1}{s}$ even } & \cite{FLLL20}\\
  \hline
  23&$q=r^{2}$, $r$ odd & \tabincell{c}{$n=tm$, $\frac{q-1}{m}$ even,  \\and $2 \leq t \leq \frac{r-1}{\gcd(r-1, m)}$} & \cite{LLL19}\\
  \hline
  24&$q=r^{2}$, $r$ odd & \tabincell{c}{$n=tm+1$, $tm$ odd, $m \mid (q-1)$,  \\and $2 \leq t \leq \frac{r-1}{\gcd(r-1, m)}$} & \cite{LLL19}\\
  \hline
  25&$q=r^{2}$, $r$ odd & \tabincell{c}{$n=tm+2$, $tm$ even, $m \mid (q-1)$, \\and $2 \leq t \leq \frac{r-1}{\gcd(r-1, m)}$} & \cite{LLL19}\\
  \hline
  26&$q=r^{2}$, $r$ odd & \tabincell{c}{$n=tm$, $tm$ even, $m \mid (q-1)$, $1 \leq t \leq \frac{r-1}{n_2}$,\\ $\frac{r+1}{\gcd(r+1,m)}$ even, $n_2=\frac{m}{\gcd(r+1,m)}$ } & \cite{FZXF22}\\
  \hline
  27& $q=r^{2}$, $r$ odd & \tabincell{c}{$n=tm+1$, $tm$ odd, $m \mid (q-1)$, \\$n_2=\frac{m}{\gcd(r+1,m)}$  and $1 \leq t \leq \frac{r-1}{n_2}$} & \cite{FZXF22}\\
  \hline
  28&$q=r^{2}$, $r$ odd & \tabincell{c}{$n=tm+2$, $tm$ even, $m \mid (q-1)$,\\ $n_2=\frac{m}{\gcd(r+1,m)}$  and $1 \leq t \leq \frac{r-1}{n_2}$} & \cite{FZXF22}\\
  \hline
  29&$q=r^{2}$, $r$ odd & \tabincell{c}{$n=tm$, $m \mid (q-1)$ , and $1 \leq t \leq \frac{r+1}{n_2}$,  \\$n_1=\gcd(r-1,m)$,  $n_2=\frac{m}{n_1}$, \\ $\frac{q-1}{m}$ and $\frac{r-1}{n_1}+tn_2$ even } & \cite{FZXF22}\\
  \hline
  30&$q=r^{2}$, $r$ odd & \tabincell{c}{$n=tm+2$, $m \mid (q-1)$ , $1 \leq t \leq \frac{r+1}{n_2}-1$,  \\$n_1=\gcd(r-1,m)$,  $n_2=\frac{m}{n_1}$,  \\$\frac{(r+1)(t-1)}{2}$ even, or $n_2$ odd and $t$ even } & \cite{FZXF22}\\
  \hline
  31& $q=r^{2}$, $r \equiv 1  \pmod4$ & \tabincell{c}{$n=s(r-1)+t(r+1)$, $s$ even, \\$1 \leq s \leq \frac{r+1}{2}$ $1 \leq t \leq \frac{r-1}{2}$} & \cite{FLL21} \\
\hline  
  32&$q=r^{2}$, $r \equiv 3 \pmod 4$ & \tabincell{c}{$n=s(r-1)+t(r+1)$, $s$ odd,\\ $1 \leq s \leq \frac{r+1}{2}$ $1 \leq t \leq \frac{r-1}{2}$} & \cite{FLL21} \\
\hline  
  33&$q=r^{2}$, $r \equiv 1 \pmod 4$ & \tabincell{c} {$n=s\frac{q-1}{a}+t\frac{q-1}{b}$, $b$ and $s$ are even, \\$2a\mid b(r+1)$, $2b\mid a(r-1)$, $a \equiv 2 \pmod 4$,  \\ $1\leq s\leq \frac{a}{\gcd(a,b) }$, $1\leq t\leq \frac{b}{\gcd(a,b) }$ } & \cite{HFF21}\\
  \hline
  34& $q=r^{2}$, $r \equiv 3 \pmod 4$ & \tabincell{c} {$n=s\frac{q-1}{a}+t\frac{q-1}{b}$, $a$ even, $\frac{(r+1)b}{2a}s^2$ odd, \\$2a\mid b(r+1)$, $2b\mid a(r-1)$, $b \equiv 2 \pmod 4$,\\
  $1\leq s\leq \frac{a}{\gcd(a,b) }$, $1\leq t\leq \frac{b}{\gcd(a,b) }$}& \cite{HFF21}\\
  \hline
  35&$q=r^{2}$, $r \equiv 1 \pmod 4$ & \tabincell{c} {$n=s\frac{q-1}{a}+t\frac{q-1}{b}+2$, $b$ even, $s$ odd, \\$2a\mid b(r+1)$, $2b\mid a(r-1)$, $a \equiv 2 \pmod 4$, \\$1\leq s\leq \frac{a}{\gcd(a,b) }$, $1\leq t\leq \frac{b}{\gcd(a,b) }$ } & \cite{HFF21}\\
  \hline
36&$q=r^{2}$, $r \equiv 3 \pmod 4$ & \tabincell{c} {$n=s\frac{q-1}{a}+t\frac{q-1}{b}+2$, $a$ even,  $2a\mid b(r+1)$,\\ $2b\mid a(r-1)$, $b \equiv 2 \pmod 4$, 
$\frac{(r+1)b}{2a}s^2$ even, \\$1\leq s\leq \frac{a}{\gcd(a,b) }$, $1\leq t\leq \frac{b}{\gcd(a,b) }$}& \cite{HFF21}\\
\hline
37&$q=r^{2}$, $r \equiv 3 \pmod 4$ & \tabincell{c}{$n=s\frac{q-1}{u}+t\frac{q-1}{v}-\frac{2(q-1)\gcd(u,v)}{uv}st$, \\
$u\mid v(r+1)$, $v\mid u(r-1)$, \\$n$, $u$, $\frac{(r+1)v}{u}s+v$ even,\\
$1\leq s\leq \frac{u}{\gcd(u,v)}$, $1\leq t\leq \frac{v}{\gcd(u,v)}$} & \cite{FXF21}\\
\hline
38&$q=r^{2}$, $r \equiv 3 \pmod 4$ & \tabincell{c}{$n=s\frac{q-1}{u}+t\frac{q-1}{v}-\frac{2(q-1)\gcd(u,v)}{uv}st+1$,\\ 
$n$ even,  $u\mid v(r+1)$, 
$v\mid u(r-1)$,\\ $1\leq s\leq \frac{u}{\gcd(u,v)}$, $1\leq t\leq \frac{v}{\gcd(u,v)}$, \\$u$, $\frac{(r+1)v}{u}s+v$ and $\frac{v(r+1)}{u}\frac{s(s-1)}{2}$ even} & \cite{FXF21}\\
\hline
39 & $q=r^{2}$, $r \equiv 3 \pmod 4$ & \tabincell{c}{$n=s\frac{q-1}{u}+t\frac{q-1}{v}-\frac{2(q-1)\gcd(u,v)}{uv}st+2$, \\$n$ even,
$u\mid v(r+1)$, $v\mid u(r-1)$, \\ $\frac{(r+1)v}{u}s$ and $\frac{v(r+1)}{u}\frac{s(s-1)}{2}$ even,\\
$1\leq s\leq \frac{u}{\gcd(u,v)}$, $1\leq t\leq \frac{v}{\gcd(u,v)}$} & \cite{FXF21}\\
\hline
40&$q=r^2$ & \tabincell{c}{$n=s\frac{q-1}{e_1}+t\frac{q-1}{e_2}$, $e_1 \equiv 2^l \pmod  {2^{l+1}}$, \\$2^l\mid e_2$, where $l\geq 2$, $4\mid (s-1)(r+1)$, \\ $2 e_2\mid e_1(r-1), e_1\mid e_2(r+1)$,\\  $1\leq s\leq \frac{e_1}{\gcd(e_1,e_2)}$, $1\leq t\leq \frac{e_2}{\gcd(e_1,e_2)}$    } &\cite{WLZ23}\\
\hline
41& $q=r^2$ & \tabincell{c}{$n=s\frac{q-1}{e_1}+t\frac{q-1}{e_2}+1$, $e_1 \equiv 2^l \pmod {2^{l+1}}$, \\$2^l\mid e_2$, where $l\geq 2$, $4\mid (s-1)(r+1)$,\\ $2e_2\mid e_1(r-1), e_1\mid e_2(r+1)$,  \\$1\leq s\leq \frac{e_1}{\gcd(e_1,e_2)}$, $1\leq t\leq \frac{e_2}{\gcd(e_1,e_2)}$    } &\cite{WLZ23}\\
\hline
42& $q=r^2$ & \tabincell{c}{$n=s\frac{q-1}{e_1}+t\frac{q-1}{e_2}+2$, $e_1 \equiv 2^l \pmod  {2^{l+1}}$, \\$2^l\mid e_2$, where $l\geq 2$, $4\mid (s-1)(r+1)$,\\ $2e_2\mid e_1(r-1), e_1\mid e_2(r+1)$, \\ $1\leq s\leq \frac{e_1}{\gcd(e_1,e_2)}$, $1\leq t\leq \frac{e_2}{\gcd(e_1,e_2)}$    } &\cite{WLZ23}\\
\hline
43&$q=r^2$ & \tabincell{c}{$n=s\frac{q-1}{e_1}+t\frac{q-1}{e_2}$, $e_1 \equiv 2^l \pmod  {2^{l+1}}$, \\$2^l\mid e_2$, where $l\geq 2$, $\frac{r+1}{2}(\frac{te_1}{e_2}+1)$ even, \\$2e_2\mid e_1(r+1), e_1\mid e_2(r-1)$, \\$1\leq s\leq \frac{e_1}{\gcd(e_1,e_2)}$, $1\leq t\leq \frac{e_2}{\gcd(e_1,e_2)}$    } &\cite{WLZ23}\\
\hline
44&$q=r^2$ & \tabincell{c}{$n=s\frac{q-1}{e_1}+t\frac{q-1}{e_2}+1$, $e_1 \equiv 2^l \pmod  {2^{l+1}}$, \\$2^l\mid e_2$, where $l\geq 2$, $\frac{r+1}{2}(\frac{te_1}{e_2}+t)$ even,\\ $4\mid (s-1)(r+1)$, $2e_2\mid e_1(r+1)$, $e_1\mid e_2(r-1)$,\\  $1\leq s\leq \frac{e_1}{\gcd(e_1,e_2)}$, $1\leq t\leq \frac{e_2}{\gcd(e_1,e_2)}$    } &\cite{WLZ23}\\
\hline
45&$q=r^2$ & \tabincell{c}{$n=s\frac{q-1}{e_1}+t\frac{q-1}{e_2}+2$, $e_1 \equiv 2^l \pmod { 2^{l+1}}$, \\$2^l\mid e_2$, where $l\geq 2$, $\frac{r+1}{2}(\frac{te_1}{e_2}+t)$ even,\\ $4\mid (s-1)(r+1)$, $2e_2\mid e_1(r+1)$,  $e_1\mid e_2(r-1)$,\\ $1\leq s\leq \frac{e_1}{\gcd(e_1,e_2)}$, $1\leq t\leq \frac{e_2}{\gcd(e_1,e_2)}$    } &\cite{WLZ23}\\
\hline
\end{longtable}
}
\section*{Ackonwledgements}
This research was supported in part by the National Key Research and Development Program of China under Grant Nos. 2022YFA1004900, 2021YFA1001000 and 2022YFA1005000, the National Natural Science Foundation of China under Grant Nos. 62201322, 12141108, 62371259, 12471493 and 12441105,  the Natural Science Foundation of Shandong (ZR2022QA031), the Taishan Scholar Program of Shandong Province, the Natural Science Foundation of Tianjin (20JCZDJC00610), the Fundamental Research Funds for the Central Universities of China (Nankai University) and the Nankai Zhide Foundation.

\bibliographystyle{model1a-num-names}

\end{document}